\newif\ifDRAFT 
\DRAFTfalse
\DRAFTtrue 
\newif\ifSHOWCOMMENTS
\SHOWCOMMENTStrue
\SHOWCOMMENTSfalse 

\documentclass[11pt]{article}
\usepackage[fleqn]{amsmath}
\usepackage[margin=1in]{geometry}
\usepackage{graphicx} 

\usepackage{amsfonts, amsmath, amssymb, amsthm}

\usepackage{newtxtext}
\usepackage[varvw]{newtxmath}
\usepackage{enumitem}
\setlist[itemize]{itemsep=0pt, topsep=2pt}
\setlist[enumerate]{itemsep=0pt, topsep=2pt}
\setlist[description]{itemsep=0pt, topsep=2pt}

\usepackage{mathrsfs}  
\usepackage{algorithm} 
\usepackage[noend]{algpseudocode}
\usepackage{hyperref}
\usepackage{color}
\usepackage{array}
\usepackage{cleveref}
\usepackage{multirow}
\usepackage{tabularx}
\usepackage[table,xcdraw]{xcolor}
\definecolor{DarkRed}{rgb}{0.5,0.1,0.1}
\definecolor{DarkBlue}{rgb}{0.1,0.1,0.5}
\definecolor{ForestGreen}{rgb}{0.1333,0.5451,0.1333}
\hypersetup{colorlinks=true, linkcolor=DarkRed,citecolor=ForestGreen}
\usepackage[inkscapeformat=pdf]{svg}
\usepackage{multirow}
\usepackage{mathtools}
\usepackage{tabulary} 
\usepackage{booktabs}
\usepackage{arydshln}
\usepackage{pdfpages}
\usepackage{booktabs}
\usepackage[table,xcdraw]{xcolor}
\usepackage{arydshln}
\usepackage[x11names, table]{xcolor}
\usepackage{enumitem}
\usepackage{comment}
\usepackage{thm-restate}
\usepackage{tikz}
\usetikzlibrary{calc}
\usetikzlibrary{shapes.geometric,arrows.meta}

\theoremstyle{plain}
\newtheorem{theorem}{Theorem}[section]
\newtheorem{lemma}[theorem]{Lemma}

\newtheorem{observation}[theorem]{Observation}
\newtheorem{claim}[theorem]{Claim}
\newtheorem{fact}[theorem]{Fact}

\theoremstyle{definition}
\newtheorem{definition}[theorem]{Definition}

\providecommand{\set}[1]{{\{#1\}}}

\DeclareSymbolFont{bbold}{U}{bbold}{m}{n}
\DeclareSymbolFontAlphabet{\mathbbold}{bbold}

\newcommand{\cF}{\mathcal{F}}
\newcommand{\cT}{\mathcal{T}}
\newcommand{\cL}{\mathcal{L}}
\newcommand{\cN}{\mathcal{N}}
\newcommand{\cH}{\mathcal{H}}

\newcommand{\Att}{\mathcal{A}}

\newcommand{\Near}{\mathsf{Near}}
\newcommand{\Far}{\mathsf{Far}}

\newcommand{\LCA}{\mathrm{LCA}}

\newcommand{\Ltree}{\cL\cT}
\newcommand{\Children}{\mathcal{C}}

\newcommand{\polylog}{\operatorname{polylog}}

\algrenewcommand\algorithmicrequire{\textbf{Input:}}
\algrenewcommand\algorithmicensure{\textbf{Output:}}

\title{An Optimal Structure for All-Pairs Nearest Mincuts \\and Sensitivity Oracles for Edge Insertions}

\ifDRAFT
    \author{Koustav Bhanja \thanks{Weizmann Institute of Science, Israel. Email: \texttt{koustav.bhanja@weizmann.ac.il}. Supported by the Koshland Fellowship, and Merav Parter’s European Research Council (ERC) grant under the European Union’s Horizon 2020 research and innovation programme, grant agreement
No. 949083.}
    \and Yotam Kenneth-Mordoch  \thanks{Weizmann Institute of Science, Israel. Email: \texttt{yotam.kenneth@weizmann.ac.il}. }
    \and Asaf Petruschka \thanks{Weizmann Institute of Science, Israel. Email: \texttt{asaf.petruschka@weizmann.ac.il}. Supported by an Azrieli Foundation fellowship,
and by Merav Parter’s European Research Council (ERC) grant under the European Union’s Horizon 2020 research
and innovation programme, grant agreement No. 949083.}
    }
\else
\fi

\date{}

\begin{document}

\maketitle

\pagenumbering{roman}
\begin{abstract}
    Given an undirected weighted graph $G=(V,E)$ on $n$ vertices, the classical \textit{Gomory-Hu tree} of $G$ is a structure that encodes an arbitrary minimum $s,t$-cut for every $s,t\in V$ using just $O(n)$ space.
    In this work, we address whether the same compactness is achievable for the natural and structured family of all-pairs \textit{nearest minimum cuts}. The nearest minimum $(s,t)$-cut is the unique inclusion-wise minimal one among all minimum $s,t$-cuts containing $s$.  
    This family has proven useful in a wide range of applications, including fault-tolerant reachability, minimum cut sensitivity oracles, cactus representations, and fast Gomory-Hu tree constructions.
    Despite its fundamental role, no subquadratic-space representation is known for all-pairs nearest minimum cuts to date. 
    Representations using $O(n)$ space are known only in the restricted single-source setting, where given a source $s$ one can report the nearest minimum $(t,s)$-cut for any $t\in V\setminus \{s\}$. 

    We close this gap by presenting the first optimal-space representation of all-pairs nearest minimum cuts, providing a natural analogue of the Gomory-Hu tree.
    Our main result is an $O(n)$ space structure that encodes the nearest minimum cut between every pair of vertices. 
    Furthermore, given any pair $s,t\in V$, it can report the nearest minimum $s,t$-cut in $O(n)$ time. 
    Both bounds match those of the Gomory-Hu tree and are worst-case optimal.
    As an application, we leverage this data structure to design an all-pairs minimum cut \textit{sensitivity oracle} for edge insertion: 
    a data structure that occupies $O(n)$ space and, given a query edge $e$, can determine for all pairs $s,t\in V$ whether the minimum $s,t$-cut value increases upon insertion of $e$ in $O(n^2)$ total time, i.e.,\ amortized $O(1)$ time per vertex pair. 
    Existing insertion sensitivity oracles were either limited to the single-source setting or used $O(n^2)$ space for all-pairs [Baswana, Gupta, and Knollmann, Algorithmica'22; Baswana and Pandey, SODA'22]. 
    Technically, our results heavily utilize the sophisticated \textit{connectivity carcass} data structure for Steiner mincuts [Dinitz and Vainshtein, STOC'94], highlighting its usefulness in compactly representing minimum cuts.
\end{abstract}
\setcounter{page}{0}
\thispagestyle{empty}

\pagenumbering{arabic}
\newpage
{\small \setlength{\parskip}{0em}
\tableofcontents{}
}
\setcounter{page}{0}
\thispagestyle{empty}

\newpage
\pagenumbering{arabic}

\section{Introduction} 
The study of minimum cuts, in short \textit{mincuts}, is a fundamental theme in the graph algorithms literature. 
Let $G=(V,E)$ be an undirected graph on $n=|V|$ vertices and $m=|E|$ edges, with every edge having a positive weight.
A \textit{$(u,v)$-mincut}, for any $u,v\in V$, is a set of vertices $C\subset V$ that minimizes the total weight of edges between $C$ and $\overline{C}\coloneqq V\setminus C$ under the constraint $\set{u,v}\cap C =\set{   u}$.
A particular family of $(u,v)$-mincuts that has emerged as useful in a wide range of contexts is that of the \textit{nearest} (or conversely, \textit{farthest}) mincuts\footnote{Nearest mincuts are also referred to as the \textit{minimal}, \textit{latest}, \textit{tight}, or \textit{loose} cuts in the literature.}, which is defined as follows.%

\begin{definition} \label{def: nearest and farthest mincuts}
     The nearest  $(u,v)$-mincut, denoted by $\Near(u,v)$, is the unique $(u,v)$-mincut $C\subseteq V$ such that any $(u,v)$-mincut $C'\subseteq V$ satisfies $C\subseteq C'$. 
     The farthest $(u,v)$-mincut is defined as $\Far(u,v)\coloneqq \overline{\Near(v,u)}$. 
\end{definition}
Unlike an arbitrary $(u,v)$-mincut, $\Near(u,v)$ is uniquely determined and arises canonically in every maxflow computation.
As shown in the classical work of Ford and Fulkerson~\cite{ford_fulkerson_1956} on computing maxflow, $\Near(u, v)$ is precisely the set of vertices reachable from $u$ in the residual graph corresponding to any $u,v$-maxflow computation.
Since their inception, nearest mincuts have proven remarkably useful in a wide range of applications, including sparse fault-tolerant reachability subgraphs~\cite{BaswanaCR18}, sensitivity oracles for minimum cuts~\cite{DBLP:journals/mp/PicardQ80, BaswanaGK22, BhanjaPP2026, DBLP:conf/stoc/DinitzV94},  pseudo-deterministic global minimum cuts computations \cite{DBLP:conf/innovations/Agarwala026, DBLP:journals/corr/Pseudo-Det-MinCut-Yotam}, fast constructions of cactus for global mincuts~\cite{DBLP:conf/soda/KargerP09, HHS24} and Gomory-Hu trees (or all-pairs mincuts)~\cite{DBLP:conf/stoc/AbboudKT21,DBLP:conf/focs/AbboudKT21, DBLP:conf/focs/AbboudK0PST22, DBLP:conf/focs/AbboudKLPGSYY25}.
Given the importance of the nearest minimum cuts, it is natural to ask whether the family of all nearest minimum cuts, i.e. $\cN\coloneqq\set{\Near(u,v) \mid u,v\in V}$, admits a compact representation.
We give an affirmative answer to this question by showing that $\cN$ can be represented by a data structure of size $O(n)$, which achieves worst-case optimal space and query time. 

Designing compact representations for cut families is a central topic and appears in several algorithmic applications, namely dynamic algorithms, fault tolerance, edge connectivity augmentation, etc.
See \cite{dinitz1976structure, DBLP:journals/mp/PicardQ80, DBLP:conf/stoc/DinitzV94, DinitzN95, NagamochiI92, Karger00, DBLP:journals/jacm/KawarabayashiT19, edgeconnectivityaigmentation/naor1997fast, edgeconnaugmentation/cen2022augmenting, DBLP:conf/soda/AbboudKT22, DBLP:journals/talg/BaswanaBP23,  DBLP:conf/stoc/YKKrauthgamer26} and the references therein. 
The fundamental challenge is that these families often have an enormous size, e.g., $\Omega(n^2)$ global mincuts \cite{dinitz1976structure}, $\Omega(2^n)$ $(s,t)$-mincuts for any fixed $s,t\in V$. 
A landmark compact representation is the \emph{Gomory-Hu tree}, introduced in 1961 by Gomory and Hu~\cite{GomoryH61} and now a standard topic in algorithm textbooks. 
It encodes one arbitrary $(u,v)$-mincut for every pair of vertices $u,v \in V$ using only $O(n)$ space.
Moreover, given any query vertices $u$ and $v$, it can report a $(u,v)$-mincut in worst-case optimal $O(n)$ time. 

Our goal therefore can be framed as finding an analog of the Gomory-Hu tree that encodes $\cN$, i.e. all nearest minimum cuts.
An intriguing direction towards this end is the notion of a \emph{minimal Gomory-Hu tree} \cite{DBLP:conf/focs/AbboudK0PST22, DBLP:conf/focs/AbboudKLPGSYY25}, which arises in recent fast Gomory-Hu tree constructions. 
Informally, it is the unique Gomory-Hu tree of the graph $G_s$ obtained by adding an edge of sufficiently small weight from a fixed vertex $s$ to every other vertex $x\in V\setminus\{s\}$. 
This perturbation guarantees that for any pair of vertices $u,v\in V$, the only $(u,v)$-mincuts in $G_s$ are either $\Near(u,v)$ or $\Near(v,u)$, or both. 
Consequently, a minimal Gomory-Hu tree stores at least one of $\Near(u,v)$ and $\Near(v,u)$ for every $u,v\in V$, and hence at least half of $\cN$.
This immediately suggests the possibility that a small collection of such trees might suffice to represent all of $\cN$.
Unfortunately, we show that this intuition fails: encoding all the cuts in $\cN$ requires $\Omega(n)$ distinct minimal Gomory-Hu trees in the worst case (see \Cref{sec:nearest-gomory-hu-lower-bound}). 
Thus, any straightforward extension of the Gomory-Hu tree paradigm does not seem to succinctly represent $\cN$.
This makes it particularly interesting to explore whether $\cN$ has rich structural properties that admit a compact representation.

In the restricted single-source setting with a designated source vertex $s$, there exist compact representations for the family of nearest mincuts $\cN_s:=\{\Near(u,s)~|~u\in V\setminus\{s\}\}$ and farthest mincuts  $\cF_s:=\{\Far(u,s)~|~u\in V\setminus \{s\}\}$. 
Note that $\Near(u,s)$ may not necessarily be the same as $\Far(u,s)=\overline{\Near(s,u)}$, and thus, $\cN_s\ne \cF_s$; a simple example would be a clique on $n\ge 3$ vertices, where for every $u,v\in V$ we have $\Near(u,v)=\set{u}$ and $\Far(u,v)=V\setminus\set{v}$.
In a manner reminiscent of the Gomory-Hu tree, both families admit extremely compact $O(n)$ space representations that also support efficient $O(n)$ time queries.
The family of nearest mincuts to $s$, denoted by $\cN_s$, forms a laminar family by a simple application of submodularity of cuts, see \cite{HariharanKPB07,  DBLP:conf/stoc/AbboudKT21, BaswanaGK22}. 
Therefore, $\cN_s$ can be represented by a rooted tree $T_s$, known as the \textit{nearest mincut tree} of $s$, on $O(n)$ nodes:
Every vertex is mapped to a unique node in $T_s$, and $\Near(u,s)$ for any vertex $u\in V\setminus \{s\}$ is defined by the vertices mapped to the subtree rooted at the node containing $u$ in $T_s$. 
Meanwhile, the structure of $\cF_s$ is a bit more involved.
Recently, Baswana, Gupta, and Knollmann \cite{BaswanaGK22} have shown that  $\cF_s$ may not be laminar but still has similar properties that allow it to be represented by a DAG of size $O(n)$, called the \textit{farthest mincut DAG} of $s$.  The natural next question is whether these compact representations can be generalized to the all-pairs setting. 

\paragraph{Optimal space structure for all-pairs nearest mincuts.} 
Our main result is the following new data structure, called \textit{Nearest Mincut Hierarchy}, for storing all-pairs nearest mincuts in undirected graphs. 
\begin{theorem} \label{thm:main}
    For any undirected weighted graph $G=(V,E)$ on $n=|V|$ vertices, there is a  structure occupying $O(n)$ space that encodes $\Near(u,v)$ for every $u,v\in V$. Furthermore,  given any pair of vertices $(u,v)\in V\times V$ as a query, it can report the cut $\Near(u,v)$ in $G$ in worst-case $O(n)$ time.
\end{theorem}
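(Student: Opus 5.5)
We organize the structure around the Gomory--Hu tree and the connectivity carcass of Dinitz and Vainshtein. Fix a Gomory--Hu tree $T$ of $G$. Removing all edges of $T$ of weight strictly greater than $\lambda$ splits $V$ into classes; a maximal set of vertices pairwise connected with value $\ge\lambda$ forms a Steiner set. These sets form a laminar hierarchy $\cH$ indexed by the distinct edge weights of $T$, and $\cH$ has $O(n)$ nodes. For a query $(u,v)$ with $\lambda=\lambda(u,v)$, let $S$ be the unique node of $\cH$ at level $\lambda$ containing both $u$ and $v$. Then $u$ and $v$ lie in different $\lambda$-classes of $S$. Every $(u,v)$-mincut is a Steiner mincut of $S$ of value $\lambda$, i.e.\ a minimum cut separating $S$. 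So $\Near(u,v)$ is the nearest Steiner mincut of $S$ separating $u$ from $v$. This reduces the problem to storing, for each Steiner node $S$, a representation of the nearest Steiner $\lambda_S$-mincuts between pairs of its units, together with how the non-Steiner vertices attach.

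Within a single node $S$, we invoke the connectivity carcass. It consists of a cactus $\mathcal{C}_S$ (the skeleton) on the $\lambda$-classes of $S$, plus a projection mapping each vertex of $V$ to a proper path (a "flesh" location) in the cactus. Its size is $O(|V_S|)$ if we restrict to the vertices whose projection is nontrivial at this level. A Steiner mincut corresponds to a cactus cut: either a tree edge, or a pair of edges of one cycle. The $(u,v)$-mincuts are exactly the cactus cuts separating the unit of $u$ from that of $v$. The nearest one is the minimal side containing $u$: it collects $u$'s unit together with all vertices whose projection lies entirely on $u$'s side. To find it, walk from $u$'s node toward $v$'s node in $\mathcal{C}_S$, and take the first separating cut. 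On a cycle we choose the pair of edges adjacent to the portion containing $u$ on both arcs. Then we report every vertex whose proper path lies inside the resulting side. With the carcass's tree-like order on projections, this last step takes time linear in the output, plus $O(n)$.

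The main obstacle is the global space bound. A vertex $x$ can be a non-Steiner "flesh" vertex in many nodes of $\cH$, at many levels. Storing its projection at every level would cost $\Theta(n^2)$. I would address this by charging each stored projection to a Gomory--Hu tree edge, using the laminarity of $\cH$. Storage at level $S$ is restricted to vertices in $V_S$, meaning the union of the parts of $T$ hanging at that node. All other vertices are placed on $u$'s side implicitly by subtree membership in $\cH$. Any vertex outside the Gomory--Hu component of $S$ lies entirely within one subtree of $T$ attached to a single $\lambda$-class. Hence its side is determined by that class, and it does not need its own projection.

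The flesh vertices genuinely inside the component also need compression. Here I would exploit that their projections at consecutive levels are nested, which follows from submodularity and uncrossing of nearest cuts. Then each vertex's projection can be stored once, at the coarsest level, and inherited. Proving this inheritance property rigorously, and that queries can reconstruct the side in $O(n)$ time, is where I expect most of the work to lie. The total would be $\sum_S O(\text{#children of } S) = O(n)$, with $O(n)$ query time since the output and the walks are both $O(n)$.
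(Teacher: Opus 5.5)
Your reduction to the node $S=\LCA_{\cT}(u,v)$ is sound. The thresholded Gomory--Hu hierarchy (after fixing the direction of your threshold) is the hierarchy tree $\cT$, and reading $\Near(u,v)\cap S$ off the skeleton is exactly the paper's base case. The gap is in the space-saving step.

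You claim that a vertex outside the Gomory--Hu component of $S$ has its side in $\Near(u,v)$ determined by the $\lambda$-class its Gomory--Hu subtree attaches to. This is false. Take the triangle on $a,b,q$ with edge $ab$ of weight $10$ and edges $aq,qb$ of weight $1$. Here $S=\{a,b\}$, $\lambda_{a,b}=11$ and $q\notin S$. Both $\{a\}$ and $\{a,q\}$ are $(a,b)$-mincuts, so $\pi_S(q)$ is the whole skeleton edge. Moreover $\Near(a,b)=\{a\}$ and $\Near(b,a)=\{b\}$. Whichever of $a,b$ the Gomory--Hu tree hangs $q$ from, your rule wrongly puts $q$ into that vertex's nearest cut.

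Your attachment rule yields an arbitrary mincut, not the nearest one. In general, a vertex whose projection on $\cH_S$ is a long path lies in $\Near(u,v)$ exactly when the whole path is on $u$'s side. No single attachment class can encode that.

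Every nontrivial projection onto $\cH_S$ belongs to a vertex outside $S$, since vertices of $S$ are terminals with single-node projections. So your category of ``flesh vertices genuinely inside the component'' is empty. The hard vertices are precisely the ones you dismissed. The proposed nesting/inheritance of projections across levels is also unsupported, because projections onto different skeletons do not determine one another.

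You are missing the two ideas the paper relies on.

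\emph{Subtree classification.} By \Cref{lem: subtree classification}, for $x,y\in\mu$ with parent $\nu$, every connected component of $\cH_\nu\setminus\mu$ lies entirely inside or entirely outside $\Near(x,y)$. It therefore suffices to store, for one representative per such component, its projection onto $\cH_\mu$ (descending while that projection is a single nonempty node). This gives $O(n)$ projections in total, and the rest of $V$ is classified by ascending $\cT$ from the LCA.

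\emph{Chain maximizers above the LCA.} At an ancestor $\mu$ strictly above the LCA, $\Near(x,y)$ is no longer a $\mu$-mincut, so \Cref{thm: proj vertex} cannot be applied directly. Configurations I and II still follow from sub/posimodularity and the four component lemma. In configuration III, however, the representative's projection runs from an in-subtree into $\pi_\mu(x)$, and the stored projection genuinely does not determine the answer (\Cref{fig : Chain tool}). The paper resolves this with the chain structure of $N_v^{T_\alpha}$ (\Cref{lem: intro: main chain lemma} and \Cref{thm : chain-maximizer}). It stores a chain maximizer $(w,z)$ per projection endpoint, and decides $x\in\Near(w,z)$ inside $\cH_{\LCA_\cT(w,z)}$ via \Cref{lem: xy are terminals in the lca of wz}.

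Without something playing these two roles, your structure either stores projections of (almost) all vertices on (almost) all skeletons, costing $\Theta(n^2)$, or reports wrong cuts.
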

\Cref{thm:main} generalizes the Gomory-Hu tree by storing $\Near(u,v)$, instead of an arbitrary $(u,v)$-mincut, while preserving the same $O(n)$ space and query time bounds.
It also combines and extends existing structures for nearest and farthest mincuts for a fixed source~\cite{HariharanKPB07,BaswanaGK22} into a unified framework, while matching their space and query time bounds.
Finally, note that there exists an information-theoretic lower bound of $\Omega(n\log{n})$ bits of space to store a Gomory-Hu tree which immediately implies the same lower bound for our result.%
\footnote{The lower bound follows from the fact that there are $n^{n-2}$ distinct labeled trees on $n$ vertices, and each such tree is a Gomory-Hu tree for itself but no other tree.}
Hence, the space complexity of the nearest mincut hierarchy is optimal.
Meanwhile, in directed unweighted graphs, using the result of \cite{Benczur95a}, it is simple to show that storing a $(u,v)$-mincut for every $u,v\in V$ requires  $\Omega(n^2)$ bits of space through an information-theoretic lower bound.
A comparison of our encoding result with existing ones is presented in \Cref{tab:encoding-comparison}. 

\begin{table}[!htbp]
\centering
\footnotesize
\setlength{\tabcolsep}{4pt}
\renewcommand{\arraystretch}{1.18}
\begin{tabular}{
p{0.385\textwidth}
p{0.12\textwidth}
p{0.20\textwidth}
p{0.10\textwidth}
p{0.11\textwidth}
}
\hline
\centering\textbf{Result}
& \centering\textbf{Scope}
& \centering\textbf{Cut type}
& \centering\textbf{Space}
& \centering\textbf{Query}
\tabularnewline
\hline

\centering Gomory-Hu tree~\cite{GomoryH61}
& \centering All pairs
& \centering Arbitrary mincut
& \centering $O(n)$
& \centering $O(n)$
\tabularnewline

\centering Nearest mincut tree~\cite{HariharanKPB07, DBLP:conf/stoc/AbboudKT21, BaswanaGK22}
& \centering Single source
& \centering Nearest
& \centering $O(n)$
& \centering $O(n)$
\tabularnewline

\centering Farthest mincut DAG~\cite{BaswanaGK22}
& \centering Single source
& \centering Farthest
& \centering $O(n)$
& \centering $O(n)$
\tabularnewline

\hline

\centering \textbf{This work}
& \centering \textbf{All pairs}
& \centering \textbf{Nearest (and farthest)}
& \centering $O(n)$
& \centering $O(n)$
\tabularnewline

\hline
\end{tabular}
\caption{Comparison of compact minimum cut families encodings. Here $n=|V|$ for the input graph $G=(V,E)$, and all graphs are weighted and undirected.}
\label{tab:encoding-comparison}
\raggedbottom
\end{table}

\paragraph{Mincut sensitivity oracles for the insertion of an edge.}
A sensitivity oracle for a graph problem is a compact data structure that efficiently reports whether the solution of the problem changes upon insertion or failure of edges/vertices.  
Such sensitivity oracles are crucial for monitoring large-scale real-world networks that undergo frequent changes, such as link additions/failures, network upgrades, and other structural changes that can affect performance.
Directly computing how such changes affect the networks is often unfeasible due to their large size.
Hence, sensitivity oracles using space much smaller than the graph itself, i.e., $o(n^2)$ space or even $O(n)$ space, are of more interest.

Initiated by Picard and Queyranne \cite{DBLP:journals/mp/PicardQ80}, the design of sensitivity oracles for mincuts is an emerging field of research \cite{BaswanaGK22, DBLP:conf/soda/BaswanaP22, DBLP:journals/talg/BaswanaBP23, DBLP:conf/icalp/BaswanaB24, DBLP:conf/esa/BaswanaBR25, DBLP:conf/icalp/Bhanja25, DBLP:conf/innovations/AhiCPPS26}.
Here we focus on the relevant results that are for undirected graphs. 
In general, the oracles handling insertions are considered comparatively easier than the ones handling failures. 
However, insertion oracles often require deep insights into the problem and act as a first step towards handling the failure case, e.g., see \cite{BaswanaGK22, BhanjaPP2026, DBLP:conf/icalp/Bhanja25, DBLP:conf/mfcs/BaswanaGT19}. 
Here, we study the insertion regime for the all-pairs mincut, which has received significant attention in the past few years. 
The authors in \cite{DBLP:journals/mp/PicardQ80} gave a sensitivity oracle for a fixed source-sink $s,t\in V$ and established the following fundamental connection between insertion sensitivity and nearest mincuts:
\\
\textit{\text{~~~~}For any $u,v\in V$, $(u,v)$-mincut value increases iff the inserted edge lies between $\Near(u,v)$ and $\Near(v,u)$.}
\\
Combining this observation with \Cref{thm:main} immediately yields the following new sensitivity oracle for all-pairs mincut occupying the worst-case optimal space and achieving $O(n)$ query time.
\begin{theorem}
    \label{cor:all-pairs-insertion-sensitivity-oracle}
    For any undirected weighted graph $G=(V,E)$ on $n=|V|$ vertices, there exists a data structure occupying $O(n)$ space such that, given a query containing an edge $e\in V\times V$ of any arbitrary weight and a pair of vertices $(u,v)\in V\times V$, it can determine in $O(n)$ time whether the $(u,v)$-mincut increases upon insertion of $e$ in $G$.
\end{theorem}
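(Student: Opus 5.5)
We obtain the oracle by combining the nearest mincut hierarchy of \Cref{thm:main} with the Picard--Queyranne characterization quoted above. The data structure is exactly the $O(n)$-space structure of \Cref{thm:main}; nothing else is stored. Given a query edge $e=(x,y)$ of arbitrary positive weight and a pair $(u,v)$, we query the hierarchy twice, once for $\Near(u,v)$ and once for $\Near(v,u)$. Each query costs $O(n)$ time and outputs an explicit vertex set. We mark both sets in a boolean array of size $n$; this array can be allocated once and cleared after each query in $O(n)$ time. Finally, we report ``increase'' if and only if one endpoint of $e$ lies in $\Near(u,v)$ and the other lies in $\Near(v,u)$. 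Membership tests are $O(1)$, so the total query time is $O(n)$ and the space is $O(n)$.

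For correctness we justify the characterization. First, if $x\in\Near(u,v)$ and $y\in\Near(v,u)$, then $e$ crosses every $(u,v)$-mincut. Indeed, every $(u,v)$-mincut $C$ satisfies $\Near(u,v)\subseteq C$, and by \Cref{def: nearest and farthest mincuts} applied to $(v,u)$ it also satisfies $\Near(v,u)\subseteq \overline{C}$, since $\overline{C}$ is a $(v,u)$-mincut. So every mincut gains weight $w(e)>0$. Any cut that was not a mincut already had value strictly larger than $\lambda(u,v)$. Hence the new minimum exceeds $\lambda(u,v)$; this uses only $w(e)>0$ and finiteness of the graph, so the weight of $e$ does not matter.

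Conversely, suppose $e$ does not go between these two sets. Without loss of generality, neither endpoint lies in $\Near(u,v)$, or neither lies in $\Near(v,u)$. In the first case, $\Near(u,v)$ itself is a $(u,v)$-mincut not crossed by $e$; in the second case, $\overline{\Near(v,u)}$ is. Either way the value stays $\lambda(u,v)$. The edge case where $e$ has both endpoints in one nearest cut falls into one of these cases.

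The only substantive ingredient is \Cref{thm:main}, which gives explicit output of $\Near(\cdot,\cdot)$ in $O(n)$ time. The one thing to be careful about is that the Picard--Queyranne statement needs $\Near(v,u)$ as a set on the $v$ side, not $\Far$. We simply query the hierarchy with the reversed ordered pair $(v,u)$, which \Cref{thm:main} supports since it handles all ordered pairs. No real obstacle remains beyond invoking \Cref{thm:main}.
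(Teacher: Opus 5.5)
Your proposal is correct and follows the paper's proof: query the nearest mincut hierarchy of \Cref{thm:main} for both $\Near(u,v)$ and $\Near(v,u)$ in $O(n)$ time, then check the endpoints of $e$ against the Picard--Queyranne criterion. The only difference is that you also prove that criterion yourself (correctly, using $w(e)>0$ and the disjointness of the two nearest cuts), whereas the paper simply cites it.
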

Despite significant progress in recent years, it remained unclear whether there exists an $o(n^2)$ space data structure for answering insertion queries in the all-pairs regime, as noted in \cite{DBLP:conf/soda/BaswanaP22}. 
In the single-source regime, where queries were limited to a pair $u,s$ for a fixed source $s\in V$, Baswana, Gupta, and Knollmann \cite{BaswanaGK22} gave a data structure with $O(n)$ space and $O(n)$ query time.
Our data structure in \Cref{cor:all-pairs-insertion-sensitivity-oracle} generalizes this result to the all-pairs regime, while maintaining the same space and query time guarantees.
Existing All-pairs data structures use $O(n^2)$ space, although they support $O(1)$ query time \cite{BaswanaGK22,DBLP:conf/soda/BaswanaP22}.
Additionally, there exist data structures for unweighted graphs using $O(\min(n^{1.5},m))$ space and $O(\min(n^{1.5},m))$ query time \cite{DBLP:conf/stoc/YKKrauthgamer26,DBLP:conf/soda/BaswanaP22}, but they do not seem to extend to the weighted case.

Furthermore, our encoding results cannot be strengthened due to the following lower bounds.
First, any data structure handling even two edge insertions (of unit capacity), even for the fixed source-sink setting, must occupy information-theoretically $\Omega(n^2)$ bits of space \cite{DBLP:conf/icalp/Bhanja25}. 
In addition, a known lower bound of \cite{BaswanaGK22} (see Theorem 5) shows that reporting the new capacity of $(u, v)$-mincut when the query edge can have weight polynomial in $n$ requires $\Omega(n^2\log~n)$ bits of space in the worst case, irrespective of the query time. 
This lower bound holds even in the restricted single-source setting.

A natural extension for sensitivity oracles is to report, upon insertion of a query edge, every pair $u,v\in V$ for which the $(u,v)$-mincut changed.
For single-source mincuts, the oracle of \cite{BaswanaGK22} can report in $O(n)$ time every $v\in V\setminus \{s\}$ such that the $(s,v)$-mincut is changed upon insertion of an edge.
Similarly, for all-pairs mincuts, the $O(n^2)$ space sensitivity oracles of \cite{BaswanaGK22, DBLP:conf/soda/BaswanaP22} can report also all $k$ pairs $u,v\in V$ such that $(u,v)$-mincut is changed in $O(k)$ time.
Observe that our data structure in \Cref{cor:all-pairs-insertion-sensitivity-oracle} can report all changed vertex pairs trivially in $O(n^3)$ time. 
However, we show that our data structure can be used to achieve the  worst-case optimal $O(n^2)$ time, as formally stated in the following theorem.
A comparison of our work with previous results is given in \Cref{tab:sensitivity-comparison}.
\begin{theorem}\label{thm : mincut sensitivity data structures for the insertion of an edge}
    For any undirected weighted graph $G=(V,E)$ on $n=|V|$ vertices, there is a data structure occupying $O(n)$ space such that, given a query containing an edge $e\in V\times V$ of any arbitrary weight, it can determine in amortized $O(1)$ time per vertex pair $(u,v)\in V\times V$ whether $(u,v)$-mincut is increased upon insertion of $e$ in $G$.     
\end{theorem}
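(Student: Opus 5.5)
The plan is to reduce the all-pairs question to $n$ single-source questions and answer each in $O(n)$ time. By the Picard--Queyranne characterization, the $(u,v)$-mincut value increases upon inserting $e=(x,y)$ if and only if $e$ separates $\Near(u,v)$ from $\Near(v,u)$. Since $\Near(u,v)\subseteq\Far(u,v)=\overline{\Near(v,u)}$, this condition says that one endpoint of $e$ lies in $\Near(u,v)$ and the other lies in $\Near(v,u)$. So for each fixed vertex $u$, it suffices to compute in $O(n)$ total time the set of all $v$ satisfying
$$\{x,y\}\cap \Near(u,v)\neq\emptyset \quad\text{and}\quad \{x,y\}\cap \Near(v,u)\neq\emptyset,$$
with the two intersections witnessed by different endpoints. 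Summing over $u$ gives $O(n^2)$ total time, i.e.\ amortized $O(1)$ per pair, while the space remains the $O(n)$ of the structure in \Cref{thm:main}.

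For a fixed $u$, the two conditions are treated separately. The family $\{\Near(v,u)\}_{v\neq u}$ is laminar and forms the nearest mincut tree $T_u$. Hence the set of $v$ with $x\in\Near(v,u)$ is exactly the set of vertices mapped to nodes on the root-ward path from the node of $x$ in $T_u$; the same holds for $y$. The family $\{\Near(u,v)\}_{v}$ consists of complements of farthest cuts, $\Near(u,v)=\overline{\Far(v,u)}$. Therefore $x\in\Near(u,v)$ iff $x\notin\Far(v,u)$, which is governed by the farthest mincut DAG of $u$. In that DAG, the set of $v$ with $x\in\Far(v,u)$ is a reachability set, computable by one traversal in $O(n)$ time. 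Marking, for each $v$, the four bits $[x\in\Near(v,u)]$, $[y\in\Near(v,u)]$, $[x\in\Near(u,v)]$ and $[y\in\Near(u,v)]$ then answers all $v$ for this $u$ in $O(n)$ time.

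The main obstacle is that we cannot afford to store $T_u$ and the farthest DAG for every $u$, since that costs $\Theta(n^2)$. Instead, the plan is to show that the nearest mincut hierarchy of \Cref{thm:main} can materialize $T_u$ together with the farthest mincut DAG of $u$ (or at least the marking information above) in $O(n)$ time per source $u$. Concretely, the $O(n)$ query procedure of \Cref{thm:main} reports $\Near(u,v)$ by walking the hierarchy, whose underlying carcass/Gomory-Hu-like skeleton is shared across sources. I would try to show that the walks for all targets $v$ with a fixed source $u$ can be batched: process $v$ in the order of the Gomory-Hu tree rooted at $u$, and reuse work along tree paths so that each hierarchy node is touched $O(1)$ times per source. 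I expect this batching to be the delicate step.

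If the batching fails in general, the fallback is to avoid materializing the cuts at all. For the fixed edge $e$, we would locate $x$ and $y$ once in the hierarchy (carcass projections), precompute for each hierarchy node whether it separates $x$ from $y$ in $O(n)$ time overall, and then answer each pair $(u,v)$ by an $O(1)$ lookup. This lookup would use LCA-type labels on the hierarchy for $u$ and $v$, precomputed in $O(n)$ space.
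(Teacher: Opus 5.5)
\paragraph{Verdict: there is a genuine gap.} Your reduction is correct. By Picard--Queyranne, the $(u,v)$-mincut increases iff one endpoint of $e$ lies in $\Near(u,v)$ and the other lies in $\Near(v,u)$. If you had $T_u$ and the farthest mincut DAG of $u$, the per-source marking would indeed cost $O(n)$.

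The problem is that the step carrying all the weight is only announced, never established. That step is materializing $T_u$ (and, in your plan, also the farthest DAG of $u$) from the $O(n)$-space hierarchy in $O(n)$ time.
\begin{itemize}
\item \emph{Batching.} The sketch gives no mechanism. The pair query spends $O(|\alpha_{i+1}\setminus\alpha_i|)$ at every level above $\LCA_\cT(u,v)$. Its in/out classification at each level depends on $\Near(u,v)\cap\alpha_i$, which changes with the target $v$; unlike $\{\Near(v,u)\}_v$, the family $\{\Near(u,v)\}_v$ is not even laminar. So running it for all $v$ costs $O(n^2)$ per source, i.e.\ the trivial $O(n^3)$, unless you show how the walks share work.
\item \emph{Fallback.} For $\mu=\LCA_\cT(u,v)$, whether $x\in\Near(u,v)$ is decided by whether $\pi_\mu(x)$ lies entirely in the component of $\pi_\mu(u)$ after deleting the skeleton edge at $\pi_\mu(u)$ towards $\pi_\mu(v)$; likewise for $y$ and $v$. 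A per-node bit such as ``$\mu$ separates $x$ from $y$'' cannot encode this pair-dependent information. The projections $\pi_\mu(x)$ themselves are not stored for general $\mu$ (storing all of them is $\Theta(n^2)$), and you give no procedure to recover them for all nodes of $\cT$ in $O(n)$ total time.
\end{itemize}

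\paragraph{The missing ingredient.} It is \Cref{thm : nearest mincut tree reporting}: the hierarchy outputs $T_s$ in $O(n)$ time. The paper builds $T_s$ bottom-up along the root path $s=\alpha_1,\dots,\alpha_k=V$ of $\cT$.
\begin{itemize}
\item The nodes of $T_s$ are exactly the children of the $\alpha_i$'s, excluding the path nodes $\alpha_i$ themselves.
\item Rooting $\cH_{\alpha_{i+1}}$ at $\alpha_i$ and merging empty nodes into their parents gives the part of $T_s$ inside $\alpha_{i+1}\setminus\alpha_i$.
\item The top node of each $\alpha_i$-subtree is hung, in the partial tree $T_s[\alpha_i]$, from the attachment node $\Att(v)$ of its representative $v$.
\item $\Att(v)$ is found in $O(1)$ time from the stored projection $\pi_{\alpha_i}(v)$. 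This uses the subtree characterization (\Cref{thm: characterization of skeleton subtrees}), the chain maximizer $(w,z)$ of $v$ together with a comparison of $\lambda_{w,s}$ and $\lambda_{z,s}$, and a dynamically maintained LCA oracle on the partial tree.
\end{itemize}

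With this in hand, no farthest DAGs are needed: $x\in\Near(u,v)$ iff $x$ lies in the subtree of $u$'s node in $T_v$. Only the \emph{stored} structure must be $O(n)$; query-time working memory may be $O(n^2)$. So after building all $n$ trees, each pair can be checked in $O(1)$. Alternatively, as the paper does, feed the trees into the $O(k)$ reporting procedure of Baswana, Gupta and Knollmann (\Cref{thm: all-pairs insertion of baswana et al}), giving $O(n^2)$ total.
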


To achieve \Cref{thm : mincut sensitivity data structures for the insertion of an edge}, we show that our data structure can be used to report, given a vertex $s$, the nearest mincut tree $T_s$ of $s$ in $O(n)$ time.
Observe that this is a significant improvement over \Cref{cor:all-pairs-insertion-sensitivity-oracle}, as it essentially finds the $\Near(v,s)$ for every $v\in V$ in just $O(n)$ time, which is again worst-case optimal. 
We state this result below, and believe that it might be of independent interest.
\begin{theorem} \label{thm : nearest mincut tree reporting}
    For any undirected weighted graph $G=(V,E)$ on $n=|V|$ vertices, there is a data structure occupying $O(n)$ space such that, given a query containing a vertex $s$, it can report the nearest mincut tree of $s$ in $O(n)$ time. 
\end{theorem}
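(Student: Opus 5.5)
We prove \Cref{thm : nearest mincut tree reporting} by building $T_s$ on top of the Nearest Mincut Hierarchy of \Cref{thm:main}. We never query $\Near(v,s)$ separately for each $v$, since each answer costs $O(n)$. Instead, we exploit the fact that $\cN_s$ is laminar. So $T_s$ is determined once we know two things. First, for each vertex $v\neq s$, the node of $T_s$ containing $v$, i.e.\ the equivalence class of vertices $v'$ with $\Near(v',s)=\Near(v,s)$. Second, for each node, its parent, which is the next larger nearest cut in the chain. Both have total description size $O(n)$, so the goal is to read them off the hierarchy in time linear in its size.

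The first step is structural and uses the Gomory-Hu tree stored inside the hierarchy. Let $\lambda(v,s)$ be the mincut value. In a Gomory-Hu tree $\cH$ rooted at $s$, $\lambda(v,s)$ is the minimum edge weight on the $v$–$s$ path. Standard submodularity arguments give the following monotonicity.

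\textbf{Claim.} If $u\in \Near(v,s)$, then $\lambda(u,s)\ge \lambda(v,s)$ and $\Near(u,s)\subseteq\Near(v,s)$.

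So $T_s$ refines the ordering of vertices by decreasing $\lambda(\cdot,s)$. Concretely, we process the Gomory-Hu edges on root paths in decreasing order of weight. For a fixed value $\lambda$, the vertices with $\lambda(v,s)=\lambda$ split into groups. Each group consists of the vertices sharing the same lightest (bottleneck) edge toward $s$, and the vertices in one group induce a common Steiner mincut problem with terminal set including $s$. For each such group we invoke the connectivity-carcass component of the hierarchy. The nearest cuts to $s$ within that group correspond to the laminar (tree-like) part of the carcass as seen from the node containing $s$. The nearest cut of $v$ is the union of the carcass units on $v$'s side of the proper cut nearest to $v$, together with the Gomory-Hu subtrees hanging off those units.

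The second step is the algorithm itself. We traverse $\cH$ from $s$ with a DFS. At each bottleneck edge we descend into the stored carcass structure of the corresponding group. There we compute, by one linear-time traversal of the carcass, the laminar tree of its $s$-nearest cuts. We then splice the subtrees for deeper (heavier) groups below the appropriate carcass nodes. Each part of the hierarchy is touched $O(1)$ times and the hierarchy has size $O(n)$, so the total time is $O(n)$. Correctness reduces to checking two facts:
\begin{itemize}
\item each $\Near(v,s)$ equals the set of vertices in the subtree of $v$'s node;
\item the splicing respects inclusion, which follows from the Claim.
\end{itemize}

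The main obstacle is the carcass step. The hierarchy presumably stores the carcass in a form oriented for arbitrary pairs, via the cactus/strip structure. We must show that, for a fixed $s$, the nearest cuts $\Near(v,s)$ of all terminals in one group can be extracted in time linear in the carcass size, rather than with one $O(n)$ query per vertex. There are two cases. For vertices in cactus nodes, the nearest cut toward $s$ is the minimal cactus cut separating them, which a single rooted DFS of the cactus from $s$'s node enumerates. For non-terminal vertices mapped into strips, we need the projection ordering of the strip units. We would try first to show that the nearest cut of a strip vertex is determined by the innermost unit of its projection on the side away from $s$. These cuts then form a nested chain along each strip, computable by one scan of the strip.
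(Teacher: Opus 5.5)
Your proposal has a genuine gap at the step that makes the theorem nontrivial: deciding, for a vertex $u$ with $\lambda(u,s)<\lambda(v,s)$, whether $u\in\Near(v,s)$.

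Your rule that $\Near(v,s)$ consists of the carcass units on $v$'s side plus the Gomory--Hu subtrees hanging off them is false. Take the triangle with $w(s,u)=1$, $w(u,v)=1$, $w(v,s)=5$. Both $\{v\}$ and $\{u,v\}$ are $(v,s)$-mincuts of value $6$, so $\Near(v,s)=\{v\}$. Yet the tree with edges $s$--$v$ (weight $6$) and $v$--$u$ (weight $2$) is a valid Gomory--Hu tree, and in it $u$ hangs off $v$. Whether such a lighter vertex belongs to $\Near(v,s)$ is decided by its projection onto the skeleton of $v$'s level (here $\pi(u)$ contains the skeleton edge separating $v$ from $s$), not by where it hangs in some Gomory--Hu tree.

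You concede this at the end (``strips''). But the nearest mincut hierarchy does not store these projections, and it cannot: it keeps only the skeleton trees plus $O(n)$ projections of representatives. Storing every vertex's projection onto every skeleton takes $\Theta(n^2)$ space in the worst case, which is exactly what the hierarchy is designed to avoid. Even for a single $s$, scanning full strips along $\alpha_1=s,\dots,\alpha_k=V$ costs $\sum_i|V\setminus\alpha_i|$, which can be $\Theta(n^2)$. So your ``one linear-time traversal of the carcass'' has no data to run on, and neither the space nor the time bound follows.

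The paper fills this gap as follows. It walks up $\cT$ from $s$. The nodes of $T_s$ are the children of each $\alpha_i$ other than $\alpha_{i-1}$. The structure among the new vertices of level $i+1$ comes from rooting $\cH_{\alpha_{i+1}}$ at $\alpha_i$ and merging empty nodes into their parents; this part matches your ``laminar part of the carcass'' step and is fine. The hard part is attaching each resulting subtree to the partial tree $T_s[\alpha_i]$:

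\begin{itemize}
\item By the subtree classification lemma, every vertex of an $\alpha_i$-subtree of $\cH_{\alpha_{i+1}}$ lies on the same side of $\Near(x,s)$ for every $x\in\alpha_i$, so a single representative $v$ suffices.
\item Its parent in $T_s$ is the attachment node $\Att(v)$, the lowest node of $T_s[\alpha_i]$ whose nearest cut to $s$ contains $v$.
\item $\Att(v)$ is found in $O(1)$ time from the one stored projection $\pi_{\alpha_i}(v)$ and a dynamically maintained LCA oracle on $T_s[\alpha_i]$.
\item When $\pi_{\alpha_i}(v)$ meets $\alpha_{i-1}$ through exactly one edge, it also needs the stored chain maximizer $(w,z)$, a comparison of $\lambda_{w,s}$ with $\lambda_{z,s}$, and the crossing-mincuts lemma.
\end{itemize}

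Nothing in your plan replaces these ingredients.

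Separately, your Claim has its inequality reversed. $\Near(v,s)$ is itself a $(u,s)$-cut, so $u\in\Near(v,s)$ implies $\lambda(u,s)\le\lambda(v,s)$, and the inequality can be strict. On the path $u$--$v$--$s$ with weights $1,2$, $\Near(v,s)=\{u,v\}$ while $\lambda(u,s)=1$. Hence lighter groups hang below heavier ones in $T_s$, and the justification you give for the splicing must be redone.
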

\begin{table}[ht]
\centering
\footnotesize
\setlength{\tabcolsep}{3.2pt}
\renewcommand{\arraystretch}{1.18}
\begin{tabular}{
p{0.27\textwidth}
p{0.11\textwidth}
p{0.11\textwidth}
p{0.12\textwidth}
p{0.16\textwidth}
p{0.14\textwidth}
}
\hline
\centering\textbf{Result}
& \centering\textbf{Scope}
& \centering\textbf{Graph}
& \centering\textbf{Space}
& \centering\textbf{Single-pair query}
& \centering\textbf{All affected pairs}
\tabularnewline
\hline

\centering Single-source insertion oracle~\cite{BaswanaGK22}
& \centering Single source
& \centering Weighted
& \centering $O(n)$
& \centering $O(n)$
& \centering $O(n)$
\tabularnewline

\centering Single-source insertion oracle faster query~\cite{BhanjaPP2026}
& \centering Single source
& \centering Weighted
& \centering $O(n^{3/2})$
& \centering $O(1)$
& \centering $O(k)$
\tabularnewline

\centering All-pairs insertion oracle~\cite{BaswanaGK22,DBLP:conf/soda/BaswanaP22}
& \centering All pairs
& \centering Weighted
& \centering $O(n^2)$
& \centering $O(1)$
& \centering $O(k)$
\tabularnewline

\centering Unweighted insertion oracle~\cite{DBLP:conf/soda/BaswanaP22, DBLP:conf/stoc/YKKrauthgamer26}
& \centering All pairs
& \centering Unweighted
& \centering $O(\min\{n^{3/2},m\})$
& \centering $O(\min\{n^{3/2},m\})$
& \centering $O(n^2)$
\tabularnewline

\hline

\centering \textbf{This work}
& \centering \textbf{All pairs}
& \centering \textbf{Weighted}
& \centering $O(n)$
& \centering $O(n)$
& \centering $O(n^2)$
\tabularnewline

\hline
\end{tabular}
\caption{
Comparison of sensitivity oracles for mincuts under insertion of an edge.
Here $n=|V|$, $m=|E|$, and $k$ is the number of affected pairs.
}
\label{tab:sensitivity-comparison}
\end{table}


\section{Preliminaries}
\label{sec: preliminaries}
Throughout, we consider an undirected, positively edge-weighted and connected graph $G = (V,E, w)$ where $w:E\mapsto\mathbb{R}_+$.
We also denote the number of vertices by $n = |V|$ and the number of edges by $m = |E|$.

For $A \subseteq V$, its capacity $c(A)$ is the sum of weights $w(e)$ over all edges $e \in E$ such that $e$ goes between $A$ to $\overline{A} := V \setminus A$;
those edges are called the \emph{edge-set} or the 
\emph{contributing edges} of $A$.
If $\emptyset \neq X \subseteq A$ and $\emptyset \neq Y \subseteq \overline{A}$, we say that $A$ is an \emph{$(X,Y)$-cut}.
An $(X,Y)$-cut of smallest capacity is called \emph{$(X,Y)$-mincut}.
The term $X,Y$-(min)cut refers to either an $(X,Y)$-(min)cut or a $(Y,X)$-(min)cut.
The capacity of $X,Y$-mincut is denoted by $\lambda_{X,Y}$.
When $X,Y$ are singletons $X = \{x\}$ and $Y = \{y\}$ we may write $x,y$-(min)cut, and such shorthands apply to other notations as well.
For $S \subseteq V$, the cut (defined by) $A$ is said to be an \emph{$S$-cut} or a \emph{Steiner cut for $S$} if $S \cap A \neq \emptyset$ and $S \cap \overline{A} \neq \emptyset$.
An $S$-cut of smallest capacity is called \emph{$S$-mincut} or \emph{Steiner mincut for $S$}, and its capacity is denoted by $\lambda_S$.

It is well known that capacities satisfy \emph{sub/posimodularity}: for every $A,B \subseteq V$,
$c(A) + c(B) \geq c(A \cap B) + c(A \cup B)$ (submodularity), and $c(A) + c(B) \geq c(A \setminus B) + c(B \setminus A)$ (posimodularity).
The following lemma is a consequence that will be used many times rather than applying these directly (proof is in Appendix \ref{sec: missing proofs}):
\begin{lemma}\label{lem:sub-posi-general}
    Let $S,T \subseteq V$, and suppose $A$ is an $S$-mincut and $B$ is a $T$-mincut.
    \begin{itemize}
        \item (``Submodularity'') If one of $A \cap B$ and $A \cup B$ is an $S$-cut and the other is a $T$-cut, then the former is an $S$-mincut and the latter is a $T$-mincut.
        
        \item (``Posimodularity'') If one of $A \setminus B$ and $B \setminus A$ is an $S$-cut and the other is a $T$-cut, then the former is an $S$-mincut, the latter is a $T$-mincut
    \end{itemize}
    Furthermore, the 
    above also holds if we replace ``$S$-(min)cut'' by ``$S_1,S_2$-(min)cut'', and/or replace ``$T$-(min)cut'' by ``$T_1,T_2$-(min)cut'', for  $S_1, S_2, T_1, T_2 \subseteq V$. 
\end{lemma}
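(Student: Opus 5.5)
The argument is a direct chain of inequalities built from the definitions of $S$-mincut and $T$-mincut together with sub- and posimodularity. The two bullets have the same structure, so I treat them in parallel.

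\textbf{Submodularity case.} Assume, without loss of generality, that $A\cap B$ is an $S$-cut and $A\cup B$ is a $T$-cut; the other assignment is symmetric.
\begin{enumerate}
\item Since $A$ is an $S$-mincut, every $S$-cut has capacity at least $c(A)=\lambda_S$. In particular $c(A\cap B)\ge c(A)$.
\item Likewise, since $B$ is a $T$-mincut, $c(A\cup B)\ge c(B)$.
\item Submodularity gives $c(A)+c(B)\ge c(A\cap B)+c(A\cup B)$.
\item Combining, $c(A)+c(B)\ge c(A\cap B)+c(A\cup B)\ge c(A)+c(B)$, so equality holds throughout.
\item Because each of the two inequalities in steps 1 and 2 goes in the same direction and their sum is tight, each must itself be an equality. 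Hence $c(A\cap B)=\lambda_S$ and $c(A\cup B)=\lambda_T$.
\item Since $A\cap B$ is an $S$-cut of capacity $\lambda_S$, it is an $S$-mincut. Similarly $A\cup B$ is a $T$-mincut.
\end{enumerate}

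\textbf{Posimodularity case.} The argument is identical, with $A\setminus B$ and $B\setminus A$ in place of $A\cap B$ and $A\cup B$, and posimodularity $c(A)+c(B)\ge c(A\setminus B)+c(B\setminus A)$ in place of submodularity. Assigning whichever of the two sets is the $S$-cut to be compared against $c(A)$, and the other against $c(B)$, steps 1–6 go through unchanged.

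\textbf{The ``furthermore'' clause.} The proof above uses only one fact about the two families of cuts: $A$ has minimum capacity among all cuts in the first family, and $B$ has minimum capacity among all cuts in the second. It never uses what these families are. The family of $S_1,S_2$-cuts, that is, sets separating $S_1$ from $S_2$ in either orientation, is just another family of vertex sets. The argument therefore applies verbatim with $\lambda_{S_1,S_2}$ in place of $\lambda_S$, and likewise for $T_1,T_2$.

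The only point requiring care is bookkeeping: each resulting set must be compared with the mincut of its own family, which is exactly what the hypothesis guarantees.
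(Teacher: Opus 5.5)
Your proof is correct and follows the paper's argument essentially line for line: you compare each resulting set against the mincut of its own family, add the two inequalities, and use sub/posimodularity to force both to be tight. Your observation that the argument uses only the minimality of $A$ and $B$ within their respective cut families is exactly why the paper can dismiss the posimodularity bullet and the ``furthermore'' clause as similar.
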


By submodularity, we immediately see that if $A$ and $B$ are $(X,Y)$-mincuts, then so are $A \cap B$ and $A \cup B$, which shows that the notions of nearest/farthest from~\Cref{def: nearest and farthest mincuts} mincuts are well-defined.
 We additionally have the following ``sub/posimodularity-like'' inequalities for undirected graphs:
\begin{lemma} [Four component Lemma, \protect{cf.\ \cite[Problem 6.48(iii)]{lovasz1979combinatorial}}] \label{lem: four component lemma}
    For any three sets $A,B,C\subseteq V$, $ c(A)+c(B)+c(C) \ge c(A\setminus (B\cup C))+c(B\setminus (A\cup C))+c(C\setminus (A\cup B))+c(A\cap B\cap C) $
\end{lemma}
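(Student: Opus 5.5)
The statement is the Four Component Lemma for an undirected cut function. I would prove it by counting, edge by edge, how much each edge contributes to the left-hand side and to the right-hand side. Both sides are sums of edge weights, so it suffices to show the following. For every edge $e=\{x,y\}$, the number of sets among $A,B,C$ that $e$ crosses is at least the number of sets among
\[
A\setminus(B\cup C),\quad B\setminus(A\cup C),\quad C\setminus(A\cup B),\quad A\cap B\cap C
\]
that $e$ crosses. Call the last four sets $D_1,\dots,D_4$.

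To organize the count, I would encode each vertex $z$ by its membership pattern $\pi(z)\in\{0,1\}^3$ recording membership in $(A,B,C)$. Under this encoding:
\begin{itemize}
\item The left-hand count for $e$ is the Hamming distance $d(\pi(x),\pi(y))$.
\item Each $D_i$ is the set of vertices whose pattern equals one fixed pattern: $100$, $010$, $001$ and $111$ respectively.
\item So $e$ crosses $D_i$ exactly when exactly one of $\pi(x),\pi(y)$ equals that pattern.
\end{itemize}

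The case analysis then goes as follows.
\begin{itemize}
\item If $\pi(x)=\pi(y)$, both counts are $0$.
\item Otherwise, the right-hand count is the number of endpoints whose pattern lies in the special set $P=\{100,010,001,111\}$. This is because the two patterns differ, so an endpoint can only match its own $D_i$. Hence the right-hand count is at most $2$.
\item If at most one endpoint has a pattern in $P$, the right-hand count is at most $1$. Since the patterns differ, the Hamming distance is at least $1$, so the inequality holds.
\item If both patterns lie in $P$ and are distinct, the right-hand count is $2$. Every two distinct elements of $P$ are at Hamming distance at least $2$: the weight-one patterns are pairwise at distance $2$, and $111$ is at distance $2$ from each of them. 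So the left-hand count is at least $2$.
\end{itemize}

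Summing with the weights $w(e)$ gives the lemma. No step is a real obstacle. The only point needing care is the observation that $P$ is a code with minimum Hamming distance $2$, which is exactly why this particular choice of four sets works.
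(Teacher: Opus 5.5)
Your edge-by-edge counting argument is correct and complete. The paper gives no proof of this lemma, only a citation to Lov\'asz's Problem 6.48(iii), and your argument is the standard way to prove that fact. The crux you identify is where the proof really uses that every edge has just two endpoints: the patterns $100,010,001,111$ are exactly the odd-weight vectors in $\{0,1\}^3$, so they are pairwise at Hamming distance at least $2$. The inequality fails for hypergraph cut functions, so it cannot be derived from sub/posimodularity alone.
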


\subsection*{Key Elements of the Connectivity Carcass}

We now introduce several key aspects of the Connectivity Carcass from the seminal works of Dinitz and Vainshtein~\cite{DBLP:conf/stoc/DinitzV94, DBLP:conf/soda/DinitzV95, DBLP:journals/siamcomp/DinitzV00}.
The carcass is a complicated and multifaceted data structure for Steiner mincuts.
Here, we only give the ``bare minimum'' interface needed for our purposes.
We also refer to~\cite{BaswanaP25} for an excellent and simple presentation of the carcass, including full proofs of the results presented here. 

Let $S \subseteq V$ be a set of vertices called \emph{terminals}.
Two terminals $s,t \in S$ are called \textit{$S$-equivalent} if there is no $S$-mincut that separates $s$ and $t$.
We denote the number of $S$-equivalence classes by $d(S)$.
A partition of $S$ into $S_1,S_2$ is called a \emph{valid partition} if there is some $S$-mincut $C$ s.t.\ $C \cap S = S_1$ and $\overline{C} \cap S = S_2$ (i.e., if every $S_1,S_2$-mincut is an $S$-mincut).
For a vertex $v \in V$, we say that a valid partition $S_1,S_2$ \emph{distinguishes} $v$ if there exist two $S_1,S_2$-mincuts, one keeping $v$ on the side of $S_1$, and the other keeping $v$ on the side of $S_2$.
A valid partition $S_1,S_2$ is said to be \emph{laminar} if for every other valid partition $S'_1, S'_2$, there is at least one containment among the four sets $S_1,S_2,S'_1,S'_2$.

We are now ready to give the succinct carcass interface required for this paper, summarized in the following two theorems:

\begin{theorem}[Skeleton]\label{thm:skeleton}
    There is a tree $\cH_S$ with $O(|d(S)|)$ nodes and a mapping 
    $\pi_S (\cdot)$ from $S$ to nodes of $\cH_S$, called the \emph{skeleton} of $S$, 
    whose edges \emph{bijectively} correspond to laminar valid partitions of $S$:
    edge $g$ of $\cH_S$ corresponds to the laminar valid partition $S_1,S_2$ s.t.\ $S_1$ are the terminals mapped to one connected component of $\cH_S \setminus g$ (i.e., $\cH_S$ with $g$ deleted), 
    and $S_2$ are those mapped to the other component. The mapping $\pi_S (\cdot)$ need not be surjective, i.e., there could be nodes of $\cH_S$ without mapped terminals, called empty nodes.
\end{theorem}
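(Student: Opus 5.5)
The plan is to realize the laminar valid partitions of $S$ as a cross-free family of bipartitions of $S$, and then to use the standard tree representation of cross-free families. If $d(S)=1$ there are no $S$-cuts separating terminals at all, so there are no valid partitions. In that case I take $\cH_S$ to be a single node with every terminal mapped to it, and the statement holds vacuously. From now on assume $d(S)\ge 2$.

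\textbf{Step 1 (sides are unions of classes).} Every side of a valid partition is a union of $S$-equivalence classes. If $s,t$ were equivalent but split by a valid partition $S_1,S_2$, then any $S_1,S_2$-mincut would be an $S$-mincut separating $s$ and $t$, a contradiction. So I may work on the quotient ground set of the $d(S)$ classes.

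\textbf{Step 2 (laminar partitions are cross-free).} Take two laminar valid partitions $\{S_1,S_2\}$ and $\{S'_1,S'_2\}$. By definition there is a containment among $S_1,S_2,S'_1,S'_2$. Every such containment between a side of one partition and a side of the other is equivalent to one of two situations:
\begin{itemize}
\item one of $S_i\cap S'_j$ is empty, or
\item one of $S_i\cup S'_j$ is all of $S$.
\end{itemize}
So the two bipartitions do not cross. Now fix a terminal $r\in S$ and, for each laminar valid partition, record the side not containing $r$. Call the resulting family $\cL$. Cross-freeness turns into laminarity: for $X,Y\in\cL$, either $X\cap Y=\emptyset$ or the two sets are nested. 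The option $X\cup Y=S$ is excluded because $r\notin X\cup Y$. Distinct partitions give distinct sets, since each partition is determined by its $r$-free side.

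\textbf{Step 3 (building the tree).} Build $\cH_S$ as follows.
\begin{itemize}
\item Create one node for each $X\in\cL$, plus a root node representing $S$.
\item Make $X$ a child of the inclusion-minimal member of $\cL\cup\{S\}$ that strictly contains it. Laminarity makes this a rooted tree.
\item Map each terminal $s$ to the node of the minimal member of $\cL\cup\{S\}$ containing $s$.
\end{itemize}
By induction on the tree, the terminals mapped to the subtree of node $X$ are exactly $X$. Hence deleting the edge from $X$ to its parent splits the mapped terminals into $X$ and $S\setminus X$. This is precisely the laminar valid partition that produced $X$, which gives the bijection between edges and laminar valid partitions. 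Nodes whose set is exactly the union of their children's sets receive no terminals; these are the empty nodes, which the statement allows.

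\textbf{Step 4 (size bound).} By Step 1, $\cL$ is a laminar family of distinct nonempty proper subsets of a ground set of $d(S)$ classes. A standard counting argument bounds the number of members of such a family by at most $2d(S)-1$ sets. Hence $\cH_S$ has $O(d(S))$ nodes.

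The main point requiring care is Step 2: checking that every containment allowed by the laminarity definition really is non-crossing. I would do this by a short case analysis over which side contains which, reducing each case to an empty intersection or a union equal to $S$. I also need the switch to the $r$-free sides to preserve nestedness, which is exactly what rules out the union case. The remaining steps are routine.
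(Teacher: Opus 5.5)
Your argument is correct, but it takes a different route from the paper. The paper gives no proof of this statement. It imports the skeleton from the Dinitz--Vainshtein connectivity carcass, deferring full proofs to Baswana--Pandey, where the tree is read off from the cactus/strip structure of all $S$-mincuts. You instead derive it from the definitions alone, using two inputs:
\begin{itemize}
\item Laminar valid partitions pairwise do not cross. This is immediate, since each containment $S_i\subseteq S'_j$ is literally the emptiness of the quadrant $S_i\cap S'_{3-j}$. So the case analysis you flag as delicate is a one-liner, and your ``union equals $S$'' alternative is the same condition.
\item Sides of valid partitions are unions of $S$-equivalence classes. You need this only for the $2d(S)-1$ node bound.
\end{itemize}
Everything else is the standard tree representation of a cross-free family, which becomes laminar after fixing $r$. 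The edge--partition bijection is then automatic, because each edge is identified with its child endpoint $X\in\cL$ and induces exactly the split $X, S\setminus X$.

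What the citation buys the paper, and your route does not, is the accompanying projection structure of Theorem~\ref{thm: proj vertex} for non-terminals, which genuinely depends on the carcass. Nothing is lost, though. In any tree satisfying the statement, bijectivity rules out empty leaves, and it rules out empty degree-$2$ nodes because their two incident edges would induce the same split. So, by the uniqueness part of the tree-representation theorem for compatible split systems, your tree is isomorphic to the carcass skeleton via an isomorphism respecting $\pi_S$ on $S$. The projection theorem can therefore be applied to your tree unchanged.
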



\begin{theorem}[Projections]\label{thm: proj vertex}
    Each vertex $v \in V$ has a projection $\pi_S (v)$ in $\cH_S$ (which coincides with the previous definition from~\Cref{thm:skeleton} if $v \in S$), which is a path in the skeleton tree $\cH_S$ (possibly a degenerate path consisting of just a single node), such that the following properties hold:
    \begin{itemize}
        \item The skeleton edges in $\pi_S (v)$ 
        correspond precisely to those laminar valid partitions that distinguish $v$. 

        \item Let $g$ be an edge of $\cH_S$, corresponding to the laminar valid partition $S_1,S_2$.
        If $\pi_S (v)$ is fully contained in the same connected component as $S_1$ in $\cH_S \setminus g$, then every $S_1,S_2$-mincut keeps $v$ with $S_1$.
    \end{itemize}
\end{theorem}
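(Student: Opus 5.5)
The plan is to \emph{define} $\pi_S(v)$ directly from the laminar valid partitions and then prove that this object is a path with the two stated properties. The only tools are \Cref{thm:skeleton}, the sub/posimodularity consequences in \Cref{lem:sub-posi-general}, and the four component lemma (\Cref{lem: four component lemma}).

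\textbf{Classifying edges and defining $\pi_S(v)$.} For an edge $g$ of $\cH_S$ with laminar valid partition $S_1,S_2$, let $N_g$ be the nearest $(S_1,S_2)$-mincut and $F_g$ the farthest one. These are well defined by submodularity. Exactly one of three cases holds:
\begin{itemize}
\item $v\in N_g$: every $S_1,S_2$-mincut keeps $v$ with $S_1$;
\item $v\notin F_g$: every $S_1,S_2$-mincut keeps $v$ with $S_2$;
\item $v\in F_g\setminus N_g$: $g$ distinguishes $v$.
\end{itemize}
In the first two cases, orient $g$ toward the component $T_g$ of $\cH_S\setminus g$ whose terminal set is the side that always contains $v$. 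Then define
\[
\pi_S(v):=\bigcap_{g \text{ not distinguishing } v} T_g.
\]
If $\pi_S(v)$ is a nonempty subtree whose edges are exactly the distinguishing edges, then both bullets of the theorem follow by construction. Indeed, if $\pi_S(v)$ lies in the $S_1$-component of $g$, then $g$ is non-distinguishing and is oriented toward $S_1$, so every $S_1,S_2$-mincut keeps $v$ with $S_1$. For $v\in S$ the definition agrees with the skeleton mapping $\pi_S(v)$ of \Cref{thm:skeleton}: a terminal is never distinguished, and every edge is oriented toward the node of $v$.

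\textbf{Step 1: the sets $T_g$ pairwise intersect.} Suppose two non-distinguishing edges $g,h$ are oriented away from each other. By tree structure, the terminal sets $S_1^g\subseteq T_g$ and $S_1^h\subseteq T_h$ on the sides containing $v$ are disjoint. The cuts $N_g\supseteq S_1^g$ and $N_h\supseteq S_1^h$ both contain $v$. Then:
\begin{itemize}
\item $N_g\setminus N_h$ is an $(S_1^g, S\setminus S_1^g)$-cut, since $N_h\cap S=S_1^h$ is disjoint from $S_1^g$.
\item $N_h\setminus N_g$ is an $(S_1^h,S\setminus S_1^h)$-cut, symmetrically.
\end{itemize}
By the posimodularity part of \Cref{lem:sub-posi-general}, $N_g\setminus N_h$ is a mincut for $g$'s partition that excludes $v$. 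This contradicts $v\in N_g$, since $N_g$ is contained in every such mincut.

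Pairwise-intersecting subtrees of a tree have a common node (the Helly property for subtrees). Hence $\pi_S(v)$ is a nonempty subtree. Its edges are exactly the distinguishing edges, because a non-distinguishing edge cannot lie inside its own half $T_g$.

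\textbf{Step 2: the subtree is a path.} Suppose $\pi_S(v)$ has a node of degree at least $3$, with three incident distinguishing edges $g_1,g_2,g_3$. Let $S^i$ be the terminals mapped beyond $g_i$, away from the branching node. These three sets are pairwise disjoint.

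Since each $g_i$ distinguishes $v$, there is a mincut $A_i$ for $g_i$'s partition with $A_i\cap S=S^i$ and $v\in A_i$. Each $A_i$ is also an $S$-mincut of value $\lambda_S$. Apply \Cref{lem: four component lemma} to $A_1,A_2,A_3$:
\begin{itemize}
\item Each $A_i\setminus(A_j\cup A_k)$ still contains $S^i$ and avoids the other terminals, so it is an $S$-cut of capacity at least $\lambda_S$.
\item $A_1\cap A_2\cap A_3$ contains $v$ and no terminal. It is therefore a nonempty proper subset, so it has positive capacity because $G$ is connected.
\end{itemize}
This gives $3\lambda_S\ge 3\lambda_S + c(A_1\cap A_2\cap A_3)>3\lambda_S$, a contradiction. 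So $\pi_S(v)$ is a path, possibly a single node.

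\textbf{Main obstacle.} I expect the delicate step to be Step 1, namely the consistency of orientations. One must check carefully that posimodularity produces genuine cuts for the correct partitions, which requires the disjointness of $S_1^g$ and $S_1^h$. One must also handle the boundary case where a side set coincides with the complement. The nodes of $\pi_S(v)$ at the ends of the path, including possible empty nodes, need a brief check as well, to confirm that the Helly intersection does not extend past non-distinguishing edges.
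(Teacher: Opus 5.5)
Your construction and most of the verification are sound, but one inclusion in the edge characterization is asserted without proof. The remark that ``a non-distinguishing edge cannot lie inside its own half $T_g$'' only shows that every edge of $\pi_S(v)=\bigcap_g T_g$ distinguishes $v$. You never show the converse: that every distinguishing edge $h$ lies in $T_g$ for \emph{every} non-distinguishing $g$. The Helly step does not give this, since a priori a distinguishing edge could sit beyond some non-distinguishing edge and be cut out of the intersection. The converse is exactly what ``precisely'' in the first bullet requires. Your derivation of the second bullet also uses it: the step ``$\pi_S(v)$ lies in the $S_1$-component of $g$, hence $g$ is non-distinguishing'' is only valid once distinguishing edges are known to belong to $\pi_S(v)$.

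The missing step follows from your Step 1 argument in a slightly more general form. Let $g$ be non-distinguishing, let $X$ be the terminals of $T_g$, and let $N\ni v$ be the nearest $(X,S\setminus X)$-mincut. Take any $S$-mincut $A$ with $A\cap S\subseteq S\setminus X$. Then $N\setminus A$ is an $(X,S\setminus X)$-cut and $A\setminus N$ is an $S$-cut, so by \Cref{lem:sub-posi-general} the set $N\setminus A$ is an $(X,S\setminus X)$-mincut. Nearness then forces $N\cap A=\emptyset$, so $v\notin A$. Step 1 is the case $A=N_h$. For the converse, suppose a distinguishing $h$ is not in $T_g$. Then $T_g$ lies in one component of $\cH_S\setminus h$. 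Since $h$ distinguishes $v$, there is a mincut $A\ni v$ for $h$'s partition (an $S$-mincut, as the partition is valid) whose terminal part is the terminal set of the other component. That terminal set is disjoint from $X$, so $v\notin A$, a contradiction.

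With this added, the trichotomy, the Helly argument, the degree-$3$ exclusion via \Cref{lem: four component lemma}, and the terminal case all go through. Your concern about the ends of the path is moot, because a subtree with at least one edge is determined by its edge set.

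For comparison, the paper gives no proof of this statement. It is imported from the connectivity carcass of Dinitz and Vainshtein, with proofs deferred to the carcass literature (e.g.\ Baswana and Pandey), where the projection is part of the full carcass structure. Your route is more elementary: it derives the projection directly from the tree skeleton of \Cref{thm:skeleton} using only sub/posimodularity and the four component lemma. It also shows that the projection is uniquely forced by the two bullets. It is self-contained only relative to \Cref{thm:skeleton}, which itself comes from the carcass.
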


The following lemma gives a key connection between laminar valid partitions and nearest mincuts:
\begin{lemma}\label{lem:laminar valid partition}
    Let $s,t \in S$ be two terminals that are not $S$-equivalent.
    Then the partition of $S$ defined by $\Near(s,t)$ is a laminar valid partition.
    Furthermore, in the skeleton $\cH_S$, this partition corresponds to the unique skeleton edge incident on the node $\pi_S (s)$ which is in the direction towards the node $\pi_S (t)$.
\end{lemma}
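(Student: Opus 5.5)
Let $\lambda := \lambda_S$. The plan is to show two things in turn: first that $\Near(s,t)$ is an $S$-mincut, so its terminal partition is valid; second that this partition is laminar. After that, the skeleton correspondence is read off from \Cref{thm:skeleton} together with the fact that $\Near(s,t)$ is the minimal $(s,t)$-mincut.

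\emph{Validity.} Since $s,t$ are not $S$-equivalent, some $S$-mincut $C$ separates them; complementing if needed, assume $s\in C$ and $t\notin C$. Every $(s,t)$-cut is an $S$-cut, so $\lambda_{s,t}\ge \lambda$. The cut $C$ is an $(s,t)$-cut of capacity $\lambda$, hence $\lambda_{s,t}=\lambda$. Therefore $\Near(s,t)$ is an $S$-mincut, and the partition $S_1=\Near(s,t)\cap S$, $S_2=S\setminus S_1$ is valid. It also follows that every $(s,t)$-mincut is exactly an $S$-mincut separating $s$ from $t$, which is used below.

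\emph{Laminarity.} Let $S'_1,S'_2$ be another valid partition, realized by an $S$-mincut $D$, and orient it so that $s\in D$. Write $N=\Near(s,t)$.
\begin{itemize}
\item If $t\notin D$, then $D$ is an $(s,t)$-mincut, so $N\subseteq D$ and $S_1\subseteq S'_1$.
\item If $t\in D$, consider $N\setminus D$ and $D\setminus N$. By the ``Posimodularity'' case of \Cref{lem:sub-posi-general}, if both are $S$-cuts then both are $S$-mincuts. The set $D\setminus N$ contains $t$ and misses $s$. If it is an $S$-cut, then $\overline{D\setminus N}$ is an $S$-mincut containing $s$ but not $t$, i.e.\ an $(s,t)$-mincut, so $N\subseteq \overline{D\setminus N}=\overline{D}\cup N$. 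This gives nothing directly, so I would switch to submodularity on $N\cap D$.
\item The set $N\cap D$ contains $s$ and misses $t$. If $N\cup D$ is an $S$-cut, then by the ``Submodularity'' case of \Cref{lem:sub-posi-general}, $N\cap D$ is an $(s,t)$-mincut. Minimality of $N$ then gives $N\subseteq D$, hence $S_1\subseteq S'_1$.
\item Otherwise $N\cup D\supseteq S$, and then $S_2\subseteq D\cap S=S'_1$, which is again a containment.
\end{itemize}
In every case some containment holds among $S_1,S_2,S'_1,S'_2$, so the partition is laminar.

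\emph{Identifying the skeleton edge.} By \Cref{thm:skeleton}, the partition corresponds to a unique edge $g$. Since $s\in S_1$ and $t\in S_2$, the edge $g$ separates $\pi_S(s)$ from $\pi_S(t)$, so it lies on the tree path between them. Let $g_0$ be the first edge on that path, the one incident on $\pi_S(s)$. It also defines a laminar valid partition $A,\overline{A}$ with $s\in A$ and $t\notin A$, and every side-$A$ $S$-mincut of it is an $(s,t)$-mincut. Hence $N\subseteq$ (any such mincut), so $S_1\subseteq A$.

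Conversely, consider the terminals mapped to the component of $\cH_S\setminus g$ containing $\pi_S(s)$. Along the path, the edge $g_0$ is at least as close to $\pi_S(s)$ as $g$, so the $s$-side of $g_0$ is contained in the $s$-side of $g$, i.e.\ $A\subseteq S_1$. Thus $A=S_1$, and by the bijection of \Cref{thm:skeleton}, $g=g_0$.

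One subtlety needs care: empty nodes could make two distinct edges induce the same terminal partition. The bijection in \Cref{thm:skeleton} rules this out. I expect the laminarity argument to be the main obstacle, specifically the bookkeeping of which of $N\cap D$ and $N\cup D$ is an $S$-cut.
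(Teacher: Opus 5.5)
Your proof is correct. The paper itself gives no proof of this lemma: it lists it, together with \Cref{thm:skeleton} and \Cref{thm: proj vertex}, as part of the connectivity-carcass interface imported from Dinitz--Vainshtein, and defers full proofs to the Baswana--Pandey presentation. Your proposal is therefore a self-contained route rather than a reconstruction of an argument in the paper. What it buys is the observation that the lemma needs nothing from the carcass beyond the edge/partition bijection of \Cref{thm:skeleton} and \Cref{lem:sub-posi-general}; in particular it does not use projections or the internal construction of the skeleton.

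The three steps all check out:
\begin{itemize}
\item $\lambda_{s,t}=\lambda_S$, so $\Near(s,t)$ is an $S$-mincut and its partition is valid.
\item For every $S$-mincut $D\ni s$, either $\Near(s,t)\subseteq D$ or $S\setminus\Near(s,t)\subseteq D$. The first holds directly if $t\notin D$, and by submodularity if $t\in D$ and $\Near(s,t)\cup D$ is an $S$-cut. The second holds otherwise. This is exactly the required non-crossing.
\item The edge identification works. You get $S_1\subseteq A$ from minimality of $\Near(s,t)$, applied to any $S$-mincut realizing the partition of $g_0$. You get $A\subseteq S_1$ because $g$ lies on the $\pi_S(s)$--$\pi_S(t)$ path and $g_0$ is its first edge. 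You also correctly invoke the bijection to rule out two distinct edges, separated by empty nodes, inducing the same terminal partition.
\end{itemize}

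One cleanup: delete the posimodularity bullet. It is a dead end, and as written it overstates \Cref{lem:sub-posi-general}. That lemma yields that $D\setminus\Near(s,t)$ is an $S$-mincut only when $\Near(s,t)\setminus D$ is also an $S$-cut. Your submodularity case split already covers the case $t\in D$ completely.
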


\section{Technical Overview}
\label{sec:technical-overview}

We now give an overview of our main data structure, stated in~\Cref{thm:main}, that given query $(x,y) \in V \times V$, can report the nearest mincut $\Near(x,y)$.
For clarity, we focus in the overview on the $O(n)$ space aspect of the data structure and disregard the query time.
The construction of our sensitivity oracles for edge insertion (\Cref{thm : mincut sensitivity data structures for the insertion of an edge}), and the query for reporting a nearest mincut tree for a given source vertex (\Cref{thm : nearest mincut tree reporting}) leverage the same insights, although requiring some further technical complications.

\subsection{The Nearest Mincut Hierarchy} 
\label{sec: skeleton hierarchy}
In this section, we introduce our data structure, the \emph{nearest mincut hierarchy}, for compactly storing all nearest mincuts of $G$. 
The construction of the nearest mincut hierarchy is motivated by an $O(n^2)$ space sensitivity oracle of \cite{DBLP:conf/soda/BaswanaP22}. 
The main structure underlying  their oracle is a combination of the hierarchy tree for mincuts of \cite{KatzKKP04} and the Skeleton of Connectivity Carcass \cite{DBLP:conf/stoc/DinitzV94}. 
This structure, which we refer to as the \textit{Skeleton hierarchy} and describe below, serves as the starting point of our construction. 

\paragraph{Skeleton Hierarchy.} 
The first component, the hierarchy tree $\cT$ of \cite{KatzKKP04}, is defined recursively as follows.
The root of $\cT$ is a node that corresponds to the entire vertex set $V$ (not stored explicitly).
Then, for every node $\mu\in \cT$, its children are the $\mu$-equivalence classes. 
Notice that the tree $\cT$ has exactly $n$ leaves, each corresponding to a vertex $v\in V$, and that each non-leaf node in $\cT$ has at least two children.
Therefore, the total number of nodes in $\cT$ is $O(n)$, and it can be stored in $O(n)$ space. 
For brevity, we use $\mu$ to denote both the node and its corresponding vertex subset.
The \emph{skeleton hierarchy} is then obtained by augmenting each non-leaf node $\mu$ of $\cT$ with a \emph{skeleton} $\cH_\mu$ from \Cref{thm:skeleton} for terminal set $\mu$.
We shall use the following \textit{fact} regarding $\cT$:
\begin{fact}
    \label{fact:lca-mincut-value}
    Let $u,v\in V$ (which are mapped to leaves in $\cT$), the $\LCA_{\cT}(u,v)$-mincut value is the same as $u,v$-mincut value, where $\LCA_{\cT}(u,v)$ is the lowest common ancestor of $u,v$ in $\cT$. 
\end{fact}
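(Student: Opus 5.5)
Let $\mu=\LCA_{\cT}(u,v)$. We prove $\lambda_{u,v}=\lambda_\mu$ by showing both inequalities, using only the recursive definition of $\cT$: the children of a node are the equivalence classes of that node's vertex set, viewed as a terminal set.

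For $\lambda_{u,v}\ge\lambda_\mu$, note that $u,v\in\mu$. Hence every $u,v$-cut separates two vertices of $\mu$ and is therefore a $\mu$-cut. Thus $\lambda_{u,v}\ge\lambda_\mu$.

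For $\lambda_{u,v}\le\lambda_\mu$, observe that since $\mu$ is the LCA and $u\neq v$, the leaves $u$ and $v$ lie in different children of $\mu$. (If $u=v$ the statement is vacuous.) These children are exactly the $\mu$-equivalence classes. So $u$ and $v$ are not $\mu$-equivalent, meaning some $\mu$-mincut $C$ separates $u$ and $v$. This $C$ is a $u,v$-cut of capacity $\lambda_\mu$, so $\lambda_{u,v}\le\lambda_\mu$.

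One point needs care: $\mu$ must have at least two vertices for $\lambda_\mu$ and the equivalence classes to be well defined. This holds because $\mu$ contains both $u$ and $v$. The leaves of $\cT$ are singletons, so every internal node has at least two vertices and is split into at least two classes.

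Combining the two inequalities gives $\lambda_{u,v}=\lambda_\mu$.
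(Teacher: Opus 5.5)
Your proof is correct: the bound $\lambda_{u,v}\ge\lambda_\mu$ holds because every $u,v$-cut is a $\mu$-cut, and the bound $\lambda_{u,v}\le\lambda_\mu$ holds because $u$ and $v$ lie in distinct children of $\mu=\LCA_{\cT}(u,v)$, i.e., in distinct $\mu$-equivalence classes, so some $\mu$-mincut separates them. The paper gives no proof of its own (it records this as a standard property of the hierarchy tree of Katz et al.), and your two-inequality argument from the definitions is the natural justification.
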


Note that the authors in \cite{DBLP:conf/soda/BaswanaP22} use the most general \textit{cactus} version of the Skeleton, whereas our construction, focusing on nearest mincuts, relies only on a simplified tree variant (given in \Cref{thm:skeleton}), resulting in a conceptually simpler structure.
We now argue that the total space used by all the skeletons is also $O(n)$.
Fix some node $\mu\in \cT$.
The mapping $\pi_\mu (\cdot)$ of~\Cref{thm:skeleton} assigns all vertices of a given $\mu$-equivalence class to the same node $\cH_\mu$.%
\footnote{By~\Cref{thm:skeleton}, $u,v \in \mu$ are mapped to different nodes of $\cH_\mu$ if and only if there is a laminar valid partition of $\mu$ that separates them. 
If $u,v$ are $\mu$-equivalent then no such partition exists; otherwise, the partition of $\mu$ by $\Near(u,v)$ is such by~\Cref{lem:laminar valid partition}.}
Thus, the non-empty nodes of $\cH_{\mu}$ are exactly the children of $\mu$ in $\cT$.
Using \Cref{thm:skeleton}, it follows that the size of $\cH_\mu$, including empty nodes, is $O(d_\cT(\mu))$, where $d_\cT(\mu)$ is the degree of $\mu$ in $\cT$ (which is its number of children plus one).
Therefore, denoting the set of internal nodes in $\cT$ by $\cT_{int}$, the total size of all skeletons is $\sum_{\mu\in \cT_{int}} |\cH_{\mu}| = \sum_{\mu\in \cT_{int}} d_{\cT}(\mu) = O(n)$, where the second equality is since $\cT$ has $O(n)$ nodes.

\paragraph{A high-level view of our data structure.} 
The Skeleton hierarchy alone captures only a \textit{small} portion of $\Near(x,y)$. 
It can reveal only how $\Near(x,y)$  partitions the $\LCA_{\cT}(x,y)$.
However, as $\LCA_{\cT}(x,y)$ is usually a small subset of $V$, this information is far from sufficient to recover $\Near(x,y)$.
We illustrate this next in the base case (\Cref{sec : base case}). 
One simple approach to report all of $\Near(x,y)$ is to store, for every vertex, its projection onto every augmented skeleton in the Skeleton hierarchy as was done in \cite{DBLP:conf/soda/BaswanaP22}. 
Unfortunately, this uses $O(n^2)$ space in the worst case.

Our main contribution is reporting $\Near(x,y)$, while preserving the $O(n)$ space bound, by augmenting the Skeleton hierarchy with the following major components. 
\begin{enumerate}
    \item A carefully chosen set of only $O(n)$ vertex projections, each requiring $O(1)$ space. 
    These projections resolve the majority of non-trivial cases arising in our query algorithms.
    \item A new tool for nearest mincuts, called \textit{chain maximizers}: ordered vertex pairs whose nearest mincuts are containment-maximal in regard to a certain cut family, which is introduced in \Cref{sec: overview generic step}. 
    These are the key for handling the remaining difficult cases in our query algorithms.  
    Specifically, two chain maximizers are stored alongside each stored vertex projection, with each requiring only $O(1)$ space.
\end{enumerate}
While this data structure is sufficient for reporting $\Near(x,y)$, to improve the query time to $O(n)$, we also augment it with standard tree data structures for lowest common ancestors ($\LCA$) and level ancestor (LA) queries. 
The \textit{nearest mincut hierarchy} is the data structure that consists of all the above components.
Its space bound is summarized in the following lemma, whose proof is given in \Cref{sec: data structure}.
\begin{lemma} \label{lem: nearest mincut hierarchy space}
    The nearest mincut hierarchy of $G$ occupies $O(n)$ space. 
\end{lemma}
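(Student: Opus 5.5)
The plan is to bound the space of each component of the nearest mincut hierarchy separately and add the bounds.

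The first component is the hierarchy tree $\cT$. Every internal node of $\cT$ has at least two children and $\cT$ has exactly $n$ leaves, so $\cT$ has at most $2n-1$ nodes and uses $O(n)$ space.

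The second component is the family of skeletons. For each internal node $\mu$, the non-empty nodes of $\cH_\mu$ are exactly the children of $\mu$. By \Cref{thm:skeleton}, $|\cH_\mu| = O(d_\cT(\mu))$. Summing gives $\sum_{\mu\in\cT_{int}} d_\cT(\mu) = O(|\cT|) = O(n)$, as argued in the overview. The mapping $\pi_\mu(\cdot)$ need not be stored per vertex: it suffices to store, for each child of $\mu$, the skeleton node it maps to, which is $O(1)$ per edge of $\cT$. The $\LCA$ and level-ancestor structures on $\cT$ (and, if used, on each skeleton tree) are standard and occupy linear space in the size of the tree they index, so they contribute $O(n)$ in total.

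The third component is the set of stored vertex projections and their chain maximizers, and this is the main step. I would show that the stored projections can be charged injectively, or with $O(1)$ multiplicity, to nodes or edges of $\cT$. The natural rule is that a vertex $v$ stores its projection $\pi_\mu(v)$ only for a bounded number of specific ancestors $\mu$ of $v$, such as the parent of the node in $\cT$ whose subtree $v$ is being charged to. The intended accounting is one projection per (child, parent) edge of $\cT$, or per skeleton edge.

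Each projection is a path in $\cH_\mu$, so it is described by its two endpoints and takes $O(1)$ space. Each has exactly two chain maximizers attached, each an ordered vertex pair, also $O(1)$. Hence the third component costs $O(\#\text{projections}) = O(|\cT| + \sum_\mu |\cH_\mu|) = O(n)$.

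The obstacle is verifying this charging against the precise selection rule in \Cref{sec: data structure}. I would first try charging each stored projection to the tree edge $(\mu, \text{child})$ that defines it. I would then confirm that each such edge receives only $O(1)$ stored projections. Adding the three $O(n)$ bounds gives the lemma.
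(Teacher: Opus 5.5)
Your bounds for $\cT$, the skeletons, and the $\LCA$/level-ancestor structures are fine. The gap is in the third component, which is the only nontrivial part of the lemma. You leave the counting of stored projections unverified, and the charging you say you would try first does not work.

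The stored projections are not indexed by vertices or by edges of $\cT$. For each child $\mu$ of a node $\nu$, there is one representative, and hence one stored projection with two maximizer pairs and two flag bits, for \emph{every} $\mu$-subtree of $\cH_\nu$. Equivalently, there is one for every skeleton edge of $\cH_\nu$ incident to $\mu$. So the $\cT$-edge $(\nu,\mu)$ carries $\deg_{\cH_\nu}(\mu)$ projections, and this can be $\Theta(d_\cT(\nu))$. For example, take a heavy edge $c_1c_2$ with many unit-weight leaves attached to $c_1$: then $\cH_V$ is a star whose center is the nonempty internal node $\{c_1,c_2\}$. Hence ``$O(1)$ per edge of $\cT$'' is false.

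Your ``natural rule'' also misdescribes the construction. The representative $v$ lies in $\nu\setminus\mu$, so $\mu$ is not an ancestor of $v$. Moreover, the same vertex may serve as representative for many pairs $(\mu,A)$, so no per-vertex bound is available, and none is needed.

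The missing counting step is
\[
\sum_{\mu \text{ child of } \nu}\deg_{\cH_\nu}(\mu) \;\le\; \sum_{q\in \cH_\nu}\deg_{\cH_\nu}(q) \;=\; 2(|\cH_\nu|-1) \;=\; O(d_\cT(\nu)).
\]
In other words, charge each stored projection to the skeleton edge joining $\mu$ to its subtree; each such edge is charged at most twice. Summing over internal $\nu$ then gives $O(n)$, exactly as in the paper.

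You should also note that \textsc{Store-Projection} stores endpoints only once, at the node where its descent in $\cT$ terminates, since the intermediate projections are recoverable from ancestry. So each representative costs $O(1)$ even though the descent path may be long.

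Of the two accountings you list, only the per-skeleton-edge one is correct. Your final formula $O(|\cT|+\sum_\mu|\cH_\mu|)$ is right, but as written the proposal defers exactly the step that needs an argument.
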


Our goal for the rest of the technical overview is to give intuition on how to use the nearest mincut hierarchy to find $\Near(x,y)$ for any query $(x,y) \in V \times V$. 
Along the way, we introduce the above two augmentations to the Skeleton hierarchy, and explain why they preserve the overall $O(n)$ space bound.
This is summarized in the following lemma, whose proof is provided in \Cref{sec:near-minimum-cut-query}. 
It is clear that \Cref{thm:main} is established immediately by combining \Cref{lem: nearest mincut hierarchy space} with the following \Cref{lem: nearest mincut hierarchy query}.
\begin{lemma}
    \label{lem: nearest mincut hierarchy query}
    The nearest mincut hierarchy of $G$ can report $\Near(x,y)$ in $O(n)$ time for any given ordered pair of vertices $(x,y)\in V\times V$.  
\end{lemma}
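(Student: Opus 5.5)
We would prove this by designing the query algorithm explicitly and arguing correctness by induction along the root-to-leaf path in $\cT$. Given $(x,y)$, we compute $\mu=\LCA_{\cT}(x,y)$ in $O(1)$ time. By \Cref{fact:lca-mincut-value}, $\lambda_{x,y}=\lambda_\mu$. Moreover, $x,y$ lie in different children of $\mu$, so they are not $\mu$-equivalent. Hence every $(x,y)$-mincut is a $\mu$-mincut, and by \Cref{lem:laminar valid partition} the partition $\mu_1,\mu_2$ of $\mu$ induced by $\Near(x,y)$ is the laminar valid partition of the edge $g$ of $\cH_\mu$ leaving $\pi_\mu(x)$ towards $\pi_\mu(y)$. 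Consequently $\Near(x,y)=\Near(\mu_1,\mu_2)$. The query therefore reduces to reporting the nearest $(\mu_1,\mu_2)$-mincut for a skeleton edge $g$, i.e.\ deciding for each $v\in V$ whether $v$ lies on the $\mu_1$ side. For every $v$ whose projection $\pi_\mu(v)$ is known and lies entirely on the $\mu_1$ side (respectively $\mu_2$ side) of $g$, \Cref{thm: proj vertex} fixes the answer. If $g\in\pi_\mu(v)$, then $v$ is distinguished, and the nearest cut places it on the $\mu_2$ side.

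The difficulty is that projections of all vertices onto $\cH_\mu$ are not stored. The plan is to walk down $\cT$ instead. Every $v\notin\mu$ lies in some ancestor-sibling subtree: let $\nu$ be the ancestor of $\mu$ for which $v$ falls in a child $\nu'$ of $\nu$ other than the one containing $\mu$. We would argue, by submodularity (\Cref{lem:sub-posi-general}) applied to $\Near(\mu_1,\mu_2)$ and the $\nu$-mincuts, that $\Near(\mu_1,\mu_2)$ treats whole "blocks" uniformly. These blocks are the vertices of an equivalence class of a node other than the path nodes, as well as vertices with equal stored projection data. For this to be a valid bound, note that $\lambda_\mu>\lambda_\nu$ for strict ancestors, and a $\mu$-mincut cannot cut a $\nu$-equivalence class $\nu'$ disjoint from $\mu$ without contradicting minimality. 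This follows via posimodularity combined with the four component lemma (\Cref{lem: four component lemma}). Thus it suffices to decide one representative per block. Each representative is exactly one of the $O(n)$ stored projections: for each child $\mu'$ of an internal node we store the projection of $\mu'$, or of its parent set, onto the relevant skeleton. The decision for the representative then follows from \Cref{thm: proj vertex}, and we propagate the answer to its subtree by traversing $\cT$. Every node is visited $O(1)$ times, and each step is an $O(1)$ LCA/LA query or a comparison of a path in $\cH_\mu$ with edge $g$ (via precomputed tree orderings on each skeleton). Summing gives $O(n)$ time.

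The main obstacle is the case where the stored projection data does not determine a block's side. This happens when the relevant block's projection path touches the endpoints of $g$ but the block is not uniformly placed, so a single edge-containment test is insufficient. Here we would invoke the two chain maximizers stored with each projection: vertex pairs $(a,b)$ whose $\Near(a,b)$ is containment-maximal among the nearest mincuts in the chain corresponding to the skeleton edges along that path. The plan is to show, via submodularity, that $\Near(\mu_1,\mu_2)\cap B$ equals $\Near(a,b)\cap B$ for one of the two maximizers, for the block $B$ under consideration. Recursing on $(a,b)$ then leads to an LCA strictly deeper in $\cT$, or to a strictly smaller block, which ensures termination. The delicate part is bounding the total recursion work by $O(n)$: we would charge each recursive call to the block it resolves, and argue that the blocks resolved are disjoint. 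I expect this amortization, together with the correctness of the maximizer substitution, to be the hardest step.
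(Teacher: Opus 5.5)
Your reduction to $\Near(\mu_1,\mu_2)$ at $\mu=\LCA_\cT(x,y)$ is fine, but there is a genuine gap: the hard case rests on a false claim. You propose that $\Near(\mu_1,\mu_2)\cap B=\Near(a,b)\cap B$ for one of the two stored maximizers. By the definition of the families $N_v^{T_\alpha}$ and $N_v^{T_\beta}$ in \Cref{thm : chain-maximizer}, the representative $v$ lies outside \emph{both} maximizers' cuts. So that equality would force $v\notin\Near(x,y)$, which fails in configuration (III), for instance whenever $z\in\Near(x,y)$.

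The maximizer is a threshold in a chain, not a surrogate cut: $v\in\Near(x,y)$ iff $\Near(x,y)\not\subseteq\Near(w,z)$. The paper decides this in $O(1)$ time with no recursion (\Cref{lem: case ii configuration III z is in or wz out}):
if $z\in\Near(x,y)$, then $v$ is in;
if $w,z\notin\Near(x,y)$, then $v$ is out;
otherwise $v\in\Near(x,y)$ iff $x\notin\Near(w,z)$.
The first two tests are read off the already computed $\Near(x,y)\cap\alpha_j$, since $w,z\in\pi_{\alpha_j}(x)$. The last is read from the skeleton of $\LCA_\cT(w,z)$, because \Cref{lem: xy are terminals in the lca of wz} makes $x$ a terminal there. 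This also undercuts your termination argument. In the only case where $(w,z)$ matters, $\LCA_\cT(w,z)$ is an ancestor of (or equal to) $\LCA_\cT(x,y)$, not strictly deeper. So the proposed recursion and its $O(n)$ amortization have no support.

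The ``easy'' part also has a gap. For representatives hanging off the path $\alpha_1=\mu,\alpha_2,\dots,\alpha_k$, the projections you can afford to store are in general onto skeletons $\cH_{\alpha_j}$ of strict ancestors of $\mu$, so you never have $\pi_\mu(v)$. Since $\lambda_{\alpha_j}<\lambda_{\alpha_1}=\lambda_{x,y}$ (\Cref{claim:mincut-value-increases}), $\Near(x,y)$ is not an $\alpha_j$-mincut. Hence neither \Cref{thm: proj vertex} nor your edge test against $g$ in $\cH_\mu$ applies.

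What is needed is \Cref{thm: characterization of skeleton subtrees}: classify $\pi_{\alpha_j}(v)$ against the in/out $\pi_{\alpha_j}(x)$-subtrees. Configurations (I) and (II) each need their own submodularity or four-component arguments (\Cref{lem: fully contained in case}, \Cref{lem: two in edges}, \Cref{lem: projection touches out edge}). This classification requires $\Near(x,y)\cap\alpha_j$ to be known already. That is why the algorithm must build the cut bottom-up along $\alpha_1,\dots,\alpha_k$, locating $j$ by an LCA query between $\alpha_1$ and the node storing $v$'s projection.

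Finally, your blocks (siblings of path nodes) cannot each carry the needed projection in query-independent $O(n)$ space. You would need a projection of each sibling onto the skeleton of every other sibling that might lie on a query path, which can be $\Omega(n^2)$. The paper's bound instead comes from the coarser $\mu$-subtrees of $\cH_\nu$ (\Cref{lem: subtree classification}), with one stored projection per subtree.
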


\subsection{The Base Case : \texorpdfstring{$\LCA_\cT(x,y)$}{LCA(x,y)}} \label{sec : base case}
Let us start by presenting the most basic connection between the hierarchy tree $\cT$ of the Skeleton hierarchy to $\Near(x,y)$, which serves as the starting point for our query algorithm.
Consider the node $\mu = \LCA_{\cT}(x,y)$. 
Recall from \Cref{fact:lca-mincut-value} that the $\mu$-mincut value is the same as the $x,y$-mincut value. 
Hence, $x,y$ are in different $\mu$-equivalence classes, namely $\pi_\mu (x)$ and $\pi_\mu (y)$, both of which are nodes in $\cH_{\mu}$.
Thus, by~\Cref{lem:laminar valid partition}, we can find out how $\mu$ is partitioned by $\Near(x,y)$ from the skeleton $\cH_\mu$.
Letting $g$ be the skeleton edge incident on $\pi_\mu (x)$ in the direction of $\pi_\mu (y)$, we have that $\mu \cap \Near(x,y)$ consists of (the vertices of $\mu$ mapped to) the connected component of $\pi_\mu (x)$ in $\cH_\mu \setminus g$.

Our high-level strategy for the query algorithm is to ``build-up'' $\Near(x,y)$, starting from $\mu = \LCA_\cT (x,y)$ and going upwards along the path $(\mu = \alpha_1, \alpha_2, \dots, \alpha_k = V)$ to the root of $\cT$.
We start by determining $\Near(x,y) \cap \alpha_1$ as was just described.
Then, we work our way up in \emph{steps}: at each step $i= 2,3, \dots, k$, we expand our knowledge from $\Near(x,y) \cap \alpha_{i-1}$ to the superset $\Near(x,y) \cap \alpha_i$.
Thus, after the final step $k$ is completed, we know the entirety of $\Near(x,y)$.
While the skeleton hierarchy suffices for the initialization of the query algorithm (i.e., finding $\Near(x,y) \cap \alpha_1$),
executing the generic step from $\alpha_{i-1}$ to $\alpha_i$ is the main challenge.
To do so, we have to use the full information stored in the nearest mincut hierarchy and leverage delicate structural insights.

\subsection{The First Step : Parent of \texorpdfstring{$\LCA_\cT(x,y)$}{LCA(x,y)}} \label{sec: overview first step}
To gain intuition, we start by tackling the first step of the query process, going from $\mu = \alpha_1$ to its parent $\nu \coloneqq \alpha_2$.
Namely, our goal for now is extending $\Near(x,y) \cap \mu$ to $\Near(x,y) \cap \nu$.

\paragraph{A succinct use of projections.}
Consider any sibling node $\eta$ of $\mu$ in $\cT$ (i.e., $\eta$ also has $\nu$ as its parent in $\cT$).
Using submodularity, we obtain that $\eta$ is either entirely contained in $\Near(x,y)$ (``in'') or entirely disjoint from it (``out'') (the proof is given in Appendix \ref{sec: missing proofs} (\Cref{lem: mincut does not separate any incomparable node})).
Hence, it suffices to classify each sibling $\eta$ as in or out.
To this end, we need to utilize not only skeletons, but also \emph{projections} (see \Cref{thm: proj vertex}).

Assume we are given the projection $\pi_\mu (v)$ in the skeleton $\cH_\mu$ of an arbitrary $v \in \eta$.
Recall that $\Near(x,y) \cap \mu$ corresponds to a connected component of $\cH_\mu \setminus g$ as discussed before.
By (the second item) of~\Cref{thm: proj vertex}, $v \in \Near(x,y)$ if and only if $\pi_\mu (v)$ lies completely inside this connected component.
This, crucially, relies on the fact that $\mu$ is the LCA of $x,y$ in $\cT$, which implies that $\Near(x,y)$ is a $\mu$-mincut.
Therefore, to classify $\eta$, it is sufficient to store $\pi_\mu (v)$ for some arbitrary $v \in \eta$, which uses additional $O(1)$ space.%
\footnote{The projection $\pi_\mu (v)$ is uniquely determined by its endpoints in the skeleton tree $\cH_\mu$, which is already stored.} 
However, storing a pair of projections for every pair of siblings $\mu, \eta$, would use $\sum_{\nu} \binom{d_{\cT}(\nu)}{2}$ space, which might be $\Omega(n^2)$ and hence too expensive \footnote{Consider an even cycle $u_1,\ldots,u_n$ with alternating $2$ and $1$ edge weights, $w(u_1,u_2)=2$. The root of the hierarchy has children $\{u_{2i-1},u_{2i}\}$ for $i \in [1,n/2]$. If for each pair of them $\mu_1, \mu_2$, we store a projection from $\mu_1$ onto $\cH_{\mu_2}$, we would get $\Omega(n^2)$ space.}.
Instead, our data structure stores a small set of only $O(n)$ projections, which we show to suffice as follows.

Our first tool towards reducing the number of stored projections is a simple (yet crucial) structural lemma, which lets us ``compress'' several siblings $\eta$ together.
Consider the skeleton of the parent $\cH_\nu$, and recall that all children of $\nu$ are nodes in this skeleton.
We call the connected components of $\cH_\nu\setminus\mu$ the \emph{$\mu$-subtrees} of $\cH_\nu$, and observe that they partition all of $\mu$'s siblings among them.
The key insight is that any $\mu$-subtree can contain siblings of only one type: in or out.
This is formalized in the following lemma.
From now on, we slightly abuse notation and identify a $\mu$-subtree $A$ in $\cH_\nu$ also with the set of vertices in $\nu$ mapped to it (i.e., with the union of $\mu$'s siblings contained in $A$).

\begin{lemma}[Subtree classification]\label{lem: subtree classification}
    Let $A$ be a $\mu$-subtree of $\cH_\nu$, and consider two vertices $x,y \in \mu$.
    Then either $A \subseteq \Near(x,y)$ (``$A$ is an in-$\mu$-subtree'') or $A \cap \Near(x,y) = \emptyset$ (``$A$ is an out-$\mu$-subtree''). 
    The edge connecting $\mu$ with an in-$\mu$-subtree (resp., out-$\mu$-subtree) is called an in-edge (resp., out-edge) of $\mu$.
\end{lemma}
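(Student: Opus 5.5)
We argue by contradiction: if $A$ contained both an in-sibling and an out-sibling, we would cut $\Near(x,y)$ down to a strictly smaller $(x,y)$-mincut. The tool is a $\nu$-mincut whose restriction to $\nu$ is exactly $A$, combined with $\Near(x,y)$ via posimodularity (\Cref{lem:sub-posi-general}).

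\emph{Step 1: the $\nu$-mincut $C$.} Since $A$ is a connected component of $\cH_\nu \setminus \mu$, there is a unique skeleton edge $g$ of $\cH_\nu$ joining the node $\mu$ to $A$. Removing $g$ separates exactly the terminals mapped to $A$ from the rest of $\nu$, which includes $\mu$. By \Cref{thm:skeleton}, $g$ corresponds to a laminar valid partition $(A\cap\nu,\ \nu\setminus A)$ of $\nu$. Fix a $\nu$-mincut $C$ realizing it, so $C\cap\nu = A$ (identifying $A$ with its vertices). In particular $x,y\notin C$, because $x,y\in\mu$.

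\emph{Step 2: reduce to siblings.} By the fact quoted just before the lemma (\Cref{lem: mincut does not separate any incomparable node}), every sibling $\eta\subseteq A$ of $\mu$ is either contained in $\Near(x,y)$ or disjoint from it. So it suffices to rule out the existence of two siblings $\eta_1,\eta_2\subseteq A$ with $\eta_1\subseteq \Near(x,y)$ and $\eta_2\cap\Near(x,y)=\emptyset$. Suppose, for contradiction, that such $\eta_1,\eta_2$ exist.

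\emph{Step 3: apply posimodularity.} We check the hypotheses of the posimodularity part of \Cref{lem:sub-posi-general}, with $\Near(x,y)$ an $(x,y)$-mincut and $C$ a $\nu$-mincut.
\begin{itemize}
\item $\Near(x,y)\setminus C$ contains $x$ (as $x\notin C$) and avoids $y$, so it is an $(x,y)$-cut.
\item $C\setminus \Near(x,y)$ contains the vertices of $\eta_2\subseteq\nu$ and avoids $\mu\subseteq\nu$ (as $C\cap\mu=\emptyset$), so it is a $\nu$-cut.
\end{itemize}
Hence $\Near(x,y)\setminus C$ is an $(x,y)$-mincut. It is contained in $\Near(x,y)$ and misses the nonempty set $\eta_1\subseteq C\cap \Near(x,y)$. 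This contradicts the definition of $\Near(x,y)$ as the inclusion-wise minimal $(x,y)$-mincut (\Cref{def: nearest and farthest mincuts}). Therefore all siblings inside $A$ have the same type, which gives the dichotomy $A\subseteq\Near(x,y)$ or $A\cap\Near(x,y)=\emptyset$.

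\emph{Main obstacle.} The delicate point is choosing the right $\nu$-mincut. If we picked an edge on the path between $\eta_1$ and $\eta_2$ inside $A$, its non-$\mu$ side might contain only in-nodes or only out-nodes, and neither submodularity nor posimodularity would then yield a contradiction. Using the edge $g$ incident to $\mu$ ensures that $C\cap\nu$ is all of $A$. So $C$ simultaneously contains an out-node, which makes $C\setminus\Near(x,y)$ a $\nu$-cut, and an in-node, which makes the shrinking strict, while avoiding $x$ and $y$.
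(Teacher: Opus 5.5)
Your proof is correct and is essentially the paper's argument. Applying posimodularity to $\Near(x,y)$ and your cut $C$ (with $C\cap\nu=A$) is the same as applying submodularity to $\Near(x,y)$ and $\overline{C}$, which is exactly the $\nu$-mincut the paper uses (it has $\mu$ on its side and $A$ outside) to conclude $\Near(x,y)\subseteq\overline{C}$. The one difference is that your Step 2 reduction to siblings via \Cref{lem: mincut does not separate any incomparable node} is unnecessary: the same computation works directly with any vertex of $A$ inside $\Near(x,y)$ and any vertex of $A$ outside it.
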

\begin{proof}
    Suppose $A \not\subseteq \Near(x,y)$.
    By the properties of $\cH_\nu$ (\Cref{thm:skeleton}), there is a $\nu$-mincut $C$ such that $\mu \subseteq C$ and $A \subseteq \overline{C}$.
    Then $\Near(x,y) \cup C$ is a $\nu$-cut (separating $\mu$ from some node in $A$), and $\Near(x,y) \cap C$ is an $(x,y)$-cut.
    Hence, by submodularity (\Cref{lem:sub-posi-general}), the latter is an $(x,y)$-mincut.
    We conclude that $\Near(x,y) \subseteq C$ by definition of nearest mincut, and hence $A \cap \Near(x,y) = \emptyset$.
\end{proof}

Note that the above lemma holds even if $\mu$ is not the LCA of $x,y$, but any node $\alpha_i$ containing both of $x,y$.
We use this fact later for extending the cut from $\alpha_{i}$ to $\alpha_{i+1}$ for general $i$.
In light of~\Cref{lem: subtree classification}, it is sufficient to store the following projections in our data structure: For every child $\mu$ of a node $\nu$, and every $\mu$-subtree $A$ of $\cH_\nu$, choose an arbitrary \emph{representative vertex} $v \in A$, and store $\pi_\mu (v)$.
The total space used is,
\begin{equation} \label{eq : size of all projections}
    \sum_{\nu \text{ non-leaf}} \, \sum_{\mu \text{ child of } \nu} \deg_{\cH_\nu} (\mu)
\leq
\sum_{\nu \text{ non-leaf}} O(d_{\cT}(\nu))
\leq O(n),
\end{equation}
where the first inequality is because $\cH_\nu$ is of size $O(d_{\cT}(\nu))$, and the second is by the previous discussion on the skeleton hierarchy. 
At query time, these projections let us determine which representative vertices of $\mu$-subtrees are inside $\Near(x,y)$.
Thus, we can classify all the $\mu$-subtrees (and hence all of $\mu$'s siblings) as in or out of $\Near(x,y)$.
This concludes the discussion of the first step, extending $\Near(x,y)$ from its intersection with the LCA of $x,y$ in $\cT$, to its intersection with the parent of this LCA.

\subsection{The Generic Steps : Any ancestor of \texorpdfstring{$\LCA_\cT(x,y)$}{LCA(x,y)}}
\label{sec: overview generic step}
We now provide an overview of the most generic step: extending $\Near(x,y)\cap\alpha_{i}$ to $\Near(x,y)\cap\alpha_{i+1}$, for some $i> 1$. 
Let $\mu=\alpha_i$ and its parent $\nu=\alpha_{i+1}$. 
Henceforth, let $\alpha=\pi_\mu(x)$, i.e. the node in $\cH_\mu$ containing $x$.

\paragraph{Characterization of Skeleton subtrees.} 
By \Cref{lem: subtree classification}, to retrieve $\Near(x,y)\cap \nu$, it suffices to classify each of the $\mu$-subtree in $\cH_\nu$.
Let $v$ be a representative vertex of some yet unclassified $\mu$-subtree  in $\cH_\nu$. 
Notice that we know which of the $\alpha$-subtrees of $\cH_\mu$ are ``in" and ``out" for $\Near(x,y)$, since we know $\Near(x,y)\cap \mu$.
Furthermore, we have $\pi_\mu(v)$, the stored projection of $v$ on $\cH_\mu$.
We now explore all the possible configurations that can arise for $\pi_\mu(v)$ (refer to \Cref{fig:configurations}) with respect to the in/out $\alpha$-subtrees.

\begin{enumerate}[label=(\textbf{\Roman*})]
     \item Both endpoints of $\pi_{\mu}(v)$ belong to in-$\alpha$-subtrees. 
     \item At least one endpoint of $\pi_\mu(v)$ belongs to an out-$\alpha$-subtree.
     \item One endpoint of $\pi_\mu(v)$ is $\alpha$ and the other endpoint belongs to an in-$\alpha$-subtree.
     \item  The projection $\pi_\mu(v)=\alpha$.
 \end{enumerate}
To build the intuition behind our approach, let us revisit the easy first step, when $\mu=\LCA_\cT(x,y)$, with this subtree classification perspective.
Here, $\Near(x,y)$ is itself a $\mu$-mincut, and thus, it induces a valid partition of $\mu$. 
It follows that exactly one of all the $\alpha$-subtrees is an ``out" subtree, namely the one containing $y$, and the remaining subtrees are in-$\alpha$-subtrees for $\Near(x,y)$.
Furthermore, $\alpha$ is completely contained in $\Near(x,y)$. 
In this setting, we easily classify $v\in \Near(x,y)$ in cases (I), (III), and (IV), as both endpoints (and hence the entire projection $\pi_{\mu}(v)$) are contained in $\Near(x,y)$.
Otherwise, $v\notin  {\Near(x,y)}$ for case (II) as $\pi_{\mu}(v)$ \textit{touches} ``out" subtree.
Both results follow immediately from \Cref{thm: proj vertex}. 
In summary, in the first step, a $\mu$-subtree can be classified as ``in" or ``out" based on only the projection of its representing vertex $v$ onto the known classification of $\alpha$-subtrees in skeleton $\cH_\mu$.
We now extend this classification framework to the generic step where $\mu$ is a strict ancestor of $\LCA_\cT(x,y)$. 

We refer the reader to \Cref{fig:genstep} for an illustration of the process. 
Recall that we do not have projection $\pi_{\LCA_{\cT}(x,y)}(v)$ due to space constraints. 
Instead, we rely on the projection $\pi_{\mu}(v)$ in $\cH_\mu$.
One immediate hurdle is that $\Near(x,y)$ has a strictly larger mincut value than $\mu$-mincut, and thus no longer induces a valid partition of $\mu$.
In particular, since $x,y\in \alpha$, it is straightforward to see that $\alpha$ is neither completely ``in" nor ``out" of $\Near(x,y)$. 
Since $\Near(x,y)$ is no longer a $\mu$-mincut, it is impossible to directly apply \Cref{thm: proj vertex} to determine whether a $\mu$-subtree of $\cH_\nu$ is in or out.
Fortunately, we are able to show that the intuition from the first step survives intact for configurations (I) and (II).
This requires a careful application of the submodularity property (and its extensions) on $\Near(x,y)$ and $\mu$-mincuts.
A taste of these results is provided in the following lemma for one of the cases from configuration (I).
\begin{lemma} \label{lem: fully contained in case}
    Suppose both endpoints of $\pi_\mu(v)$ are in the same in-$\alpha$-subtree. Then, $v\in \Near(x,y)$.
\end{lemma}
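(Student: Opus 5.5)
Let $A$ be the in-$\alpha$-subtree of $\cH_\mu$ containing both endpoints of $\pi_\mu(v)$, and let $g$ be the skeleton edge joining $\alpha$ to $A$. Since $\cH_\mu$ is a tree and $A$ is a connected component of $\cH_\mu\setminus\alpha$, the whole path $\pi_\mu(v)$ lies inside $A$, so it is contained in the component of $\cH_\mu\setminus g$ on the $A$ side. Let $\mu_1,\mu_2$ be the laminar valid partition of $\mu$ given by $g$, with $\mu_1$ the terminals mapped to $A$ and $\mu_2\ni x,y$ (both $x,y\in\alpha$). By the second item of \Cref{thm: proj vertex}, every $(\mu_1,\mu_2)$-mincut contains $v$. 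Since the partition is valid, any such cut is a $\mu$-mincut. So it suffices to exhibit one $(\mu_1,\mu_2)$-mincut contained in $\Near(x,y)$ up to a harmless modification, or to argue by contradiction.

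I would argue by contradiction. Suppose $v\notin\Near(x,y)$, and let $C$ be any $(\mu_1,\mu_2)$-mincut, so $v\in C$ and $x,y\notin C$. Because $A$ is an in-subtree, $\mu_1\subseteq\Near(x,y)$. Consider $N=\Near(x,y)$ together with $C$. Apply the posimodularity part of \Cref{lem:sub-posi-general}. The set $N\setminus C$ contains $x$ and not $y$, so it is an $(x,y)$-cut. The set $C\setminus N$ contains $v$, so it is nonempty. Its complement contains $x,y\in\mu$, so it is a $\mu$-cut provided it meets $\mu$. Here we must use $C$ as a $\mu$-mincut.

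The complication is that $C\setminus N$ might contain no terminal of $\mu$, since $\mu_1\subseteq N$. To get around this, use submodularity instead. The set $N\cup C$ contains $\mu_1$ and $x$ but not $y$, so it is a $\mu$-cut (indeed it is a $(\mu_1\cup\{x\},\{y\})$-cut). The set $N\cap C$ contains $\mu_1$, but it does not contain $x$, so it is not an $(x,y)$-cut. The roles therefore need swapping: take $N\cup C$ as the $(x,y)$-cut and $N\cap C$ as the $\mu$-cut. $N\cap C\supseteq\mu_1\neq\emptyset$, and it misses $x\in\mu$, so it is a $\mu$-cut. $N\cup C$ contains $x$ and misses $y$ (since $y\notin N$ and $y\notin C$), so it is an $(x,y)$-cut. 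By \Cref{lem:sub-posi-general}, $N\cap C$ is a $\mu$-mincut and $N\cup C$ is an $(x,y)$-mincut.

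The key remaining step, and the main obstacle, is to show that $N\cap C$ induces the same partition $\mu_1,\mu_2$ of $\mu$, i.e.\ $N\cap C\cap\mu=\mu_1$. This holds because $C\cap\mu=\mu_1$ and $\mu_1\subseteq N$. Hence $N\cap C$ is a $(\mu_1,\mu_2)$-mincut, and by the second item of \Cref{thm: proj vertex} it contains $v$, so $v\in N$. This contradicts the assumption and proves $v\in\Near(x,y)$. In fact the argument is direct, with no need for the contradiction framing. I would double-check only that $\mu_1\subseteq N$ follows from $A$ being in (by definition of in-$\alpha$-subtree relative to the known $\Near(x,y)\cap\mu$) and that $x,y\in\mu_2$.
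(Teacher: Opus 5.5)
Your proposal is correct and is essentially the paper's proof: apply submodularity to a $\mu$-mincut $C$ realizing the valid partition of $g$ and to $\Near(x,y)$, conclude that $C\cap\Near(x,y)$ is a $\mu$-mincut inducing the same partition $\mu_1,\mu_2$, and then use the second item of \Cref{thm: proj vertex} to force $v\in C\cap\Near(x,y)$. The contradiction framing and the posimodularity attempt are unnecessary, as you note; also, $y\notin C$ follows directly from $\mu_1\subseteq\Near(x,y)$ and $y\notin\Near(x,y)$, which is how the paper argues it, so you do not need $y\in\alpha$.
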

\begin{proof}
    Let $A$ be the $\alpha$-subtree that contains both endpoints of $\pi_\mu(v)$. 
    Since $\pi_\mu(v)$ is a path in $\cH_\mu$, this immediately implies $\pi_\mu(v)\subseteq A\subseteq \Near(x,y)$.
    Let $g$ be the Skeleton edge incident on $A$ and $\alpha$, which defines the valid partition $A,\mu\setminus A$. 
    Let $C$ be a corresponding $\mu$-mincut of this valid partition such that $A\subseteq C$ and $\mu\setminus A\subseteq \overline{C}$. 
    Hence, $\alpha\subseteq\overline{C}$. 
    Observe that $y\not\in C$ as $C\cap \mu=A$, $A\subseteq \Near(x,y)$ and $y$ is not in $\Near(x,y)$ by definition.
    It follows that $C\cap \Near(x,y)$ is a $\mu$-cut and $A\cup \Near(x,y)$ is an $x,y$-cut. 
    Hence, $C\cap \Near(x,y)$ is a $\mu$-mincut separating $A,\alpha$ by applying submodularity on the cuts $A$ and $\Near(x,y)$. 
    By \Cref{thm: proj vertex}, $v$ is included in every minimum cut separating $A$ from $\alpha$, and in particular $v\in A\cap \Near(x,y)$, which implies that $v\in \Near(x,y)$.
\end{proof}
Case (IV) is solved by storing the projection of $v$ in some descendant of $\mu$ in $\cT$, in which we can apply the same machinery to it as in cases (I-III).
Finally, we remain with the {\em more interesting} case (III), which turns out to be more challenging and requires additional technical machinery. 

In case (III), $\pi_\mu(v)$ has one endpoint in an in-$\alpha$-subtree, say $T_{in}$, and its other endpoint is $\alpha$. 
Unlike cases (I) and (II), as discussed, the available information seems insufficient to determine whether $v\in \Near(x,y)$ since $\Near(x,y)\cap \alpha$ and $\overline{\Near(x,y)}\cap \alpha$ are both nonempty. 
It is indeed the case that, as shown in \Cref{fig : Chain tool}, there may exist two vertex pairs $x_1,y_1$ and $x_2,y_2$ in $\alpha$ such that $T_{in}\subseteq \Near(x_1,y_1)\cap \Near(x_2,y_2)$, yet $v\notin \Near(x_1,y_1)$ while $v\in \Near(x_2,y_2)$.
\begin{figure}
  \begin{center}
    \includegraphics[width=0.9\textwidth]{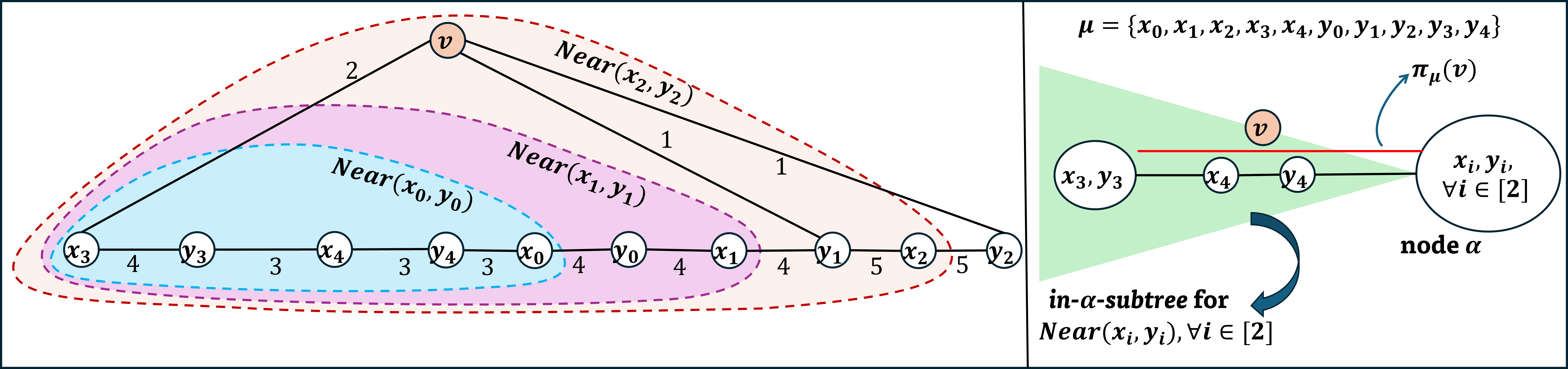}
  \end{center}
  \caption{Example illustrating the hardness in case (III). The left figure is an example graph, and its skeleton $\cH_\mu$ is on the right, where $\mu$ is the terminal set. $\mu$-mincut value is $5$. Projection of nonterminal $v$ is in the configuration (III) for $\Near(x_i,y_i)$, for each $i\in[0,2]$: it has one endpoint in in-$\alpha$-subtree, and the other endpoint is node $\alpha$ in $\cH_\mu$. For $x_1,y_1,x_2,y_2\in \alpha$, $v\notin \Near(x_1,y_1)$ but $v\in \Near(x_2,y_2)$.}
  \label{fig : Chain tool}
\end{figure}
So, the question is, how do we differentiate between the two possibilities? 
To overcome this hurdle, we focus on the family of cuts that defies the intuition from the First Step: 
All pairs $a,b\in \alpha$ whose nearest mincut includes $T_{in}$ but excludes $v$, i.e., $T_{in}\subseteq \Near(a,b)$ but $v\notin \Near(a,b)$.
Denote the family of all such cuts by $N_v^{T_{in}}$.

It is clear that the problem of determining whether $v\in \Near(x,y)$ reduces to the problem of determining whether $\Near(x,y)\in N_v^{T_{in}}$.
However, that membership query seems challenging as the family $N_v^{T_{in}}$ may contain many cuts, and we cannot afford to store them all.
Furthermore, unlike single-source nearest mincuts, the cuts in $N_v^{T_{in}}$ are defined by arbitrary pairs of vertices from $\alpha$, and thus they do not seem to have any obvious relation; 
in principle, they may overlap or cross in an arbitrary manner. 
Contrary to this intuition, the example in \Cref{fig : Chain tool} hints at a much stronger phenomenon. 
The two cuts  $\Near(x_0,y_0), \Near(x_1,y_1)\in N_v^{T_{in}}$ in the figure are actually nested, satisfying $\Near(x_0,y_0)\subseteq \Near(x_1,y_1)$. 
We show that this is not a coincidence; and in fact, every two elements from $N_v^{T_{in}}$ are \textit{related by containment}, and thus $N_v^{T_{in}}$ forms a chain. 
Using this property, it is clear that given the maximal element $\Near(w,z)$ in the chain, $\Near(x,y)\in N_v^{T_{in}}$ if and only if $\Near(x,y)\subseteq \Near(w,z)$. 
Thus, in the preprocessing step, for every vertex $v$ whose projection is stored, we store two chain maximizers, one for each endpoint of its projection, assuming the adjacent edge incident on that endpoint is the in-edge for the (yet unknown) query.

Interestingly, we show that in order to determine if $\Near(x,y)\in N_v^{T_{in}}$, it is sufficient to store only a pair of witness vertices $w,z$ determining the maximal element $\Near(w,z)$, called the \textit{chain maximizer}.
Therefore, we only append $O(1)$ additional information to each stored projection $\pi_\mu(u)$, and the total space remains $O(n)$.
We now present the interesting structural ideas to establish that $N_v^{T_{in}}$ forms a chain. 
Then, to conclude the technical overview, we argue how to use the chain maximizer to efficiently determine if $\Near(x,y)\in N_v^{T_{in}}$.

\paragraph{A chain of nearest mincuts.} 
Our main insight is exposing the following key structural property about any two nearest mincuts, which might be of independent interest. 
\begin{restatable}{lemma}{mainChainLemma} \label{lem: intro: main chain lemma}
    For any pair of cuts $\Near(a,b)$ and $\Near(a',b')$ not related by containment where  $a,b,a',b'\in V$, among the two cuts $\Near(a,b)\setminus \Near(a',b')$ and $\Near(a',b')\setminus \Near(a,b)$, one is an $a,b$-mincut and the other is an $a',b'$-mincut.
\end{restatable}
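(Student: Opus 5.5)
The plan is a case analysis on where the four vertices sit relative to the two cuts. In every configuration other than the two target ones, I will derive a contradiction with the minimality of a nearest mincut, using the submodularity and posimodularity forms of \Cref{lem:sub-posi-general}.

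Write $N_1=\Near(a,b)$ and $N_2=\Near(a',b')$, so $a\in N_1$, $b\notin N_1$, $a'\in N_2$, $b'\notin N_2$. Since the two cuts are not related by containment, both $N_1\setminus N_2$ and $N_2\setminus N_1$ are nonempty. Likewise $N_1\cap N_2\subsetneq N_1$ and $N_1\cap N_2\subsetneq N_2$. The conclusion asks that $N_1\setminus N_2$ and $N_2\setminus N_1$ separate the right pairs, after which posimodularity finishes. Each of the two sets can separate only pairs that straddle it, and it cannot contain $b$ or $b'$ through the cut that excludes them. This leaves exactly two ways to obtain the conclusion.
\begin{enumerate}
\item[(P1)] $a\notin N_2$ and $a'\notin N_1$. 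Then $N_1\setminus N_2$ is an $(a,b)$-cut and $N_2\setminus N_1$ is an $(a',b')$-cut.
\item[(P2)] $b'\in N_1$ and $b\in N_2$. Then $N_1\setminus N_2$ is an $(b',a')$-cut, hence an $a',b'$-cut, and $N_2\setminus N_1$ is a $b,a$-cut.
\end{enumerate}
In either case, the posimodularity item of \Cref{lem:sub-posi-general} (with the ``$S_1,S_2$'' variant) gives the claim directly. So the real content is to show that one of (P1), (P2) must hold.

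Suppose (P1) fails, say $a\in N_2$; the case $a'\in N_1$ is symmetric. Then $a\in N_1\cap N_2$. If moreover $b\notin N_2$, then $N_1\cap N_2$ is an $(a,b)$-cut strictly contained in $N_1$. If in addition $b'\notin N_1$, then $N_1\cup N_2$ contains $a'$ and avoids $b'$, so it is an $(a',b')$-cut. Submodularity then makes $N_1\cap N_2$ an $(a,b)$-mincut strictly inside $N_1$, contradicting the definition of $\Near(a,b)$. Hence $a\in N_2$ forces $b\in N_2$ or $b'\in N_1$. The symmetric argument using $N_1\cap N_2$ as an $(a',b')$-cut shows that $a'\in N_1$ forces $b'\in N_1$ or $b\in N_2$.

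The remaining work, which I expect to be the main obstacle, is to exclude the mixed configurations in which exactly one of $b\in N_2$, $b'\in N_1$ holds while (P1) fails. For these I plan to apply submodularity and posimodularity to $N_1$ together with the complement $\overline{N_2}$, which is a $(b',a')$-mincut, instead of to $N_1$ and $N_2$ themselves. For example, take $a\in N_2$, $b\in N_2$, $b'\notin N_1$. Then $N_1\setminus N_2=N_1\cap\overline{N_2}$ avoids $a$, so I will instead look at $N_1\cap N_2$, which contains $a$ and avoids $b'$, and at $N_1\cup\overline{N_2}$. The target is to exhibit, in each such configuration, a pair of sets among $N_1\cap N_2$, $N_1\cup N_2$, $N_1\setminus N_2$, $N_2\setminus N_1$ that falls under \Cref{lem:sub-posi-general} and produces an $(a,b)$-mincut strictly inside $N_1$ or an $(a',b')$-mincut strictly inside $N_2$. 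Either outcome contradicts nearestness. If some configuration resists this, my fallback is the four component lemma (\Cref{lem: four component lemma}) applied to $N_1$, $N_2$, and one complement, which yields the needed contradiction via a three-set inequality.
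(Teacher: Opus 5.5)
\textbf{Verdict: there is a genuine gap.} Your reduction to (P1)/(P2) and the posimodularity finish match the paper, but the proposal stops exactly where the real work lies.

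First, your exclusion step is weaker than it should be. The hypothesis $b\notin N_2$ is unnecessary, because $b\notin N_1$ already gives $b\notin N_1\cap N_2$. So $a\in N_2$ together with $b'\notin N_1$ always makes $N_1\cap N_2$ an $(a,b)$-cut and $N_1\cup N_2$ an $(a',b')$-cut. Hence $a\in N_2$ forces $b'\in N_1$, and symmetrically $a'\in N_1$ forces $b\in N_2$. In particular, the configuration you single out ($a,b\in N_2$, $b'\notin N_1$) is already killed by your own argument. What survives outside (P1) and (P2) is exactly the crossing configuration $a\in N_1\cap N_2$, $b\notin N_1\cup N_2$, $a'\in N_2\setminus N_1$, $b'\in N_1\setminus N_2$, together with its mirror image.

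For this configuration your plan cannot work:
\begin{itemize}
\item $N_1\cap N_2$ and $N_1\cup N_2$ both separate $a$ from $b$, and neither separates $a'$ from $b'$.
\item $N_1\setminus N_2$ and $N_2\setminus N_1$ both separate $a'$ from $b'$, and neither separates $a$ from $b$.
\end{itemize}
So no pair meets the hypothesis of \Cref{lem:sub-posi-general} with $N_1$ as the $a,b$-mincut and $N_2$ as the $a',b'$-mincut. Replacing $N_2$ by $\overline{N_2}$ only swaps the roles of sub- and posimodularity, and yields the same four sets up to complement. Your four-component fallback is left unspecified.

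The missing idea is the paper's \Cref{lem: crossing mincuts with special assignment of vertex pairs}: use both inequalities at once.
\begin{itemize}
\item Submodularity, with $c(N_1\cap N_2),c(N_1\cup N_2)\ge\lambda_{a,b}=c(N_1)$, gives $c(N_2)\ge c(N_1)$.
\item Posimodularity, with $c(N_1\setminus N_2),c(N_2\setminus N_1)\ge\lambda_{a',b'}=c(N_2)$, gives $c(N_1)\ge c(N_2)$.
\end{itemize}
Hence every inequality is tight, so $N_1\cap N_2$ is an $(a,b)$-mincut strictly inside $N_1$, contradicting nearestness. A shortcut also works: in this configuration $N_2$ itself separates $a$ from $b$ and $N_1$ separates $a'$ from $b'$. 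So $c(N_1)=c(N_2)$, $N_2$ is an $(a,b)$-mincut, and $N_1\cap N_2$ is then an $(a,b)$-mincut by submodularity.
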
 
The proof is technically involved and relies on a detailed analysis of all possible placements of the vertices $a,b,a',b'$ with respect to the cuts $\Near(a,b)$ and $\Near(a',b')$. 
The argument combines submodularity and posimodularity to establish the desired mincut properties in each possible configuration.

Given the structural property of \Cref{lem: intro: main chain lemma}, the proof that $N_v^{T_{in}}$ forms a chain is an application of the four component lemma (\Cref{lem: four component lemma}) and a careful use of the projection of $\pi_\mu(v)$. 
We give the underlying argument of this proof below.
Assume towards contradiction that there exist two cuts in $N_v^{T_{in}}$ not related by containment, namely $A =\Near(a_1,b_1),B =\Near(a_2,b_2)$.
In addition, one can select a $\mu$-mincut $C$ satisfying $T_{in}\subseteq C$, $\alpha \subseteq \overline{C}$, and $v\in C$. 
Such a cut exists since $\pi_{\mu}(v)\not\subseteq T_{in}$.
It is clear that $T_{in} \subseteq A\cap B\cap C$ and $v\in C\setminus (A\cup B)$. 
Furthermore, since $\alpha \subseteq \overline{C}$ and $a_1,b_1,a_2,b_2\in \alpha$, by \Cref{lem: intro: main chain lemma}, we can show that one of $A\setminus (B\cup C)$ and $B\setminus (A\cup C)$ is an $a_1,b_1$-cut and the other is an $a_2,b_2$-cut.
In addition, $A\cap B\cap C$ is a $\mu$-cut separating $T_{in}$ from $\alpha$, since each contains a vertex from $\mu$.
Applying \Cref{lem: four component lemma}, we obtain that $C\setminus(A\cup B)$ is an empty set.
However, this is a contradiction since $v\in C\setminus(A\cup B)$.

\paragraph{Utilizing Chain Maximizer to determine if $\Near(x,y)$ is in the chain $N_v^{T_{in}}$.} 
Let $w,z\in \alpha$ be such that $\Near(w,z)$ is the maximal element in the chain of $N_v^{T_{in}}$, that is, $\Near(w,z)$ is the chain maximizer.
Recall that $\Near(x,y)\in N_v^{T_{in}}$ if and only if $\Near(x,y)\subseteq \Near(w,z)$.
We show that this containment relation can be determined using only the information known to us, that is $\Near(x,y)\cap \alpha$ and $w,z$. 
Firstly, it is easy to see $\Near(x,y)\not\subseteq \Near(w,z)$ if $z\in \Near(x,y)$, since $z\notin \Near(w,z)$.
Note that $z\in \alpha \subseteq \mu$ (the latter is since $\alpha$ is a child of $\mu$), and we have already classified $\mu\cap \Near(x,y)$, so we can determine whether $z\in \Near(x,y)$ in $O(1)$ time.

For the case $z\notin \Near(x,y)$, a more involved analysis is required. In this case, both $\Near(x,y) \in N_v^{T_{in}}$ and $\Near(x,y)\notin N_v^{T_{in}}$ are possible, and therefore, knowing the position of $w,z$  with respect to $\Near(x,y)$ is no longer sufficient. Interestingly, we show that this situation can be handled if the position of $x$ with respect to $\Near(w,z)$ is known. 
In particular, when $z\not\in \Near(x,y)$, we have \textit{$\Near(x,y)\subseteq \Near(w,z)$ if and only if $x\in \Near(w,z)$.}

Unfortunately, $\Near(w,z)$ is not known to us.
So, how can we determine if $x\in \Near(w,z)$? 
It turns out that our stored information is still sufficient for this purpose. 
To achieve this, we have to again split the analysis into two cases based on whether $w\in \Near(x,y)$ or not.
Specifically, we show that if $w,z\not\in \Near(x,y)$, then $\Near(x,y)\subseteq \Near(w,z)$.
This can be determined in $O(1)$ time since we have already classified $\alpha\cap \Near(x,y)$ and $w,z\in \alpha$.
The only remaining case is now $w\in \Near(x,y)$ and $z\not\in \Near(x,y)$. 
The structural properties of the hierarchy tree $\cT$ turn out to be useful in solving this case. 
In particular, we show that $\LCA(x,y)$ must be a descendant of $\LCA(w,z)$ in $\cT$.
This is sufficient for us to determine whether $x\in \Near(w,z)$ in $O(1)$ time by finding whether $x\in \Near(w,z)\cap \LCA(w,z)$ using just the skeleton $\cH_{\LCA_{\cT}(w,z)}$.
This completes our overview for the generic steps.

\begin{figure}[h]
        
        \centering
        \scalebox{0.66}{
            \begin{tikzpicture}[
  thick,
  vertex/.style={
    ellipse, draw, fill=white,
    minimum width = 1.15cm, minimum height = 0.85cm,
    inner sep = 1pt, font = \small,
  },
  intri/.style ={draw=green!55!black, fill=green!16, line width=1pt}, 
  outtri/.style={draw=red!70!black,   fill=red!13,   line width=1pt}, 
  inedge/.style ={green!55!black, line width=1.7pt},                  
  outedge/.style={red!70!black,   line width=1.7pt},                  
  proj/.style ={draw=blue!70!black, line width=1.8pt, line cap=round},
  projdot/.style={fill=blue!70!black},
  plabel/.style={font=\small},
]


\begin{scope}[xshift=0cm]
  \draw[dashed] (0.15,0) ellipse (2.75cm and 3.3cm);
  \node[plabel] at (0.15,3.6) {$H_\mu$};

  \draw[inedge]  (0,0) -- (0, 1.4);
  \draw[inedge]  (0,0) -- (1.7,0);
  \draw[outedge] (0,0) -- (0,-1.4);

  \draw[intri]  ( 0, 1.4) -- (-0.85, 2.8) -- (0.85, 2.8) -- cycle;
  \draw[intri]  (1.7, 0 ) -- ( 2.5,-0.7) -- (2.5, 0.7) -- cycle;
  \draw[outtri] ( 0,-1.4) -- (-0.85,-2.8) -- (0.85,-2.8) -- cycle;

  \draw[proj] (0.10,2.10) -- (0.16,0.60)
    .. controls (0.50,0.55) and (0.55,0.50) .. (0.60,0.16)
    -- (2.05,0.22);

  \node[vertex] at (0,0) {$\alpha$};

  \filldraw[projdot] (0.10,2.10) circle (2pt);
  \filldraw[projdot] (2.05,0.22) circle (2pt);

  \node[plabel, text=blue!70!black, anchor=west] at (0.22,1.35) {$\pi_\mu(v)$};

  \draw[proj] (-0.30,2.48) .. controls (-0.05,2.66) and (0.18,2.66) .. (0.42,2.48);
  \filldraw[projdot] (-0.30,2.48) circle (2pt);
  \filldraw[projdot] ( 0.42,2.48) circle (2pt);
  \node[plabel, text=blue!70!black, anchor=west] at (0.50,2.2) {$\pi_\mu(v)$};

  \node[plabel] at (0.15,-3.75) {(I)};
\end{scope}

\begin{scope}[xshift=6.4cm]
  \draw[dashed] (0.15,0) ellipse (2.75cm and 3.3cm);
  \node[plabel] at (0.15,3.6) {$H_\mu$};

  \draw[inedge]  (0,0) -- (0, 1.4);
  \draw[outedge] (0,0) -- (1.7,0);
  \draw[inedge]  (0,0) -- (0,-1.4);

  \draw[intri]  ( 0, 1.4) -- (-0.85, 2.8) -- (0.85, 2.8) -- cycle;
  \draw[outtri] (1.7, 0 ) -- ( 2.5,-0.7) -- (2.5, 0.7) -- cycle;
  \draw[intri]  ( 0,-1.4) -- (-0.85,-2.8) -- (0.85,-2.8) -- cycle;

  \draw[proj] (0.10,2.10) -- (0.16,0.60)
    .. controls (0.50,0.55) and (0.55,0.50) .. (0.60,0.16)
    -- (2.05,0.22);

  \node[vertex] at (0,0) {$\alpha$};

  \filldraw[projdot] (0.10,2.10) circle (2pt);
  \filldraw[projdot] (2.05,0.22) circle (2pt);

  \node[plabel, text=blue!70!black, anchor=west] at (0.42,1.55) {$\pi_\mu(v)$};
  \node[plabel] at (0.15,-3.75) {(II)};
\end{scope}

\begin{scope}[xshift=12.8cm]
  \draw[dashed] (0.15,0) ellipse (2.75cm and 3.3cm);
  \node[plabel] at (0.15,3.6) {$H_\mu$};

  \draw[inedge]  (0,0) -- (0, 1.4);
  \draw[outedge] (0,0) -- (1.7,0);
  \draw[outedge] (0,0) -- (0,-1.4);

  \draw[intri]  ( 0, 1.4) -- (-0.85, 2.8) -- (0.85, 2.8) -- cycle;
  \draw[outtri] (1.7, 0 ) -- ( 2.5,-0.7) -- (2.5, 0.7) -- cycle;
  \draw[outtri] ( 0,-1.4) -- (-0.85,-2.8) -- (0.85,-2.8) -- cycle;

  \draw[proj] (0.10,2.10) -- (0.16,1.00) -- (0.16,0.46);

  \node[vertex] at (0,0) {$\alpha$};

  \filldraw[projdot] (0.10,2.10) circle (2pt);
  \filldraw[projdot] (0.16,0.46) circle (2pt);

  \node[plabel, text=blue!70!black, anchor=west] at (0.42,1.55) {$\pi_\mu(v)$};
  \node[plabel] at (0.15,-3.75) {(III)};
\end{scope}

\begin{scope}[xshift=19.2cm]
  \draw[dashed] (0.15,0) ellipse (2.75cm and 3.3cm);
  \node[plabel] at (0.15,3.6) {$H_\mu$};

  \draw[inedge]  (0,0) -- (0, 1.4);
  \draw[inedge]  (0,0) -- (1.7,0);
  \draw[outedge] (0,0) -- (0,-1.4);

  \draw[intri]  ( 0, 1.4) -- (-0.85, 2.8) -- (0.85, 2.8) -- cycle;
  \draw[intri]  (1.7, 0 ) -- ( 2.5,-0.7) -- (2.5, 0.7) -- cycle;
  \draw[outtri] ( 0,-1.4) -- (-0.85,-2.8) -- (0.85,-2.8) -- cycle;

  \node[vertex] at (0,0) {$\alpha$};

  \draw[proj] (0,0) ellipse (0.74cm and 0.56cm);

  \node[plabel, text=blue!70!black, anchor=west] at (-1.95,1.35) {$\pi_\mu(v)=\alpha$};
  \draw[proj, ->] (-1.55,1.10) -- (-0.70,0.45);

  \node[plabel] at (0.15,-3.75) {(IV)};
\end{scope}

\begin{scope}[yshift=-4.7cm]
  \filldraw[intri]  (2.6,-0.13) rectangle ++(0.4,0.3);
  \node[plabel, anchor=west] at (3.10,0.02)
        {``in'' subtree / in-edge ($\subseteq \mathrm{Near}(x,y)$)};

  \filldraw[outtri] (10.6,-0.13) rectangle ++(0.4,0.3);
  \node[plabel, anchor=west] at (11.10,0.02)
        {``out'' subtree / out-edge (disjoint from $\mathrm{Near}(x,y)$)};

\end{scope}

\end{tikzpicture}
        }
        \caption{Illustration of the four possible configurations for $\pi_\mu(v)$.}
        \label{fig:configurations}
\end{figure}
\paragraph{Organization of the paper.} 
The rest of the paper is devoted to the proof of our results. In \Cref{sec : chain tool}, we present the proofs for the \textit{chain-maximizer} for nearest mincuts. 
Next,  the construction of the Nearest mincut hierarchy is given  in \Cref{sec: data structure}. 
Then, the query algorithm for reporting a nearest mincut is given in \Cref{sec:near-minimum-cut-query}, giving \Cref{thm:main}. The structural results on characterizing skeleton subtrees using projections and chain maximizers are provided in \Cref{sec : characterization of skeleton subtrees}.
Following is the query algorithm for reporting the nearest mincut tree in \Cref{sec:nearest-mincut-tree-construction-query} that establishes \Cref{thm : nearest mincut tree reporting}. Next, we explain our sensitivity oracle from \Cref{thm : mincut sensitivity data structures for the insertion of an edge} in \Cref{sec:sensitivity-oracles-insertion-edge}. The lower bound on the minimal Gomory-Hu tree is given in   \Cref{sec:nearest-gomory-hu-lower-bound}. Finally, we conclude with some interesting future works in \Cref{sec : conclusion}.  
\begin{figure}[H]
  \begin{center}
    \includegraphics[width=\textwidth]{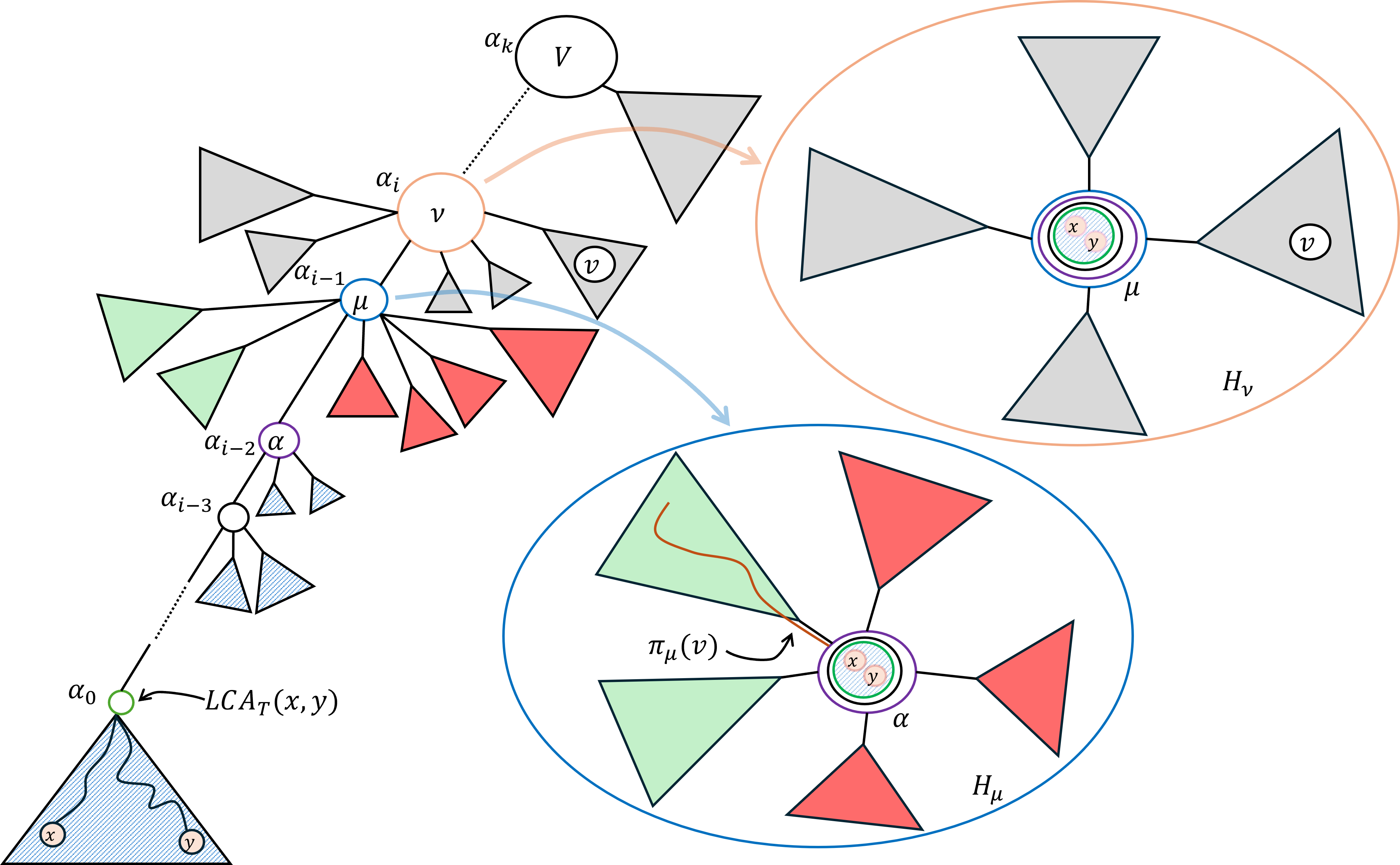}
  \end{center}
  \caption{Illustration of the generic step framework with the nearest mincut hierarchy. The red and blue encircled regions show the $\alpha$-subtrees and $\mu$-subtrees views of Skeletons $\cH_\mu$ and $\cH_\nu$, respectively. The green and red colored triangles are those in-$\alpha$-subtrees and out-$\alpha$-subtrees, respectively. Node $\mu$ and the blue triangles represent the nodes in $\cT$ that are solved in the previous steps of our query algorithm. At the current step, $\Near(x,y)\cap \mu$ is known to us, and we need to solve $\Near(x,y)\cap \nu$. Vertex $v$ is the representative vertex from one of the yet unclassified $\mu$-subtrees (triangles with grey color) such that $\pi_{\mu}(v)$ is in configuration (III). Note that subtrees in Skeletons are different from subtrees rooted at the nodes in $\cT$.}
  \label{fig:genstep}
\end{figure}

\section{Chain Maximizers for a Family of Nearest Mincuts} \label{sec : chain tool}
In this section, we prove the following useful property regarding a family of nearest mincuts between vertices that belong to the endpoints of the projection of any vertex for a given terminal set.
\begin{theorem} [Chain Maximizers of nearest mincuts] \label{thm : chain-maximizer}
    Let $S\subseteq V$ be a terminal set and let $v\in V\setminus S$ be some vertex such that $\alpha,\beta$ are the endpoints of $\pi_S(v)$ with $\alpha\ne \beta$ and their adjacent edges are $g_\alpha,g_\beta$ respectively.
    Let $T_\alpha$ be the $\alpha$-subtree of $\cH_S$ rooted at the opposite end of $g_{\alpha}$ from $\alpha$;
    and define $T_{\beta}$ analogously.
    Define the set 
    \begin{equation*}
    N_v^{T_{\alpha}}  
    = \left\{
        \Near(a,b) 
        \middle| 
    \begin{array}{l}
    a,b\in \alpha \\
    v\notin \Near(a,b) \\
    T_\alpha \subseteq \Near(a,b)
    \end{array}
    \right\}    
    ,
    \end{equation*} 
    and the set $N_v^{T_{\beta}}$ is defined symmetrically. 
    Then, every two elements $A,B\in N_v^{T_{\alpha}}$ (or $A,B\in N_v^{T_{\beta}}$) are related by containment, i.e., either $A\subseteq B$ or $B\subseteq A$.
    Furthermore, there exists a unique inclusion-wise maximal element in $N_v^{T_{\alpha}}$ and in $N_v^{T_{\beta}}$.
\end{theorem}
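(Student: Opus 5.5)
We follow the outline sketched in the technical overview. The argument uses \Cref{lem: intro: main chain lemma} together with the four component lemma (\Cref{lem: four component lemma}). Fix $A=\Near(a_1,b_1)$ and $B=\Near(a_2,b_2)$ in $N_v^{T_\alpha}$ with $a_1,b_1,a_2,b_2\in\alpha$, and suppose, towards a contradiction, that neither contains the other. The case of $N_v^{T_\beta}$ is symmetric.

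\textbf{Choosing the cut $C$.} Let $S_1,S_2$ be the laminar valid partition corresponding to $g_\alpha$, with $S_1$ the terminals mapped to $T_\alpha$. Then $\alpha\subseteq S_2$. Since $\pi_S(v)$ has endpoint $\alpha$ and $\alpha\neq\beta$, the edge $g_\alpha$ lies on $\pi_S(v)$. By the first item of \Cref{thm: proj vertex}, the partition $S_1,S_2$ distinguishes $v$. Hence there is an $S_1,S_2$-mincut $C$, which is an $S$-mincut, with $S_1\subseteq C$, $\alpha\subseteq\overline C$, and $v\in C$.

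\textbf{Setting up the four component lemma.} We need $T_\alpha\subseteq A\cap B\cap C$ and $v\in C\setminus(A\cup B)$; both hold by the definition of $N_v^{T_\alpha}$. Since $T_\alpha$ is nonempty and $\alpha\subseteq\overline C$, the set $A\cap B\cap C$ is an $S$-cut, so $c(A\cap B\cap C)\ge c(C)=\lambda_S$.

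Next, apply \Cref{lem: intro: main chain lemma}. One of $A\setminus B$ and $B\setminus A$ is an $a_1,b_1$-mincut and the other is an $a_2,b_2$-mincut. Since $\alpha\subseteq\overline C$, removing $C$ does not change which of $a_1,b_1,a_2,b_2$ lie inside. So $A\setminus(B\cup C)$ and $B\setminus(A\cup C)$ are still an $a_1,b_1$-cut and an $a_2,b_2$-cut, in some order. Their capacities are therefore at least $\lambda_{a_1,b_1}=c(A)$ and $\lambda_{a_2,b_2}=c(B)$ respectively, the two bounds being matched up in the right order.

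\textbf{Main obstacle.} The four component lemma gives $c(A)+c(B)+c(C)\ge$ the sum of those three terms plus $c(C\setminus(A\cup B))$. The bounds above leave only $0\ge c(C\setminus(A\cup B))$. Because $G$ is connected with positive weights, every nonempty proper subset has positive capacity, so $C\setminus(A\cup B)=\emptyset$. This contradicts $v\in C\setminus(A\cup B)$. Note that this four-set inequality includes the term for $C\setminus(A\cup B)$, which the statement of \Cref{lem: four component lemma} as written omits. I expect this step to be the delicate point. I would derive it directly from the standard four-set version of the inequality, which follows from applying posimodularity twice. Alternatively, I would bound $c(A\cap B\cap C)+c(C\setminus(A\cup B))\le c(C)+(\text{slack})$ via submodularity of $C$ with $A\cap B$ and posimodularity of $C$ with $A\cup B$.

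\textbf{Existence and uniqueness of the maximal element.} Every two members of $N_v^{T_\alpha}$ are comparable, so the family is a chain. It is finite, since it is indexed by pairs in $\alpha$. Whenever it is nonempty, it therefore has a unique inclusion-wise maximal element, attained by some pair $(w,z)$, the chain maximizer.
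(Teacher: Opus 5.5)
Your proof is correct and is essentially the paper's argument. You take the $S$-mincut $C$ from $g_\alpha$ with $v\in C$ and $\alpha\subseteq\overline{C}$, use the main chain lemma to make $A\setminus(B\cup C)$ and $B\setminus(A\cup C)$ an $a_1,b_1$-cut and an $a_2,b_2$-cut, note that $A\cap B\cap C$ is an $S$-cut, and then use the four component lemma to force $c(C\setminus(A\cup B))=0$ even though $v\in C\setminus(A\cup B)$.

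Your ``main obstacle'' does not exist: \Cref{lem: four component lemma} as stated already contains the term $c(C\setminus(A\cup B))$, so you can apply it directly. You also should not try to rederive it from sub/posimodularity, because that cannot work. The four-term inequality fails for hypergraph cut functions, which are symmetric submodular: a single hyperedge meeting each of $A\setminus(B\cup C)$, $B\setminus(A\cup C)$, $C\setminus(A\cup B)$ and $A\cap B\cap C$ once gives $3<4$. The inequality genuinely needs the edge-by-edge counting argument for graphs.
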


To prove \Cref{thm : chain-maximizer}, we first prove the following two lemmas, which are of independent interest and might be useful in other contexts as well.
The first lemma proves a structural result on crossing minimum cuts within a certain special setting, which is illustrated in \Cref{fig:crossing-mincuts}. 
We then use this lemma to prove the main technical result for the chain, \Cref{lem: intro: main chain lemma}, as presented in the technical overview section.
\begin{lemma} [All-Pairs Crossing Mincuts Lemma] \label{lem: crossing mincuts with special assignment of vertex pairs}
    Let $a,b,a',b'\in V$ and suppose that a $(a,b)$-mincut $C\subseteq V$ crosses with a $(a',b')$-mincut $C'$ such that $a\in C\cap C'$, $b\in \overline{C\cup C'}$, $a'\in C'\setminus C$, and $b'\in C\setminus C'$. Then,
    \begin{enumerate}
        \item $c(C)=c(C')$, 
        \item $C\cap C'$ and $C\cup C'$ are $(a,b)$-mincuts, and
        \item $C\setminus C'$ is a $(a',b')$-mincut and $C'\setminus C$ is a $(b',a')$-mincut.
    \end{enumerate}
\end{lemma}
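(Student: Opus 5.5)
We combine submodularity and posimodularity on the pair $C, C'$, reading off which of the four derived sets separate which pairs. Write $\lambda=c(C)=\lambda_{a,b}$ and $\lambda'=c(C')=\lambda_{a',b'}$. From the hypotheses, the placements are as follows.
\begin{itemize}
\item $C\cap C'$ contains $a$ but not $b$, so it is an $(a,b)$-cut. It contains neither $a'$ nor $b'$.
\item $C\cup C'$ contains $a$ and excludes $b$, so it is also an $(a,b)$-cut. It contains both $a'$ and $b'$.
\item $C\setminus C'$ contains $b'$ but not $a'$, so it is an $a',b'$-cut. It excludes both $a$ and $b$.
\item $C'\setminus C$ contains $a'$ but not $b'$, so it is an $a',b'$-cut. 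It excludes both $a$ and $b$.
\end{itemize}

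\textbf{Item 1.} By submodularity, $\lambda+\lambda'=c(C)+c(C')\ge c(C\cap C')+c(C\cup C')\ge 2\lambda$, hence $\lambda'\ge\lambda$. By posimodularity, $\lambda+\lambda'\ge c(C\setminus C')+c(C'\setminus C)\ge 2\lambda'$, hence $\lambda\ge\lambda'$. Therefore $\lambda=\lambda'$, which gives $c(C)=c(C')$.

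\textbf{Items 2 and 3.} Once $\lambda=\lambda'$ holds, every inequality in both chains becomes an equality. This is because each term on the right is individually lower-bounded by $\lambda$ (resp.\ $\lambda'$), and the two terms sum to exactly $2\lambda$. So $C\cap C'$ and $C\cup C'$ both have capacity $\lambda$ and are $(a,b)$-mincuts. Likewise $C\setminus C'$ and $C'\setminus C$ both have capacity $\lambda'$. The orientation then follows from the placements above: $C\setminus C'$ contains $a'$ and not $b'$? No — it contains $b'$ and not $a'$, so it is a $(b',a')$-mincut; and $C'\setminus C$ contains $a'$ and not $b'$, so it is an $(a',b')$-mincut.

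This orientation is reversed relative to item 3 as stated, which says $C\setminus C'$ is an $(a',b')$-mincut and $C'\setminus C$ is a $(b',a')$-mincut. The paper's convention is that an $(X,Y)$-cut contains $X$. Under that convention $C\setminus C'$, which contains $b'$, is a $(b',a')$-cut, and its complement is an $(a',b')$-cut. Since $c(\cdot)$ is invariant under complementation, the mincut claims hold either way up to this reading of orientation.

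The main obstacle is bookkeeping rather than any deep step. The only care needed is to verify that each derived set is nonempty on the correct side and separates the intended pair, so that \Cref{lem:sub-posi-general} (or the direct inequalities) applies. Beyond that, the orientation in item 3 should be matched to the paper's convention, possibly by passing to complements.
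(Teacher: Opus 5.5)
Your proof is correct and is the same argument as the paper's. Submodularity on $C,C'$ gives $c(C')\ge c(C)$, since $C\cap C'$ and $C\cup C'$ are both $(a,b)$-cuts; posimodularity gives $c(C)\ge c(C')$, since both differences separate $a',b'$; tightness then forces all four derived sets to have capacity $c(C)$.

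Your orientation remark is also right. Under the paper's convention, $C\setminus C'$ contains $b'$ and so is a $(b',a')$-mincut, while $C'\setminus C$ is an $(a',b')$-mincut. Item 3, and the matching line of the paper's proof, therefore have the labels swapped. This is harmless, because capacity is invariant under complementation and only item 2 is used later in the paper.
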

\begin{proof}
    Since $a\in C\cap C'$ and $b\notin C\cup C'$, the cuts $C\cap C'$ and $C\cup C'$ are $(a,b)$-cuts. 
    Similarly, $C\setminus C'$ is a $(a',b')$-cut and $C'\setminus C$ is a $(b',a')$-cut. 
    It follows that $c(C\cup C')\ge c(C)$, $c(C\cap C')\ge c(C)$, $c(C\setminus C')\ge c(C')$, and $c(C'\setminus C)\ge c(C')$.
    Therefore, by the submodularity of the cut function (\Cref{lem:sub-posi-general}) applied to $C,C'$, we get that $c(C')\ge c(C)$. 
    Similarly, by applying posimodularity (\Cref{lem:sub-posi-general}) on cuts $C,C'$, we get that $c(C)\ge c(C')$. 
    It immediately follows that $c(C)=c(C')$, and furthermore, $c(C\cup C')=c(C\cap C')=c(C\setminus C')=c(C'\setminus C)=c(C)$.
    This concludes the proof. 
\end{proof}

\begin{figure}[h]
        \label{fig:all-pairs-crossing-mincuts}
        \centering
        \begin{tikzpicture}

  \draw (-1.4, 0) circle (2.4cm);

  \draw (1.4, 0) circle (2.4cm);

  \node[font=\itshape] at (-3.2, 2.5) {$C=\Near(a,b)$};

  \node[font=\itshape] at (3.2, 2.5) {$C'=\Near(a',b')$};

  \filldraw (0, 2.8) circle (1.5pt);
  \node[right, font=\itshape] at (0, 2.8) {$b$};

  \filldraw (-2.1, 0) circle (1.5pt);
  \node[right, font=\itshape] at (-2.1, 0) {$b'$};

  \filldraw (0, -0.1) circle (1.5pt);
  \node[right, font=\itshape] at (0, -0.1) {$a$};

  \filldraw (2.1, 0) circle (1.5pt);
  \node[right, font=\itshape] at (2.1, 0) {$a'$};

\end{tikzpicture}
        \caption{Illustration of the setting of \Cref{lem: crossing mincuts with special assignment of vertex pairs}.}
        \label{fig:crossing-mincuts}
\end{figure}

\mainChainLemma*
\begin{proof}
    The proof follows by an exhaustive analysis of all possible configurations of the vertices $a,b,a',b'$ with respect to the cuts $\Near(a,b)$ and $\Near(a',b')$.
    Denote $C\coloneqq\Near(a,b)$ and $C'\coloneqq\Near(a',b')$.
    Suppose  $a\in C\setminus C'$ and $a'\in C'\setminus C$. 
    Then, using posimodularity (\Cref{lem:sub-posi-general}), it is immediate that $C\setminus C'$ is an $(a,b)$-mincut and $C'\setminus C$ is an $(a',b')$-mincut.

    Notice that whenever $C\cap C'$ is an $(a,b)$-cut and $C\cup C'$ is an $(a',b')$-cut (or vice versa), then by submodularity (\Cref{lem:sub-posi-general}), $C\cap C'$ is an $(a,b)$-mincut and $C\cup C'$ is an $(a',b')$-mincut.
    Then, by the minimality of $C$ we have that $C\subseteq C\cap C'$, which contradicts the assumption that $C$ and $C'$ are not related by containment.
    Therefore, we can disregard all these cases.


    Suppose both $a,a'\in C\cap C'$. Notice that if at least one of $b,b'$ is outside $C\cup C'$, then $C\cap C'$ is an $(a,b)$-cut and $C\cup C'$ is an $(a',b')$-cut (or vice versa), and we can disregard this case as explained above.
    Hence, $b'\in C\setminus C'$ and $b\in C'\setminus C$, and we again obtain that $C\setminus C'$ is an $a',b'$-mincut and $C'\setminus C$ is an $a,b$-mincut by posimodularity (\Cref{lem:sub-posi-general}). 

    We now assume that exactly one of $a,a'$, say $a$, is in $C\cap C'$; otherwise, the proof is symmetric.
    Notice that if $b'\in \overline{C\cup C'}$, then $C\cap C'$ is an $(a,b)$-cut and $C\cup C'$ is an $(a',b')$-cut, and we again disregard this case as explained above.
    Therefore, $b'\in C\setminus C'$.

    To find the location of $b$, we have to leverage \Cref{lem: crossing mincuts with special assignment of vertex pairs}.
    If $b \notin C\cup C'$, then we obtain the configuration illustrated in \Cref{fig:crossing-mincuts} since $a'\in C'\setminus C$. Thus, by \Cref{lem: crossing mincuts with special assignment of vertex pairs}(2), $C\cap C'$ is a $(a,b)$-mincut.
    Again, this implies that $C\subseteq C\cap C'$, a contradiction to the assumption that $C$ and $C'$ are not related by containment.
    Therefore, we only remain with the case where $b\in C'\setminus C$.
    Then, by posimodularity (\Cref{lem:sub-posi-general}), $C\setminus C'$ is a $a',b'$-mincut and $C'\setminus C$ is a $a,b$-mincut.
    This concludes the proof of the lemma.
\end{proof}
We are now ready to prove \Cref{thm : chain-maximizer}.
\begin{proof}[Proof of \Cref{thm : chain-maximizer}]
    We prove the lemma for $N_v^{T_{\alpha}}$; the proof for $N_v^{T_{\beta}}$ is symmetric.
    Let $A,B\in N_v^{T_\alpha}$ be two cuts such that $A=\Near(a,b),B=\Near(a',b')$, and assume towards contradiction that $A$ and $B$ are not related by containment.
    In addition, let $C$ be the $S$-mincut defined by the $g_\alpha\in \pi_S(v)$ such that $v\in C$ and $\alpha\not\subseteq C$.
    Notice that by the definition of $C$ we have $T_{\alpha}\subseteq C$ and $a,b,a',b'\notin C$.
    Such a cut exists by \Cref{thm: proj vertex}.

    
    By \Cref{lem: intro: main chain lemma}, we have that among the two cuts $A\setminus B$ and $B\setminus A$, one is an $a,b$-cut and the other is an $a',b'$-cut.
    Furthermore, since $a,b\in \alpha$ and $\alpha\subseteq \overline{C}$, it follows that among the two cuts $A\setminus (B\cup C)$ and $B\setminus (A\cup C)$, one is an $a,b$-cut and the other is an $a',b'$-cut.
    In addition, since $T_{\alpha} \subseteq A\cap B\cap C$ and $\alpha \subseteq \overline{C}$, it follows that $A\cap B\cap C$ is an $S$-cut.
    Applying \Cref{lem: four component lemma} on the cuts $A,B,C$, we get that $c(C\setminus(A\cup B))=0$.
    However, $v\in C\setminus(A\cup B)$ and $G$ is connected, a contradiction.
\end{proof}

\section{Data Structure: Nearest Mincut Hierarchy} \label{sec: data structure}

In this section, we prove \Cref{lem: nearest mincut hierarchy space}, giving the space guarantees of the nearest mincut hierarchy.
The details of the query algorithm for recovering $\Near(x,y)$ given any two vertices $x,y\in V$ is deferred to \Cref{sec:near-minimum-cut-query}.
As detailed in the introduction, the key components of our data structure are the following: 
\begin{itemize}
    \item Skeleton Hierarchy from \Cref{sec: skeleton hierarchy},
    \item Projections of vertices from \Cref{thm: proj vertex}, and
    \item Chain-maximizers for nearest mincuts from \Cref{thm : chain-maximizer}.
\end{itemize}

Our construction begins with the Skeleton Hierarchy tree of \Cref{sec: skeleton hierarchy}. 
Recall that the Skeleton hierarchy consists of the two components: the hierarchy tree $\cT$ of \cite{KatzKKP04} and Skeletons $\cH_\mu$ from \Cref{thm:skeleton} that are augmented to each non-leaf node $\mu$ in $\cT$. Recall that for any node $\mu\in \cT$,  the corresponding vertex set is also defined by $\mu$. The following fact about $\cT$ is essential for our query algorithms.
\begin{fact}\label{claim:mincut-value-increases}
Hierarchy tree $\cT$ satisfies the property:  
    Let $\alpha$ be an ancestor of a node $\beta$ in $\cT$, then the value of  $\alpha$-mincut is strictly smaller than the value of  $\beta$-mincut.
    
\end{fact}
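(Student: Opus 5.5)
The plan is to prove the statement directly for an arbitrary strict descendant $\beta$ of $\alpha$, not by induction along the path. It combines two ingredients: the obvious monotonicity of Steiner mincut values under inclusion of terminal sets, and the definition of the children of $\alpha$ in $\cT$ as the $\alpha$-equivalence classes.

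\textbf{Step 1 (monotonicity).} Since $\beta$ is a descendant of $\alpha$ in $\cT$, the corresponding vertex sets satisfy $\beta \subseteq \alpha$. We may assume $\beta$ is not a leaf, i.e.\ $|\beta|\ge 2$; for a singleton the $\beta$-mincut value is naturally taken to be $+\infty$, and the claim is trivial. Any $\beta$-cut $C$ has $C\cap\beta\neq\emptyset$ and $\overline{C}\cap\beta\neq\emptyset$. Since $\beta\subseteq\alpha$, the same holds with $\alpha$ in place of $\beta$, so $C$ is also an $\alpha$-cut. Hence $\lambda_\alpha \le \lambda_\beta$.

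\textbf{Step 2 (strictness).} Let $\gamma$ be the child of $\alpha$ on the path from $\alpha$ down to $\beta$. Then $\beta\subseteq\gamma$, and by construction of $\cT$, $\gamma$ is an $\alpha$-equivalence class. Suppose for contradiction that $\lambda_\alpha=\lambda_\beta$, and take any $\beta$-mincut $C$. By Step 1, $C$ is an $\alpha$-cut, and its capacity is $\lambda_\beta=\lambda_\alpha$, so $C$ is an $\alpha$-mincut. Moreover, $C$ separates some $s\in\beta\cap C$ from some $t\in\beta\cap\overline{C}$. Both $s$ and $t$ lie in $\gamma$, so they are $\alpha$-equivalent, meaning no $\alpha$-mincut separates them. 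This contradicts the fact that $C$ is an $\alpha$-mincut separating them. Therefore $\lambda_\alpha<\lambda_\beta$.

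There is no genuine obstacle here. The only points needing care are bookkeeping: making explicit that the vertex set of a descendant is contained in the vertex set of the child $\gamma$ (by the recursive definition of $\cT$), and treating leaf nodes, whose Steiner mincut is undefined, via the convention in Step 1.
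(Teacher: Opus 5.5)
Your proof is correct. The paper itself gives no proof of this statement: it records it as a known property of the hierarchy tree of \cite{KatzKKP04}, so there is no argument in the paper to compare yours against.

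Your two steps supply the missing justification directly from the definition of $\cT$. The inclusion $\beta\subseteq\alpha$ gives $\lambda_\alpha\le\lambda_\beta$. If equality held, a $\beta$-mincut would be an $\alpha$-mincut separating two vertices of the single $\alpha$-equivalence class $\gamma\supseteq\beta$, which is impossible. The $+\infty$ convention for leaves is the right way to make the statement meaningful when $\beta$ is a singleton, since a singleton terminal set admits no Steiner cut.
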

The nearest mincut hierarchy is obtained by augmenting this Skeleton hierarchy with the vertex projections and chain maximizers as follows.

\paragraph{Vertex Projections.} 
We now describe the set of vertex projections that are stored in our data structure.
Let $\mu$ be any node in $\cT$, and let $\nu$ be its parent in $\cT$. 
Observe that $\mu$ is a node in the Skeleton $\cH_\nu$.
As described in \Cref{sec:technical-overview}, our query algorithm's strategy is to classify for every $\mu$-subtree in $\cH_\nu$ whether it is in $\Near(x,y)$ or not.
Therefore, for every $\mu$-subtree $T_\mu$ in $\cH_\nu$, select an arbitrary vertex $v\in T_\mu$ and call it the \textit{representative vertex} of $T_\mu$.
Then, store the projection of the representative vertex $v$ in the Skeleton as follows.
Whenever $\pi_\mu(v)$ is not a single nonempty node, simply store its two endpoints (or its single endpoint if it is an empty node). 
Otherwise, recursively move down into the node $\pi_\mu(v)$ and store its projection there.
Repeat this process until it reaches a nonempty node or a leaf in the hierarchy tree $\cT$.
Notice that the nodes which are reached in this process form a path in the hierarchy tree $\cT$ from $\nu$ to some node $\rho$.
Furthermore, for every node $\rho$ along this path, one can recover the projection $\pi_\rho(v)$ of $v$, since it is exactly the child in which the projection of $v$ is ultimately contained.
This process is formalized in \Cref{alg: projection} below, which is invoked by assigning $u\gets v$ and $\beta\gets \mu$.



\begin{algorithm} 
\caption{\textsc{Store-Projection}$(u,\beta)$}
\begin{algorithmic}[1]
\If{$\pi_\beta(u)$ is not a single nonempty node or $\pi_\beta(u)$ is a leaf in $\cT$}
\Statex Store the endpoints of $\pi_\beta(u)$;
\Else~\textsc{Store-Projection}$(u,\alpha)$, where $\pi_\beta(u)=\alpha$;
\EndIf
\label{alg: projection}
\end{algorithmic}
\end{algorithm}

\paragraph{Chain Maximizers.} 
The final component of our data structure are the chain-maximizers for nearest mincuts from \Cref{thm : chain-maximizer}.
Again consider the representative vertex $v$ for the $\mu$-subtree defined above.
Let $\alpha,\beta$ be the endpoints of $\pi_\gamma(v)$ such that $\alpha\ne \beta$, where $\gamma$ is the first node in $\cT$ in which the projection of $v$ is a path.
If $\alpha=\beta$ ($v$ is projected in an empty node/leaf), do not store maximizers for $v$.
From \Cref{thm : chain-maximizer} there exists a unique maximal element in each of the families $N_v^{T_\alpha}, N_v^{T_\beta}$, where $T_\alpha,T_\beta$ are defined as in \Cref{thm : chain-maximizer}.
Then, store the maximal elements of $N_v^{T_\alpha}, N_v^{T_\beta}$ alongside the projection of $v$.
Specifically, choose a pair of vertices $a,b$ (resp., $a',b'$) for which $\Near(a,b)$ (resp., $\Near(a',b')$) is the maximal element in $N_v^{T_\alpha}$ (resp., $N_v^{T_\beta}$), and store the pair $(a,b)$ (resp., $(a',b')$) with the projection of $v$ in our data structure.
Finally, store two additional flag bits $M_v^\alpha$ and $M_v^\beta$ with $v$, and set $M_v=1$ if $N_v^{T_\alpha}$ is an empty set; otherwise, $M_v=0$. 
Set $M_v^\beta$ similarly for $N_v^{T_\beta}$.


Finally, our complete data structure is obtained by augmenting the hierarchy tree $\cT$ and each of the stored skeletons $\cH_\mu$ for every internal node $\mu$ of $\cT$ with the following tree data structures. This augmentation is required only to answer our queries faster.  

\begin{fact}[\cite{HT84, BenderF04}] \label{fact: tree data structures}
    For any rooted tree $T$ on $n$ vertices, there is an $O(n)$ space data structure that
    \begin{itemize}
        \item given any vertices $u,v$ in $T$, reports in $O(1)$ time the lowest common ancestor ($\LCA$) of $u,v$  in $T$.
        \item given any vertex $u$ in $T$ and depth $d \leq \mathrm{depth}(v)$, reports in $O(1)$ time the ancestor of $v$ at depth $d$ in $T$. Here, \emph{depth} of a vertex in $T$ is its distance from the root (so the root has depth $0$).
    \end{itemize}
\end{fact}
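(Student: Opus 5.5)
This is a classical result, and I would simply reproduce the standard constructions of Bender and Farach-Colton rather than invent anything new. Both parts share one template: an $O(n\log n)$-space ``sparse table'' solution, plus a Four-Russians style decomposition that shrinks the problem by a $\Theta(\log n)$ factor so that the sparse table only costs $O(n)$. Tiny subproblems are then handled by precomputed lookup tables indexed by their $O(\log n)$-bit encodings. We work in the word-RAM with word size $\Theta(\log n)$, as is implicit throughout the paper.

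\emph{LCA.} First, perform an Euler tour of $T$. This gives an array $E$ of length $2n-1$ of visited vertices, together with their depths $D$, and the first occurrence index $f(u)$ of each vertex $u$. Then $\LCA(u,v)$ is the vertex of minimum depth in $E[\min(f(u),f(v)) \,..\, \max(f(u),f(v))]$, so the problem reduces to range-minimum queries on $D$. Consecutive entries of $D$ differ by exactly $\pm1$.

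Next, split $D$ into blocks of size $b=\tfrac12\log n$ and let $D'$ be the array of block minima, which has length $O(n/\log n)$. On $D'$, build the sparse table of minima over all ranges of length $2^j$. This uses $O((n/\log n)\log n)=O(n)$ space, and a query is answered by two overlapping lookups. Within a block, the $\pm1$ property means the block is determined, up to an additive shift, by a bit vector of length $b-1$. So there are only $2^{b-1}=O(\sqrt n)$ block types, and for each type we tabulate the answers to all $O(b^2)$ in-block queries, for $O(\sqrt n\log^2 n)=o(n)$ space in total. A general query then decomposes into a suffix of one block, a run of whole blocks, and a prefix of another block. Each piece takes $O(1)$ time.

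\emph{Level ancestor.} I would use the ladder algorithm with jump pointers.

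1. Decompose $T$ into longest paths: repeatedly remove a longest root-to-leaf path.
2. Extend each path of length $h$ upward by $h$ more vertices to form a ``ladder''. The total ladder size is $O(n)$.
3. The key property is that a vertex of height $h$ lies on a ladder that reaches at least $h$ further upward.
4. Jump pointers to the $2^j$-th ancestors, stored at every vertex, would let one jump of size $2^{\lfloor\log d\rfloor}$ be followed by a single ladder lookup. This gives $O(1)$ query time but $O(n\log n)$ space.

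To reach $O(n)$ space, store jump pointers only at the $O(n/\log n)$ ``jump nodes'': maximally deep vertices having at least $\tfrac14\log n$ descendants. Every such vertex sits at the top of a macro subtree. Each macro leaf points to a jump-node descendant, from which the jump-plus-ladder query applies.

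The remaining micro trees have fewer than $\tfrac14\log n$ vertices. Each is encoded by its balanced-parenthesis string of $O(\log n)$ bits, giving $O(\sqrt n)$ shapes. For each shape we tabulate all level-ancestor answers, again $o(n)$ space. A query either stays inside its micro tree, where it is a table lookup, or exits to the micro tree's root's parent and is then answered in the macro part.

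The main obstacle, and the only step I expect to need care, is the $O(n)$ space accounting for the level-ancestor structure. Specifically, one must verify that restricting jump pointers to jump nodes still leaves every macro vertex $O(1)$ steps from a usable pointer, and that the ladder guarantee survives this indirection. The rest is bookkeeping, and since the statement is cited from \cite{HT84, BenderF04}, in the paper itself it suffices to invoke those works.
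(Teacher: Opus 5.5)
Your proposal is correct and matches what the paper does: the paper gives no proof and simply cites Harel--Tarjan and Bender--Farach-Colton, and you accurately reproduce their Euler-tour/$\pm1$-RMQ and ladder/jump-pointer/macro--micro constructions. One wording fix: the macro leaves are the jump nodes themselves, so it is every macro node (every ancestor of a jump node) that stores a pointer to a jump-node descendant $j$. A query for the ancestor of $v$ at distance $d$ then becomes the ancestor of $j$ at distance $d+\mathrm{depth}(j)-\mathrm{depth}(v)$, and the jump-plus-ladder step goes through unchanged because the vertex reached by a $2^k$ jump has height at least $2^k$.
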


We are now ready to prove \Cref{lem: nearest mincut hierarchy space}.
\begin{proof}[Proof of \Cref{lem: nearest mincut hierarchy space}]
We now show that the space occupied by our data structure is $O(n)$. 
Recall that Skeleton hierarchy occupies $O(n)$ space: $O(n)$ space for storing $\cT$ and all the stored Skeletons in $\cT$  take $O(n)$ space in total. 
Thus, augmenting Skeleton hierarchy with the tree data structures of \Cref{fact: tree data structures} also takes $O(n)$ space. 

It remains to argue that the augmentation of projections and chain-maximizers takes overall $O(n)$ space.
Notice that for each stored projection, we store exactly two chain-maximizers, and hence the space taken by chain maximizers and projections is $O(k)$ as each one is encoded in $O(1)$ space, where $k$ is the number of stored projections.
Hence, it suffices to show that $k=O(n)$.
For every node $\mu$ in $\cT$, we store $O(\deg_{\cH_\nu}(\mu))$ projections, where $\nu$ is the parent of $\mu$ in $\cT$. 
Notice that $\sum_{\mu \text{ child of } \nu} O(\deg_{\cH_\nu}(\mu)) = O(|\cH_\nu|)$.
Therefore, the number of projections stored in the Skeleton $\cH_\nu$ is $O(|\cH_\nu|)$, and summing over all the nodes in $\cT$, we get that the total number of projections stored is $O(\sum_{\mu\in \cT} |\cH_\mu|)=O(n)$.
\end{proof}

\section{A Characterization of Skeleton Subtrees}
\label{sec : characterization of skeleton subtrees}
In this section, we establish our key structural insights to characterize a subtree in a Skeleton from Nearest mincut hierarchy with respect to $\Near(x,y)$. 
The characterization is useful both in our query algorithm for efficiently reporting $\Near(x,y)$ for a given pair of vertices $x,y$ in \Cref{sec:near-minimum-cut-query} and in designing sensitivity oracles in \Cref{sec:sensitivity-oracles-insertion-edge}. This result is summarized in the following theorem.
\begin{theorem}[Subtree Characterization]\label{thm: characterization of skeleton subtrees}
    Let $x,y\in V$ be two vertices and $\mu$ be a node in the hierarchy tree $\cT$ of nearest mincut hierarchy such that $x,y\in \mu$ and $\pi_\mu(x)=\alpha$.
    In addition, let $\nu$ be any ancestor of $\mu$ in $\cT$.
    Suppose in-$\alpha$-subtrees (and out-$\alpha$-subtrees) are known with respect to $\Near(x,y)$. 
    Then, for any vertex $v\in \nu\setminus \pi_{\nu}(x)$ satisfying $\pi_\mu(v)\ne \alpha$, the following characterizations hold: 
    \begin{enumerate}[label=\Roman*.]
        \item Suppose both endpoints of $\pi_\mu(v)$ belong to in-$\alpha$-subtrees. 
        Then, $v\in \Near(x,y)$.
        \item Suppose at least one endpoint of $\pi_\mu(v)$ is in an out-$\alpha$-subtree. 
        Then, $v\notin \Near(x,y)$.
        \item Suppose one endpoint of $\pi_\mu(v)$ is $\alpha$ and the other endpoint is in an in-$\alpha$-subtree. 
        Then, given the chain-maximizer $w,z$ of $v$ for the endpoint $\alpha$ of $\pi_\mu(v)$,  
        \begin{enumerate}
            \item if $z\in \Near(x,y)$, then $v\in \Near(x,y)$ and 
            \item if $z\not\in \Near(x,y)$, then $v\in \Near(x,y)$ if and only if $x\notin \Near(w,z)$.
            \item  If $w,z\notin \Near(x,y)$, then $v\notin \Near(x,y)$. 
    \end{enumerate}
    \end{enumerate}
\end{theorem}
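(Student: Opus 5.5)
The plan is to fix the notation first and then treat the three cases in turn. Let $\lambda=\lambda_{x,y}$. Since $x,y\in\alpha$ and $\alpha$ is a child of $\mu$ in $\cT$, \Cref{claim:mincut-value-increases} gives $\lambda>\lambda_\mu$. Throughout, $N:=\Near(x,y)$. Every $\mu$-mincut used below comes from a skeleton edge via \Cref{thm:skeleton} and \Cref{thm: proj vertex}. If an endpoint of $\pi_\mu(v)$ lies in an $\alpha$-subtree $A$ hanging on the edge $g_A$, then $\pi_\mu(v)$ is not fully contained in the $\alpha$-side of $\cH_\mu\setminus g_A$. Hence there is a $\mu$-mincut $D_A$ with $D_A\cap\mu=A$ and $v\in D_A$: either $g_A\in\pi_\mu(v)$, so $v$ is distinguished by $g_A$, or $\pi_\mu(v)\subseteq A$. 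In every case the proof compares $N$ with these cuts by submodularity, posimodularity, or the four component lemma.

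\textbf{Case II.} Let $B$ be an out-$\alpha$-subtree containing an endpoint of $\pi_\mu(v)$, and take $D=D_B$. Then $N\setminus D$ is an $(x,y)$-cut, because $x\in N$, $y\notin N$, and $x,y\in\alpha\subseteq\overline D$. Also $D\setminus N\supseteq B$ is a $\mu$-cut, since $B\cap N=\emptyset$ and $\alpha\cap D=\emptyset$. By posimodularity (\Cref{lem:sub-posi-general}), $N\setminus D$ is an $(x,y)$-mincut. Minimality of $N$ then gives $N\subseteq N\setminus D$, so $v\notin N$ because $v\in D$.

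\textbf{Case I.} If both endpoints of $\pi_\mu(v)$ lie in the same in-$\alpha$-subtree, this is \Cref{lem: fully contained in case}. Otherwise the endpoints lie in distinct in-subtrees $A_1,A_2$. Suppose toward contradiction that $v\notin N$. Apply \Cref{lem: four component lemma} to $D_{A_1}$, $D_{A_2}$ and $X:=\overline N$.
\begin{itemize}
\item $D_{A_1}\setminus(D_{A_2}\cup X)$ contains $A_1$ and misses $\alpha$, so it is a $\mu$-cut; symmetrically for $D_{A_2}\setminus(D_{A_1}\cup X)$.
\item $X\setminus(D_{A_1}\cup D_{A_2})$ contains $y$ and not $x$, so it is an $x,y$-cut.
\item The intersection $D_{A_1}\cap D_{A_2}\cap X$ contains $v$ and misses $x$.
\end{itemize}
Comparing $2\lambda_\mu+\lambda$ on both sides forces $c(D_{A_1}\cap D_{A_2}\cap X)=0$. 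This is impossible in a connected graph for a nonempty proper subset.

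\textbf{Case III.} Let $T_{in}\subseteq N$ be the in-subtree containing the other endpoint. The key reduction is that, because $x,y\in\alpha$ and $T_{in}\subseteq N$, we have $v\notin N$ iff $N\in N_v^{T_{in}}$. By \Cref{thm : chain-maximizer} this holds iff $N\subseteq\Near(w,z)$.
\begin{itemize}
\item[(a)] If $z\in N$, then since $z\notin\Near(w,z)$, $N\not\subseteq\Near(w,z)$, so $v\in N$.
\item[(b)] If $z\notin N$, I show $N\subseteq\Near(w,z)$ iff $x\in\Near(w,z)$; the forward direction is trivial. For the backward direction, $N\cap\Near(w,z)$ is an $(x,y)$-cut and $N\cup\Near(w,z)$ is a $(w,z)$-cut. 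Submodularity makes the intersection an $(x,y)$-mincut, hence $N\subseteq\Near(w,z)$.
\item[(c)] Assume $w,z\notin N$. $\Near(w,z)\subseteq N$ is impossible since $w\notin N$. If $N$ and $\Near(w,z)$ are incomparable, \Cref{lem: intro: main chain lemma} says one difference is an $x,y$-mincut and the other a $w,z$-mincut.
  \begin{itemize}
  \item If $N\setminus\Near(w,z)$ is the $x,y$-mincut, it misses $y$, so it must contain $x$. It is then an $(x,y)$-mincut strictly inside $N$, contradicting minimality.
  \item If instead $\Near(w,z)\setminus N$ is the $x,y$-mincut, then $N\setminus\Near(w,z)$ is the $w,z$-mincut. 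This cannot hold, since it misses both $w$ and $z$.
  \end{itemize}
  So $N\subseteq\Near(w,z)$, and $v\notin N$.
\end{itemize}

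I expect the main obstacle to be Case I with two distinct in-subtrees, because \Cref{lem: fully contained in case} no longer applies directly. The plan is the four component argument above, and the delicate point is checking that each of the three outer pieces is a genuine cut of the right type. A secondary point is orientation bookkeeping in Case III(c), where the main chain lemma only asserts ``$x,y$-mincut'' without specifying a direction.
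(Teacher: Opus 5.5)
Your proof is correct and follows the paper's proof essentially case by case. Case I uses \Cref{lem: fully contained in case} and the same four-component argument on the two $\mu$-mincuts through the in-edges together with $\overline{\Near(x,y)}$. Your posimodularity step in Case II is the paper's submodularity step applied to the complementary $\mu$-mincut. III(a) and III(b) coincide with the paper. The only deviation is III(c): you exclude incomparability via \Cref{lem: intro: main chain lemma}, whereas the paper applies posimodularity directly to $\Near(x,y)\setminus\Near(w,z)$ and $\Near(w,z)\setminus\Near(x,y)$. Both routes rely on $T_{in}\subseteq\Near(x,y)\cap\Near(w,z)$, which you use implicitly when you call $\Near(x,y)\setminus\Near(w,z)$ ``strictly inside'' $\Near(x,y)$, and you should state it.
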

\begin{proof}[Proof of \Cref{thm: characterization of skeleton subtrees}]       
We start from case (I).
Its proof is divided into two subcases, when $\pi_{\mu}(v)$ is fully contained in a single in-$\alpha$-subtree of $\cH_{\mu}$, and when it traverses two in-edges of $\alpha$ in $\cH_{\mu}$.
The first case was already handled in \Cref{lem: fully contained in case}, and we give the second case now.
\begin{lemma}\label{lem: two in edges}
    Suppose $\pi_{\mu}(v)$ contains two in-edges of $\alpha$. Then, $v\in \Near(x,y)$.
\end{lemma}
\begin{proof}
    Let $g_1,g_2$ be the two in-edges of $\alpha$ that belong to $\pi_\mu(v)$.
    Let $T_1,T_2$ respectively be the two $\alpha$-subtrees corresponding to $g_1,g_2$, and recall that $T_1,T_2\subseteq \Near(x,y)$.
    Since $\pi_{\mu}(v)$ contains $g_1$, we can use \Cref{thm: proj vertex} to choose an $\mu$-mincut $A_1$ defined by $g_1$ such that $T_1\subseteq A_1$, $v\in A_1$, and $\alpha\subseteq \overline{A_1}$.
    Similarly, we can choose a cut $A_2$ defined by $g_2$ such that $T_2\subseteq A_2$, $v\in A_2$, and $\alpha\subseteq\overline{A_2}$.
    It follows that $\alpha\subseteq \overline{A_1\cup A_2}$;
    and in turn $x\in \overline{A_1\cup A_2}$. Observe that $y\notin T_1\cup T_2$ since they are in-$\alpha$-subtrees. Therefore, $y\in \overline{A_1\cup A_2}$. 
    We now apply the four component lemma (\Cref{lem: four component lemma}) on the three cuts $A_1,A_2,$ and $B=\overline{\Near(x,y)}$.
    Notice that $A_1\setminus (A_2\cup B)$ and $A_2\setminus (A_1 \cup B)$ are $\mu$-cuts since they separate $\alpha$ from $T_1,T_2$ respectively.
    In addition, $B\setminus (A_1\cup A_2)$ is an $x,y$-cut since $x\notin B$ and $y\in B$.
    Therefore, by applying \Cref{lem: four component lemma} on these three cuts, we get that $c(A_1\cap A_2\cap B)=0$.
    Since $v\in A_1,A_2$ and $A_1\cap A_2\cap B=\emptyset$, we conclude that $v\notin B$; otherwise, $G$ becomes disconnected. Hence $v\in \overline{B}=\Near(x,y)$.
\end{proof}
\begin{lemma}[Case (II)]\label{lem: projection touches out edge}
    Suppose $\pi_{\mu}(v)$ contains an out-edge of $\alpha$.
    Then, $v\not\in \Near(x,y)$.
\end{lemma}
\begin{proof}
    Let $T'$ be an out-$\alpha$-subtree that belongs to $\pi_{\mu}(v)$, and let $g$ be the out-edge connecting $T'$ to $\alpha$.
    Consider a $\mu$-mincut $A$ defined by $g$ such that $v\notin A$, $T'\subseteq \overline{A}$, and $\alpha\subseteq A$.
    Such a cut exists by \Cref{thm: proj vertex} as $\pi_{\mu}(v)$ contains $g$.
    Observe that $x\in A\cap \Near(x,y)$, $y\notin A\cap \Near(x,y)$ since $y\not\in \Near(x,y)$. Also, $A\cup \Near(x,y)$ separates $\alpha$ from $T'$.
    It follows that $A\cap \Near(x,y)$ is an $(x,y)$-cut and $A\cup \Near(x,y)$ is an $\mu$-cut.
    Hence, $A\cap \Near(x,y)$ is an $(x,y)$-mincut by applying submodularity on the cuts $A$ and $\Near(x,y)$.
    By the minimality of $\Near(x,y)$, we get that $\Near(x,y)\subseteq A\cap \Near(x,y)$, and hence $v\notin \Near(x,y)$.
\end{proof}

The final case (III) is when $\pi_\mu(v)$ contains exactly one in-edge of $\alpha$. Let $T'$ be the subtree connected with this in-edge. 
As discussed in \Cref{sec:technical-overview}, this case is more intricate, and we need to utilize the chain maximizers  to characterize $v$ as being in or out with respect to $\Near(x,y)$. 
Let $w,z$ be the stored chain maximizer from  \Cref{thm : chain-maximizer} with vertex $v$ for in-$\alpha$-subtree $T'$ and endpoint $\alpha$ of $\pi_\mu(v)$. 
We split case (III) into three subcases, depending on the position of $w,z$ with respect to $\Near(x,y)$, and the position of $x$ with respect to $\Near(w,z)$.
Observe that subcase 3 is superfluous as it is superseded by subcase 2.
However, we retain subcase 3 to bypass the need to solve $\Near(w,z)$ in our query algorithms given in the following sections.  
\begin{lemma} \label{lem: case ii configuration III z is in or wz out}
    Suppose $\pi_\mu(v)$ contains exactly one in-edge of $\alpha$.
    Then,
    \begin{enumerate}
        \item if $z\in \Near(x,y)$, then $v\in \Near(x,y)$,
        \item if $z\not\in \Near(x,y)$, then $v\in \Near(x,y)$ if and only if $x\notin \Near(w,z)$, and
        \item if $w,z\notin \Near(x,y)$, then $v\notin \Near(x,y)$.
    \end{enumerate}
\end{lemma}
\begin{proof}
    Let $g$ be the in-edge in $\pi_{\mu}(v)$, and let $T'$ be the in-$\alpha$-subtree connected to $g$.
    For brevity, denote $W=\Near(w,z)$ and $X=\Near(x,y)$.
    Note that $T'\subseteq W$ by the definition of the family given by \Cref{thm : chain-maximizer}.
    In addition, since $g$ is an in-edge, we have that $T'\subseteq X$.
    Recall that $v\in \Near(x,y)$ if and only if $\Near(x,y)\not\subseteq \Near(w,z)$ by \Cref{thm : chain-maximizer}, and hence we focus on showing this condition.
    We fork on three cases based on the position of $w,z$ w.r.t. $X$.
    
    (1) Suppose $z\in X$. 
    Therefore, $X\not\subset W$ since $z\notin W$. So, by the choice of $W$, $v\in X$.
    
    (2)
    This proof is further split into two cases, depending on if $x\in W$ or not.
    If $x\not\in W$, then $X$ cannot be a subset of $W$ and $v\in X$.
    Now, suppose $x \in W$.
    Then $X \cap W$ is an $x,y$-cut, and $X \cup W$ is a $w,z$-cut since $z\not\in X$ by the assumption of this case.
    Hence, by submodularity (\Cref{lem:sub-posi-general}), $X \cap W$ is an $(x,y)$-mincut.
    It follows $\Near(x,y) \subseteq \Near(w,z)$ by the minimality of $\Near(x,y)$. Hence, $v\not\in \Near(x,y)$.

    (3) 
    Suppose $w,z\notin \Near(x,y)$.
    Observe that if $x\in \Near(w,z)$ then $v\notin \Near(x,y)$ by subcase 2.
    Therefore, assume towards contradiction that $x\not\in \Near(w,z)$.
    Then, $X\setminus W$ is an $(x,y)$-cut and $W\setminus X$ is a $(w,z)$-cut, where we used that $w,z\notin X$ and $x\notin W$.
    By posimodularity, $X\setminus W$ is an $(x,y)$-mincut and $W\setminus X$ is a $(w,z)$-mincut, and hence $X\cap W=\emptyset$ by the minimality of $\Near(x,y)$ and $\Near(w,z)$.
    However, this contradicts the fact that $T'\subseteq X\cap W$ since $T'\subseteq \Near(x,y)$ and $T'\subseteq \Near(w,z)$.
\end{proof} 
This completes our proof of \Cref{thm: characterization of skeleton subtrees}.
\end{proof}

\section{Algorithm for Reporting the Nearest Mincut for a Pair of Vertices}
\label{sec:near-minimum-cut-query}
 
In this section, we establish \Cref{lem: nearest mincut hierarchy query} by presenting our efficient algorithm that uses the nearest mincut hierarchy of $G$ from \Cref{lem: nearest mincut hierarchy space} to report $\Near(x,y)$, for any given ordered pair of vertices $(x,y)\in V$. 

\subsection{Query Algorithm}
The algorithm works by recursively ascending hierarchy tree $\cT$ in the nearest mincut hierarchy, starting from $\LCA_{\cT}(x,y)$ to the root. 
Let us denote the path in $\cT$ from $\LCA_{\cT}(x,y)$ to $V$ by $\LCA_{\cT}(x,y)=\alpha_1,\ldots,\alpha_k=V$. 
The algorithm maintains at every iteration $\Near(x,y)\cap\alpha_i$. By \Cref{fact:lca-mincut-value}, $\LCA_{\cT}(x,y)$-mincut value is the same as $x,y$-mincut value, and by construction of $\cT$, it is the first level where $x,y$ are separated.
In the first iteration, we determine $\Near(x,y)\cap\alpha_1$ using the skeleton $\cH_{\alpha_1}$, as discussed in \Cref{sec : base case}.
Then, in any $i$-th iteration, we determine $\Near(x,y)\cap\alpha_{i+1}$ using the skeletons $\set{\cH_{\alpha_j}}_{j\le i}$, projection and chain-maximizers of representative vertices from $\alpha_i$, and $\Near(x,y)\cap\alpha_{i}$.
The pseudocode is given in \Cref{alg:query}. 

\begin{algorithm}[H]
\caption{Reporting the nearest mincut $\Near(x,y)$ of graph $G=(V,E)$}
\label{alg:query}
\begin{algorithmic}[1]

\Require Nearest mincut hierarchy of $G$ and an ordered query pair $(x,y)\in V\times V$
\State Initialize $\Near(x,y)\gets \{x\}$;
\State $\mu' \gets \LCA_{\cT}(x,y)$;

\State Update $\Near(x,y)\gets \Near(x,y)\cap{\mu'}$ using the Skeleton $\cH_{\mu'}$ (\Cref{thm:skeleton}); 

\While{$\mu'$ is not the root of $\cT$}
    \State $\nu \gets \text{parent of } \mu \text{ in }{\cT}$;
    \For{each $\mu'$-subtree $A$ of Skeleton $\cH_\nu$}
        \State $v\gets $ representative vertex for $A$;
        \State $\rho \gets$ the node in $\cT$ where $\pi_\rho(v)$ is stored.
        \State $\mu\gets \LCA_{\cT}(\rho,\LCA_{\cT}(x,y))$
        \State $(w,z)\gets \text{Chain Maximizer for $v$ with endpoint $\pi_\mu(x)$}$ (\Cref{thm : chain-maximizer});
        \State Classify $A$ using \Cref{thm: characterization of skeleton subtrees}  and $\pi_\mu(v),(w,z)$.
    \EndFor
    \State Update $\Near(x,y)\gets \Near(x,y)\cap\nu$;
    \State $\mu' \gets \nu$;
\EndWhile

\State \Return $\Near(x,y)$

\end{algorithmic}
\end{algorithm}

\subsection{Correctness}
In this section we prove the correctness of \Cref{alg:query}.
Our main tool is the Skeleton subtree characterization of \Cref{thm: characterization of skeleton subtrees}. 
Throughout this section, fix an $\alpha_{i}$-subtree $T$ of $\cH_{\alpha_{i+1}}$ and let $v\in T$ be the representative of $T$.
Recall that by \Cref{lem: subtree classification}, every $\alpha_{i}$-subtree of $\cH_{\alpha_{i+1}}$ is either fully contained in $\Near(x,y)$ or fully disjoint from $\Near(x,y)$.
Therefore, it is sufficient to determine whether $v\in \Near(x,y)$ or not, in order to determine whether $T\subseteq \Near(x,y)$ or $T\cap \Near(x,y)=\emptyset$.  
Our main procedure, composing the main loop of \Cref{alg:query}, is given in the following lemma.
\begin{lemma}
    \label{lem: near from one level to the next}
    There exists an algorithm that, given $\Near(x,y)\cap \alpha_{i}$, can report $\Near(x,y)\cap \alpha_{i+1}$ in time $O(|\alpha_{i+1}\setminus \alpha_i|)$ using nearest mincut hierarchy of $G$.
\end{lemma}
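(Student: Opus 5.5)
The plan is to implement one iteration of the while loop of \Cref{alg:query} and check two things: that each $\alpha_i$-subtree of $\cH_{\alpha_{i+1}}$ is classified correctly, and that the total work is $O(|\alpha_{i+1}\setminus\alpha_i|)$.

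\textbf{Reduction to classifying representatives.} Write $\mu'=\alpha_i$ and $\nu=\alpha_{i+1}$. By \Cref{lem: subtree classification}, which applies to any node containing both $x$ and $y$, each $\mu'$-subtree $A$ of $\cH_\nu$ lies entirely inside or entirely outside $\Near(x,y)$. So it suffices to decide, for the stored representative $v$ of $A$, whether $v\in\Near(x,y)$. Once this is known, we output $\Near(x,y)\cap\alpha_i$ together with the vertices mapped to the in-subtrees.

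\textbf{Finding the right node.} Let $\rho$ be the node where \textsc{Store-Projection} stopped for $v$. The recursion followed a path in $\cT$ from $\mu'$ down to $\rho$, and along it $\pi_\gamma(v)$ is the single child on this path. Put $\mu=\LCA_\cT(\rho,\LCA_\cT(x,y))$, computed in $O(1)$ time with \Cref{fact: tree data structures}. This $\mu$ is a descendant of $\mu'$ that contains $x$ and $y$, and it lies on the stored path.
\begin{itemize}
\item If $\mu\neq\rho$, the child of $\mu$ containing $v$ differs from $\pi_\mu(x)$. Otherwise that child would be a common ancestor of $\rho$ and of $\LCA_\cT(x,y)$ below $\mu$, contradicting the choice of $\mu$. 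Hence $\pi_\mu(v)$ is a single node different from $\alpha:=\pi_\mu(x)$, recoverable by a level-ancestor query.
\item If $\mu=\rho$, the stored endpoints of $\pi_\rho(v)$ are used. In either case $\pi_\mu(v)\neq\alpha$, so case (IV) never arises.
\end{itemize}

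\textbf{The in/out status of $\alpha$-subtrees of $\cH_\mu$.} Every vertex of $\mu$ is already classified, since $\mu\subseteq\alpha_i$. The status of an $\alpha$-subtree is then read off from any vertex mapped to it, using \Cref{lem: subtree classification} applied at $\mu$. The point to check is that this status is available in $O(1)$ time per lookup; I would maintain, for each skeleton edge, a bit computed when that level was processed. We then apply \Cref{thm: characterization of skeleton subtrees} with this $\mu$ and $\nu$. The hypotheses hold: $v\in\nu\setminus\pi_\nu(x)$ because $v$ lies in a $\mu'$-subtree, and $\pi_\mu(v)\neq\alpha$ by the step above. The endpoints of $\pi_\mu(v)$ determine case I, II, or III.

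\textbf{Case III (the main obstacle).} In case III the stored chain maximizer $(w,z)$ belongs to endpoint $\alpha$, but it was stored at the node $\gamma$ where the projection first becomes a path. I must argue that $\gamma=\mu$ in this case. If $\mu$ lies strictly above $\rho$, the projection is a single node, which is not case III; so a path projection happens only at $\mu=\rho=\gamma$. Then:
\begin{itemize}
\item If $M_v^\alpha=1$, the family $N_v^{T_\alpha}$ is empty, so $v\in\Near(x,y)$.
\item If $z\in\Near(x,y)$, or if $w,z\notin\Near(x,y)$, the answer is read off directly, since $w,z\in\alpha\subseteq\mu$ are already classified.
\item In the remaining case ($w\in\Near(x,y)$, $z\notin\Near(x,y)$) we need to know whether $x\in\Near(w,z)$. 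The overview claims $\LCA_\cT(x,y)$ is a descendant of $\LCA_\cT(w,z)$. I would prove this using \Cref{claim:mincut-value-increases} together with submodularity: $\Near(x,y)\cap\Near(w,z)$ is an $(x,y)$-cut and the union is a $(w,z)$-cut, so $\lambda_{w,z}\le\lambda_{x,y}$. Given this, $x\in\Near(w,z)\cap\LCA_\cT(w,z)$ is decided in $O(1)$ time from the skeleton $\cH_{\LCA_\cT(w,z)}$ via \Cref{lem:laminar valid partition}, using LCA and level-ancestor queries to locate $\pi(x)$, $\pi(w)$, $\pi(z)$ and the edge toward $\pi(z)$.
\end{itemize}
Making the $O(1)$ skeleton side test rigorous, by rooting each skeleton and comparing subtree membership, is the delicate bookkeeping step.

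\textbf{Running time.} The number of $\mu'$-subtrees is at most $\deg_{\cH_\nu}(\mu')$, each handled in $O(1)$ time. Writing out the in-subtrees costs $O(|\nu\setminus\mu'|)$. Empty skeleton nodes can be charged via $|\cH_\nu|=O(d_\cT(\nu))$. I expect this to give $O(|\alpha_{i+1}\setminus\alpha_i|)$, possibly plus $O(1)$, which is absorbed since $\nu\setminus\mu'\neq\emptyset$.
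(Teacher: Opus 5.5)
Your overall route matches the paper's. You use the same choice $\mu=\LCA_\cT(\rho,\LCA_\cT(x,y))$, the same application of \Cref{thm: characterization of skeleton subtrees}, and the same flag and chain-maximizer tests. However, two steps fail as written.

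\textbf{Configuration (IV) at the LCA.} Your claim that configuration (IV) never arises is false when $\mu=\LCA_\cT(x,y)$.
\begin{itemize}
\item Suppose \textsc{Store-Projection} descended along $\alpha_i,\dots,\alpha_1$ and then into the child of $\alpha_1$ containing $x$. Then $\mu=\LCA_\cT(x,y)\neq\rho$. But that child is not an ancestor of $\LCA_\cT(x,y)=\mu$, so your ``common ancestor below $\mu$'' contradiction does not apply, and in fact $\pi_\mu(v)=\alpha$.
\item The same happens with $\mu=\rho$ when $\pi_\mu(x)=\{x\}$ is a leaf of $\cT$: the recursion stops there and the stored projection is $\alpha$ itself.
\end{itemize}
In both situations \Cref{thm: characterization of skeleton subtrees} does not apply, since its hypothesis $\pi_\mu(v)\neq\alpha$ fails. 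The paper handles this case separately: $\Near(x,y)$ is a $\mu$-mincut here, so \Cref{thm: proj vertex} gives $v\in\Near(x,y)$ directly.

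\textbf{The LCA lemma in the remaining Case III subcase.} Your argument that $\LCA_\cT(x,y)$ is a descendant of $\delta=\LCA_\cT(w,z)$ does not work.
\begin{itemize}
\item $\Near(x,y)\cap\Near(w,z)$ is an $(x,y)$-cut only if $x\in\Near(w,z)$, which is exactly what this step is trying to decide.
\item Even granting both cut conditions, submodularity only yields that the intersection and union are mincuts. It does not give the inequality $\lambda_{w,z}\le\lambda_{x,y}$.
\item That inequality is immediate anyway, because $\Near(x,y)$ itself separates $w$ from $z$.
\item The real problem is that comparing values via \Cref{claim:mincut-value-increases} only excludes $\delta$ being a strict descendant of $\LCA_\cT(x,y)$. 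It says nothing if the two nodes are incomparable in $\cT$: both values then exceed that of their common ancestor, with no forced order between them.
\end{itemize}
The missing ingredient is \Cref{lem: mincut does not separate any incomparable node}. If $\delta$ and $\LCA_\cT(x,y)$ were incomparable, no $x,y$-mincut could separate $w$ from $z$, yet $\Near(x,y)$ does. With comparability established, your value inequality rules out $\delta$ lying strictly below $\LCA_\cT(x,y)$. Hence $x\in\delta$, and the $O(1)$ side test in $\cH_\delta$ via \Cref{lem:laminar valid partition} goes through as you describe.
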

\begin{proof}
    Our algorithm works as follows. 
    It iterates over all $\alpha_{i}$-subtrees of $\cH_{\alpha_{i+1}}$ and classifies each subtree with respect to $\Near(x,y)$ with its representative vertex.
    Then, for every subtree $T$ that is classified as being contained in $\Near(x,y)$, it marks all vertices in $T$ as being in $\Near(x,y)$.
    Note that marking all vertices in $T$ takes time $O(|T|)$, and hence the total time for marking all vertices in $\alpha_{i+1}$ that are in $\Near(x,y)$ is $O(|\alpha_{i+1}\setminus \alpha_i|)$.
    In addition, as we show next, it takes $O(1)$ time to classify each $\alpha_{i}$-subtree, of which there are at most $O(|\alpha_{i+1}\setminus \alpha_i|)$, and hence the total time complexity of the algorithm is $O(|\alpha_{i+1}\setminus \alpha_i|)$.

    For the remainder of the proof, fix an $\alpha_{i}$-subtree $T$ of $\cH_{\alpha_{i+1}}$, and we show how to classify $T$ with respect to $\Near(x,y)$ in $O(1)$ time.
    Let $v$ be the representative terminal of $T$. To determine whether $v\in \Near(x,y)$, we need the two components stored with $v$ -- projection and chain maximizer. Recall that, by \Cref{alg: projection}, these two components for $v$ are stored at a descendant $\gamma$ of $\alpha_{i+1}$ in $\cT$, which is not necessarily the child $\alpha_i$ of $\alpha_{i+1}$. 
    Let $j$ be the first level such that $\pi_{\alpha_j}(v)\ne \pi_{\alpha_{j}}(x)$ or $j=1$ if for all $p\ge 1$ we have $\pi_{\alpha_p}(v)=\pi_{\alpha_{p}}(x)$. Note that when $\gamma$ is not on the path $\alpha_1,\ldots,\alpha_k$, then first determine $j$ in constant time by performing an $\LCA$ query on $\gamma$ and $\LCA_{\cT}(x,y)$ in $\cT$ of the nearest mincut hierarchy; that is $\LCA_\cT(\gamma,\alpha_1)$. 
    Then, we can determine $\pi_{\alpha_j}(v)$ using the level ancestor data structure in $O(1)$ time, by finding the child of $\alpha_j$ that is the ancestor of $\gamma$ in $\cT$. In the other cases, we have $\gamma$ is one of $\alpha_j$, where $j\in[1,k]$.   

    If $j=1$ and $\pi_{\alpha_1}(v)=\alpha_1(x)$, then $v\in \Near(x,y)$, as discussed in the easy base case in \Cref{sec : base case}.
    For the remainder of the proof, we assume that $j\ge 1$ and $\pi_{\alpha_j}(v)\ne \pi_{\alpha_{j}}(x)$. 
    This handles the configuration (IV) discussed in \Cref{sec: overview generic step}. 
    Hence, we only need to handle configurations (I-III) from  \Cref{thm: characterization of skeleton subtrees}.
    
    Throughout, denote $\alpha=\pi_{\alpha_j}(x)$.
    By our assumption, the algorithm has already determined for each edge $g$ adjacent to $\alpha$ in $\cH_{\alpha_j}$ whether the $\alpha$-subtree connected to $\alpha$ using $g$ belongs to $\Near(x,y)$ or not.
    Therefore, we can determine in $O(1)$ time that $\pi_{\alpha_j}(v)$ falls in exactly which of the three configurations (I-III) from \Cref{thm: characterization of skeleton subtrees}. 
    This immediately gives the classification of $v$ with respect to $\Near(x,y)$ in configurations (I) and (II). 
    In configuration (I), either $\pi_{\alpha_j}(v)$ is fully contained in an in-$\alpha$-subtree or contains two in-edges of $\alpha$. In both cases, we classify $v\in \Near(x,y)$ by \Cref{lem: two in edges,lem: fully contained in case}.
    In configuration (II), $\pi_{\alpha_j}(v)$ has one endpoint in an out-$\alpha$-subtree. Here, $v\notin \Near(x,y)$ by \Cref{lem: projection touches out edge}.
    Finally, it remains to consider the case when $\pi_{\alpha_j}(v)$ contains exactly one in-edge of $\alpha$, which is configuration (III).

    In this case, let $w,z$ be the stored chain-maximizers in the nearest mincut hierarchy. If no such chain-maximizers exist, then $M_v^\alpha=1$ by our stored information, and we classify $v\in \Near(x,y)$.
    One can determine whether $z\in \Near(x,y)$ and $w\in \Near(x,y)$ in $O(1)$ time since $w,z$ are terminals in $\alpha$, and $\Near(x,y)\cap \alpha$ is already known.
    Hence, the conditions of \Cref{lem: case ii configuration III z is in or wz out}(1,3) can be verified in $O(1)$ time: if $z\in \Near(x,y)$ or $w,z\notin \Near(x,y)$.
    In the former case, $v\in \Near(x,y)$ and in the latter case $v\notin \Near(x,y)$ by \Cref{lem: case ii configuration III z is in or wz out}(1,3). 
    As a special case, note that if $j=1$, then $z\in \Near(x,y)$ since $z$ is a terminal in $\alpha_1$ and $\alpha_1\subseteq\Near(x,y)$,
    Therefore, it remains to consider the case when $z\notin \Near(x,y)$ and $w\in \Near(x,y)$.
    In this case, we established in \Cref{lem: case ii configuration III z is in or wz out}(2) that $v\in \Near(x,y)$ if and only if $x\not\in \Near(w,z)$.
    To efficiently verify the latter condition, we utilize the following lemma, whose proof is deferred for readability.
    \begin{lemma} \label{lem: xy are terminals in the lca of wz}
        If $\set{w,z}\cap \Near(x,y)=\set{w}$, then $\LCA_{\cT}(x,y)$ is a descendant of $\LCA_{\cT}(w,z)$ in the nearest mincut hierarchy.
    \end{lemma}
    Denote $\LCA_{\cT}(w,z)$ by $\delta$.
    We show that $\delta\in \set{\alpha_1,\ldots,\alpha_j}$. 
    Firstly, $\delta$ must be a descendant of $\alpha_j$ as $w,z\in \alpha_j$.
    In addition, $\delta$ must be an ancestor of $\alpha_1$ by \Cref{lem: xy are terminals in the lca of wz}.
    Hence, $\delta$ lies on the path from $\LCA_{\cT}(x,y)=\alpha_1$ to $\alpha_j$ in the nearest mincut hierarchy. 
    Therefore, $x,y$ are vertices belonging to $\delta$.
    This allows us to determine whether $x\in \Near(w,z)$ in $O(1)$ time using the skeleton $\cH_{\delta}$, which is a standard application of $\LCA$ data structures stored for $\cH_\delta$ (e.g., see Lemma 2.15 in \cite{DBLP:conf/soda/BaswanaP22}).
    This concludes the proof of \Cref{lem: near from one level to the next}.
\end{proof}
We now establish \Cref{lem: xy are terminals in the lca of wz}.
\begin{proof}[Proof of \Cref{lem: xy are terminals in the lca of wz}]
    Denote $\delta:=\LCA(w,z)$. 
    Observe that $\Near(x,y)$ is an $\alpha_1$-mincut that separates terminals $w,z\in \gamma$ since $\set{w,z}\cap \Near(x,y)=\set{w}$, So, the capacity of any $\delta$-mincut is at most the capacity of $\alpha_1$-mincut. 
    Therefore, by construction of $\cT$ in the nearest mincut hierarchy, $\delta$ is either an ancestor of $\alpha_1$ ($\alpha_1=\delta$ is possible) or they are incomparable in $\cT$ (that is, $\delta$ is not on the path from $\alpha_1$ to the root $V$ of $\cT$).
    However, the latter is impossible since by \Cref{lem: subtree classification}, $\Near(x,y)$ can separate any subtree (and in particular it cannot separate any incomparable node) of the Skeletons from the ancestors of $\alpha_1$ in $\cT$, but $\Near(x,y)$ separates $w,z\in \gamma$.
\end{proof}
We conclude this section by establishing \Cref{lem: nearest mincut hierarchy query} using \Cref{lem: near from one level to the next} as follows.

\begin{proof}[Proof of \Cref{lem: nearest mincut hierarchy query}]
    Begin by finding $\Near(x,y)$ in the skeleton $\cH_{\alpha_1}$, which takes time $O(|\cH_{\alpha_1}|)$.
    Then, for each node $\mu\in \cH_{\alpha_1}$ that is in $\Near(x,y)$, mark the vertices in $\mu$ as being in $\Near(x,y)$.
    This takes time $O(|\alpha_1|)$ since we spend $O(1)$ time for each vertex in $\alpha_1$.
    Then, for each $i\in \{2,\ldots,k\}$, we apply \Cref{lem: near from one level to the next} to find $\Near(x,y)\cap \alpha_i$ from $\Near(x,y)\cap \alpha_{i-1}$ in time $O(|\alpha_i\setminus \alpha_{i-1}|)$.
    Since $\sum_{i=1}^k |\alpha_i\setminus \alpha_{i-1}|=O(n)$, the total running time of the algorithm is $O(n)$.
\end{proof}

\section{Reporting the Nearest Mincut Tree of a Vertex}
\label{sec:nearest-mincut-tree-construction-query}

In this section we prove \Cref{thm : nearest mincut tree reporting}.
To do so, we prove the following theorem, which combined with \Cref{lem: nearest mincut hierarchy space} immediately yields \Cref{thm : nearest mincut tree reporting}.
\begin{theorem}
\label{theorem:nearest-mincut-tree-construction-query}
    There exists an algorithm that, given a nearest mincut hierarchy of a graph $G=(V,E)$ and a source vertex $s\in V$, constructs the nearest mincut tree of $s$ in $O(n)$ time.
\end{theorem}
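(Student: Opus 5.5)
Our plan is to build $T_s$ top-down along the hierarchy tree $\cT$, following the root-to-leaf path $V=\beta_0,\beta_1,\dots,\beta_k=\{s\}$ of ancestors of $s$. The cuts $\Near(u,s)$, $u\neq s$, form a laminar family. For any $u$, the node $\LCA_\cT(u,s)$ is one of the $\beta_i$. We write $U_i=\beta_i\setminus\beta_{i+1}$; these are exactly the vertices $u$ with $\LCA_\cT(u,s)=\beta_i$. By \Cref{fact:lca-mincut-value} and \Cref{claim:mincut-value-increases}, the value $\lambda_{u,s}$ equals the $\beta_i$-mincut value and strictly increases with $i$.

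\textbf{Key observation.} For $u\in U_i$, $\Near(u,s)$ is a $\beta_i$-mincut not containing $s$, so it does not separate $\beta_{i+1}$. Applying \Cref{lem: subtree classification}-type submodularity arguments at every level $\le i$, the cut $\Near(u,s)$ is a union of $\beta_{i}$-subtrees pieces together with whole subtrees hanging off the path. Hence $\Near(u,s)$ is disjoint from $\beta_{i+1}$ and lies inside $V\setminus\beta_{i+1}$. The laminar family restricted to $U_i$ is the single-source nearest-mincut structure for terminal set $\beta_i$ towards $\pi_{\beta_i}(s)$. Concretely, root $\cH_{\beta_i}$ at $\pi_{\beta_i}(s)$. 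For a terminal $u\in U_i$, \Cref{lem:laminar valid partition} shows that $\Near(u,s)\cap\beta_i$ is the skeleton subtree below $\pi_{\beta_i}(u)$. So among terminals of level $i$, the tree $T_s$ is the rooted $\cH_{\beta_i}$ with empty nodes contracted. This costs $O(|\cH_{\beta_i}|)$, and over all levels $O(n)$.

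\textbf{Attaching the rest of each cut.} We must determine which vertices outside $\beta_i$ (in $V\setminus\beta_i$) lie in each $\Near(u,s)$, and we must interleave the levels. Since cut values strictly increase with $i$, a nearest cut at level $i$ can never contain one at a level $j<i$ in a way that mixes values incorrectly. By laminarity, for $j<i$ each cut $\Near(u,s)$ with $u\in U_j$ either contains all of $\beta_{j+1}\supseteq\beta_i$ or is disjoint from it. Hence cuts from deeper levels nest inside subtrees at shallower levels only through the subtree hanging off $\beta_{j+1}$; since $s\in\beta_{j+1}$, they are in fact disjoint. So $T_s$ is obtained by hanging the level-$i$ tree structures together.

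The work is to classify each $\beta_{i+1}$-subtree $A$ of $\cH_{\beta_i}$ (a union of siblings of $\beta_{i+1}$) and its representative $v$. The question is which level-$i$ terminal's cut first captures it, i.e., the deepest node of the rooted $\cH_{\beta_i}$ whose cut contains $v$. By \Cref{thm: proj vertex} and the same argument as in \Cref{sec : base case}, $v\in\Near(u,s)$ iff $\pi_{\beta_i}(v)$ lies inside the subtree below $\pi_{\beta_i}(u)$. Therefore $A$ attaches below the LCA, in $\cH_{\beta_i}$ rooted at $\pi_{\beta_i}(s)$, of the endpoints of the stored projection of $v$. We obtain this projection from the stored one by recursing down via level-ancestor queries, exactly as in the proof of \Cref{lem: near from one level to the next}. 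Using \Cref{fact: tree data structures}, each attachment costs $O(1)$.

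\textbf{Main obstacle.} The delicate part is that each subtree $A$ is itself internally structured, and vertices inside $A$ are not $\beta_i$-terminals. Their own nearest cuts to $s$ have larger value and arise from deeper nodes of $\cT$ inside $A$. We must show that the internal structure of $A$ in $T_s$ is determined recursively by the sub-hierarchy of $\cT$ below the children of $\beta_i$ in $A$. The direction is taken toward the attaching edge, since $s$ is outside $A$, so every such child behaves like a single-source problem whose "source" is the boundary toward $\pi(s)$. Because $s\notin A$, no chain maximizers should be needed; the configuration (III) ambiguity arises only when the source side is split. Proving this reduction via submodularity and \Cref{lem: subtree classification}, and checking that the total work telescopes to $\sum_\mu|\cH_\mu|=O(n)$, is where we expect the main effort.
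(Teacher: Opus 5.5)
Your local trees (the rooted $\cH_{\beta_i}$ with empty nodes contracted) are correct and match the paper's local nearest mincut trees. But there is a genuine gap exactly where the paper's proof does its work: deciding where each lower-level piece hangs in $T_s$.

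\textbf{Where the proposal breaks.} Your claim that cuts of different levels are disjoint is false, and the sentence before it is the wrong way around. A cut $\Near(u,s)$ with $u\in U_i$ has larger value than the cuts of levels $j<i$. So it lies nearer the root $s$ of $T_s$ and may contain them. In the example below, $\Near(x_1,s)\supsetneq\Near(t,s)=\{t\}$, although $\LCA_\cT(x_1,s)\neq\LCA_\cT(t,s)$. These cross-level containments are precisely what $T_s$ records beyond the local trees.

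Your ``main obstacle'' is a non-issue. Every child $\eta\neq\beta_{i+1}$ of $\beta_i$ is a single node of $T_s$: for $a,b\in\eta$ we have $\lambda_{a,b}>\lambda_{a,s}$, which forces $\Near(a,s)=\Near(b,s)$ (\Cref{lemma:nearest-mincut-vertices}). Also, $\lambda_{u,s}$ equals the $\beta_i$-mincut value for every $u\in U_i$, not something larger.

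The real task is this: for the representative $v$ of each $\beta_{i+1}$-subtree of $\cH_{\beta_i}$, find the deepest $x\in\beta_{i+1}$ in $T_s$ with $v\in\Near(x,s)$. Read your rule as taking the LCA of the endpoints of the stored projection $\pi_{\beta_{i+1}}(v)$ in $\cH_{\beta_{i+1}}$ rooted at $\beta_{i+2}$. (As written you work in $\cH_{\beta_i}$, where $v$ is itself a terminal and the question is vacuous.) This rule is valid only if the first non-empty node at or above that LCA is not $\beta_{i+2}$.

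Otherwise the candidates are $x\in\beta_{i+2}$. Their cuts $\Near(x,s)$ are not $\beta_{i+1}$-mincuts, so \Cref{thm: proj vertex} does not apply. Each such cut splits the node $\beta_{i+2}\ni x,s$. So when $\pi_{\beta_{i+1}}(v)$ ends at $\beta_{i+2}$, you are in configuration (III) of \Cref{thm: characterization of skeleton subtrees}, and the fact that $s\notin A$ does not help.

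\textbf{The stored data without chain maximizers do not determine $T_s$.} Take vertices $s,x_1,x_2,t,v$ with weights
$w(s,x_2)=H'$, $w(x_2,x_1)=H+r$, $w(x_1,t)=p$, $w(t,v)=q$, $w(v,x_1)=r$, $w(v,x_2)=q-r$, where $0\le r<q<p\ll H\ll H'$ (weight $0$ means no edge).
\begin{itemize}
\item For $r=0$ and for every $0<r<q$, the following all coincide: the tree $\cT$ (root with children $\{v\}$ and $\mu=\{s,x_1,x_2,t\}$; then $\{t\}$ and $\{s,x_1,x_2\}$; then $\{x_1\}$ and $\{s,x_2\}$), all skeletons, all mincut values, and all stored projections. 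In particular, $\pi_\mu(v)$ is the single edge of $\cH_\mu$ in both graphs.
\item Yet $\Near(x_1,s)=\{x_1,t\}$ for $r=0$ and $\Near(x_1,s)=\{x_1,t,v\}$ for $r>0$.
\item Hence the parent of $v$ in $T_s$ is $x_2$ in the first case and $x_1$ in the second.
\end{itemize}

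\textbf{How the paper closes this.} Case (c) of \Cref{lemma:find-attachment-node-query} uses the stored maximizer $(w,z)$ of \Cref{thm : chain-maximizer}. Let $u$ be a non-empty node of the subtree met by $\pi_{\beta_{i+1}}(v)$. If $\lambda_{w,s}\le\lambda_{z,s}$, then $\Att(v)$ is the first node other than $w$ on the path from $\LCA(u,w)$ to $\LCA(u,z)$. Otherwise $\Att(v)=\LCA(u,z)$, via \Cref{lem: crossing mincuts with special assignment of vertex pairs}. The two-edge case takes an LCA of two non-empty nodes, and the leftover subcase of (a) reuses an earlier attachment.

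All of these query the already-built higher-valued part of $T_s$ on $\beta_{i+1}$ through a dynamic LCA oracle. That is why the paper ascends $\cT$ from $s$, rather than descending from the root as you propose.
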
 

The tree construction query works similarly to our nearest mincut reporting query (\Cref{lem: nearest mincut hierarchy query}), by recursively ascending the tree, starting from the parent of $s$.
Denoting the leaf to root path from $s$ to $V$ by $s=\alpha_1,\ldots,\alpha_k=V$, the algorithm maintains at every iteration $i\in [k]$ the nearest mincut tree of $s$ restricted to the vertices of $\alpha_i$.
Denote this partial tree by $T_s[\alpha_i]$, and note that the node containing $s$ in $\alpha_i$ is $\alpha_{i-1}$.
An illustration of the path from $s$ to $V$ in the nearest mincut hierarchy is given in \Cref{fig:tree-construction-path}.
Throughout, let $\Children(\alpha_i)$ be the set of children of $\alpha_i$ in $\cT$ of nearest mincut hierarchy, and denote $W\coloneqq \bigcup_{i=1}^k \left( \Children(\alpha_i)\setminus \set{\alpha_{i-1}} \right)$.
Our first result below shows that the nodes forming the nearest mincut tree of $s$ are exactly all the children of $\set{\alpha_i}_{i=1}^k$ in the nearest mincut hierarchy, minus the nodes $\set{\alpha_i}_{i=1}^k$ themselves.  

\begin{figure}[htbp]
    \centering
    \begin{tikzpicture}[
  vertex/.style={circle, draw, inner sep=2pt, minimum size=0.85cm},
  leaf/.style={circle, draw, inner sep=1pt, minimum size=0.5cm}
]
 
\node[vertex] (V) at (3.5, 5.5) {$\alpha_k {=} V$};
\node[vertex] (B) at (2.0, 3.5) {$\alpha_{k-1}$};
\node[vertex] (C) at (0.5, 1.5) {$\alpha_1$};
 
\draw (V) -- (B);
 
\foreach \t in {0.3, 0.5, 0.7}{
  \fill ($(B)!\t!(C)$) circle (1.5pt);
}
 
\node[leaf] (lv1) at (4.7, 6.1) {};
\node[leaf] (lv2) at (5.0, 5.5) {};
\node[leaf] (lv3) at (4.7, 4.9) {};
\draw (V) -- (lv1);
\draw (V) -- (lv2);
\draw (V) -- (lv3);
 
\node[leaf] (lb1) at (3.2, 4.1) {};
\node[leaf] (lb2) at (3.5, 3.5) {};
\node[leaf] (lb3) at (3.2, 2.9) {};
\draw (B) -- (lb1);
\draw (B) -- (lb2);
\draw (B) -- (lb3);
 
\node[leaf] (lc1) at (1.7, 2.1) {};
\node[leaf] (lc2) at (2.0, 1.5) {};
\node[vertex] (s)  at (-0.8, 0.3) {$s$};
\draw (C) -- (lc1);
\draw (C) -- (lc2);
\draw (C) -- (s);
 
\end{tikzpicture}
    \caption{Tree construction path.}
    \label{fig:tree-construction-path}
\end{figure}

\begin{lemma}
    \label{lemma:nearest-mincut-vertices}
    The set of nodes of $T_s[V]$ is $W$. 
\end{lemma}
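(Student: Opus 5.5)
The plan is to show that two vertices $u,u'\in V\setminus\{s\}$ satisfy $\Near(u,s)=\Near(u',s)$ if and only if they lie in the same member of $W$. The nodes of $T_s$ are exactly the classes of vertices sharing the same nearest mincut to $s$, with $s$ itself sitting at the root. So this equivalence identifies the node set with $W$, after the root-node convention for $s$ is settled. Every $u\neq s$ lies in exactly one member of $W$: take the largest $i$ with $u\notin\alpha_{i-1}$, and $u$ is then in a child of $\alpha_i$ other than $\alpha_{i-1}$. Hence $W$ partitions $V\setminus\{s\}$.

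\emph{Same member implies same cut.} Let $\eta\in\Children(\alpha_i)\setminus\{\alpha_{i-1}\}$ and $u,u'\in\eta$. Then $\LCA_\cT(u,s)=\alpha_i=\LCA_\cT(u',s)$. By \Cref{fact:lca-mincut-value}, $\Near(u,s)$ is an $\alpha_i$-cut whose capacity equals the $\alpha_i$-mincut value, so it is an $\alpha_i$-mincut. Since $\eta$ is an $\alpha_i$-equivalence class, no $\alpha_i$-mincut separates $u$ from $u'$, and hence $u'\in\Near(u,s)$. The cut $\Near(u,s)$ is therefore a $(u',s)$-cut of capacity $\lambda_{u',s}$, again by \Cref{fact:lca-mincut-value}. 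By minimality, $\Near(u',s)\subseteq\Near(u,s)$. The symmetric argument gives the reverse inclusion, so the two cuts are equal.

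\emph{Different members implies different cuts.} Let $u\in\eta\in\Children(\alpha_i)$ and $u'\in\eta'\in\Children(\alpha_j)$ with $\eta\neq\eta'$. If $i\neq j$, say $i<j$, then $\alpha_j$ is a strict ancestor of $\alpha_i$. By \Cref{claim:mincut-value-increases} we get $\lambda_{u',s}=\lambda_{\alpha_j}<\lambda_{\alpha_i}=\lambda_{u,s}$. The two nearest mincuts therefore have different capacities and are distinct.

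If $i=j$, then $\eta,\eta'$ and $\alpha_{i-1}$ are three distinct $\alpha_i$-equivalence classes, corresponding to three distinct nonempty nodes of $\cH_{\alpha_i}$. By \Cref{lem:laminar valid partition}, $\Near(u,s)\cap\alpha_i$ is the side containing $\eta$ of the skeleton edge $g$ incident on $\eta$ in the direction of $\alpha_{i-1}$. Likewise $\Near(u',s)\cap\alpha_i$ corresponds to the edge $g'$ incident on $\eta'$ toward $\alpha_{i-1}$. Suppose the two cuts were equal. Then $g$ and $g'$ would induce the same laminar valid partition, so $g=g'$ by the bijection in \Cref{thm:skeleton}. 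Thus $g$ is incident on both $\eta$ and $\eta'$, i.e.\ $g=(\eta,\eta')$. But then $g$ places $\eta$ and $\eta'$ on opposite sides, whereas both must lie on the side not containing $\alpha_{i-1}$, a contradiction.

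The main obstacle I expect is this last case $i=j$, which is where the skeleton machinery is actually needed rather than capacity comparisons. The fallback is a direct argument: if $\Near(u,s)=\Near(u',s)=:C$, then $C$ is the common nearest $(\{u,u'\},s)$-mincut. Take an $\alpha_i$-mincut $D$ separating $\eta$ from $\eta'$, and apply submodularity (\Cref{lem:sub-posi-general}) to $C$ and $D$ (or to $C$ and $\overline{D}$) to obtain a strictly smaller $(u,s)$- or $(u',s)$-mincut, contradicting minimality.
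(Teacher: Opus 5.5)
Your proof is correct and follows the same three-part structure as the paper's proof.

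\begin{itemize}
\item Vertices in one member of $W$ share a nearest cut to $s$. The paper gets $y\in\Near(x,s)$ from the strict inequality $\lambda_{x,y}>\lambda_{x,s}$ of \Cref{claim:mincut-value-increases}. You get it directly from $\alpha_i$-equivalence, which amounts to the same thing.
\item Members of $W$ at different levels are distinguished by capacity, exactly as in the paper.
\item Siblings at the same level need one further argument, and only here do your routes differ.
\end{itemize}

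In that last case you use \Cref{lem:laminar valid partition} together with the edge/partition bijection of \Cref{thm:skeleton}. This is valid: equal cuts would force $g=g'=(\eta,\eta')$, which puts $\eta$ and $\eta'$ on opposite sides. But it brings in carcass machinery the lemma does not need.

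The paper argues directly, essentially via your fallback without the submodularity step. Take an $\alpha_i$-mincut $D$ separating $\eta$ from $\eta'$. One side of $D$ avoids $s$ and contains exactly one of $u,u'$, say $u$. That side is already a $(u,s)$-mincut, since its capacity is $\lambda_{\alpha_i}=\lambda_{u,s}$. Hence $\Near(u,s)$ lies inside it and excludes $u'$.

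You also check explicitly that $W$ partitions $V\setminus\{s\}$. You note as well that the root node $\{s\}$ is not in $W$ as defined and has to be handled by convention. The paper leaves both points implicit, so your write-up is slightly more careful there.
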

\begin{proof}
    We first argue that no node in $W$ is split in the nearest mincut tree of $s$.
    Let $w\in W$ be some node  such that $w\in \Children(\alpha_i)$, and fix two vertices $x,y\in w$.
    We show that $\Near(x,s)=\Near(y,s)$, which implies that $x,y$ are not separated by any nearest minimum cut to $s$, and in particular they are in the same node in the nearest mincut tree of $s$.
    Observe that since $\LCA(x,y)$ is a descendant of $\alpha_i$ we have by \Cref{claim:mincut-value-increases} that $\lambda_{x,y}>\lambda_{x,s}$, since $x,y$ are separated from $s$ by an $\alpha_i$-mincut.
    Therefore, any mincut separating $x$ from $s$ includes $y$, and any mincut separating $y$ from $s$ includes $x$.
    Hence, $\Near(x,s)\cap\Near(y,s)\ne \emptyset$, and by the laminarity of nearest mincuts we find $\Near(x,s)=\Near(y,s)$.

    We now show that every node in $W$ is not merged with any other node in the nearest mincut tree of $s$.
    Notice that any child of $\alpha_i$ and any child of $\alpha_j$ for $i\neq j$ are not merged in the tree, as the value of their mincut to $s$ is different by \Cref{claim:mincut-value-increases}.

    We conclude by showing that for every $i\in [k]$, no two children of $\alpha_i$ are merged in the $T_s[V]$.
    Fix two children $w_1,w_2$ of $\alpha_i$ in the nearest mincut hierarchy, and let $x\in w_1,y\in w_2$.
    Notice that $x,y$ are in the same node in $T_s[V]$ if and only if $\Near(x,s)=\Near(y,s)$.
    Notice that since $w_1,w_2$ are different nodes, there exists some $\alpha_i$-mincut $C\subseteq V$ that separates them.
    Assume without loss of generality that $x\in C$ and $s\not\in C$ (the cut must include exactly one of $x,y$ and might include $s$; if it includes $s$ then swap the roles of $x,y$).
    Therefore, $\Near(x,s)$ does not contain $y$, by the laminarity of nearest mincuts.
    Therefore, $\Near(x,s)\ne \Near(y,s)$, and $x,y$ are in different nodes in $T_s[V]$.
\end{proof}

To capture the structure of the nearest mincuts in relation to $s$ in each skeleton, we construct a \emph{local nearest mincut tree} $\Ltree_{\alpha_i}^s$ of $s$ in $\cH_{\alpha_i}$, which is defined as follows.
Begin by rooting $\cH_{\alpha_i}$ at $\alpha_{i-1}$.
Then, merge every empty node $q$ with its parent.
When constructing the tree, we mark for every non-empty node $q$ in $\cH_{\alpha_i}$ the non-empty node to which it is merged.
The resulting tree is the nearest mincut tree of $\alpha_{i-1}$, limited to the vertices of $\alpha_i$.
To make this more precise, we define the notion of a nearest mincut between disjoint vertex sets.
Given vertex sets $A,B\subseteq V$, we define $\Near(A,B)$ to be the nearest minimum cut separating $A$ from $B$, i.e., when contracting $A$ and $B$ into two supervertices, $\Near(A,B)$ is the nearest minimum cut separating these two supervertices.
\begin{claim}
    \label{claim:local-nearest-minimum-cut-tree}
    The resulting tree $\Ltree_{\alpha_i}^s$ is the nearest mincut tree of $\alpha_{i-1}$ limited to the vertices of $\alpha_i$, i.e. $T_{\alpha_{i-1}}[\alpha_i]$.
\end{claim}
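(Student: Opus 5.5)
We prove that $\Ltree_{\alpha_i}^s$ coincides with the laminar family $\{\Near(u,\alpha_{i-1})\cap\alpha_i : u\in\alpha_i\setminus\alpha_{i-1}\}$, where $\alpha_{i-1}$ is treated as a contracted supervertex containing $s$. The comparison is with the nearest mincuts in $G$ itself, not in an induced graph. The plan has three steps: show each such cut is an $\alpha_i$-mincut, read off its trace on $\alpha_i$ from the skeleton via \Cref{lem:laminar valid partition}, and match these traces with the subtrees of the rooted, merged skeleton.

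\textbf{Step 1 (the cuts are $\alpha_i$-mincuts).} Fix a child $\eta\neq\alpha_{i-1}$ of $\alpha_i$ and a vertex $u\in\eta$. Since $\eta$ and $\alpha_{i-1}$ are distinct $\alpha_i$-equivalence classes, some $\alpha_i$-mincut separates $u$ from $\alpha_{i-1}$. Any $(u,\alpha_{i-1})$-cut is an $\alpha_i$-cut, so $\lambda_{u,\alpha_{i-1}}=\lambda_{\alpha_i}$, and $\Near(u,\alpha_{i-1})$ is an $\alpha_i$-mincut. In particular it does not split any $\alpha_i$-equivalence class, so it induces a valid partition of $\alpha_i$. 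The same submodularity argument as in the proof of \Cref{lemma:nearest-mincut-vertices} shows that all $u\in\eta$ give the same cut. So the nodes of the target tree are the children of $\alpha_i$, which are exactly the nonempty skeleton nodes, as claimed.

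\textbf{Step 2 (the trace on $\alpha_i$).} Apply \Cref{lem:laminar valid partition} with terminal set $S=\alpha_i$, $s'=u$ and $t'$ any vertex of $\alpha_{i-1}$. This shows that $\Near(u,t')\cap\alpha_i$ is the component of $\pi_{\alpha_i}(u)$ after deleting the first skeleton edge on the path toward $\alpha_{i-1}$. When $\cH_{\alpha_i}$ is rooted at $\alpha_{i-1}$, this component is exactly the skeleton subtree rooted at $\pi_{\alpha_i}(u)$.

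We still need $\Near(u,t')=\Near(u,\alpha_{i-1})$. Since $\alpha_{i-1}$ is an equivalence class, every $(u,t')$-mincut (an $\alpha_i$-mincut) contains no vertex of $\alpha_{i-1}$. Hence the $(u,t')$-mincuts and the $(u,\alpha_{i-1})$-mincuts coincide, and their nearest elements agree.

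\textbf{Step 3 (matching the trees).} The nearest mincut tree restricted to $\alpha_i$ has parent relation given by containment of these traces. Two nonempty skeleton nodes $p,q$ satisfy ``the subtree rooted at $p$ contains $q$'' exactly when $p$ is a skeleton ancestor of $q$. Merging each empty node into its parent removes empty nodes without changing ancestor relations among nonempty nodes, or the vertex sets of subtrees. Hence the rooted merged tree has exactly the containment order of the traces, and the vertex set below each node equals the corresponding trace $\Near(u,\alpha_{i-1})\cap\alpha_i$. This is the definition of $T_{\alpha_{i-1}}[\alpha_i]$.

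The main obstacle is the bookkeeping in Step 3 around empty nodes. An empty node may have several children, and it may lie on the path from $\alpha_{i-1}$, possibly even adjacent to the root. Merging it into its parent must attach all of its children to the nearest nonempty ancestor. I would verify this by induction along root-to-leaf paths, checking that subtree vertex sets are preserved.
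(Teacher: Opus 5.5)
Your proof is correct and follows the same plan as the paper. For $u\in\alpha_i\setminus\alpha_{i-1}$ you show that $\Near(u,s)=\Near(u,\alpha_{i-1})$ is an $\alpha_i$-mincut, arguing this via the fact that $\alpha_i$-mincuts cannot split the class $\alpha_{i-1}$ rather than through the submodularity step of \Cref{claim:same-min-cut-skeleton}, and you then read its trace off the skeleton rooted at $\alpha_{i-1}$. Your single appeal to \Cref{lem:laminar valid partition}, which gives the whole trace as the rooted subtree of $\pi_{\alpha_i}(u)$, is if anything cleaner than the paper's ancestor/descendant argument, because it also covers skeleton nodes incomparable to $\pi_{\alpha_i}(u)$, a case the paper leaves implicit.
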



\begin{proof}[Proof of \Cref{claim:local-nearest-minimum-cut-tree}]
The proof relies on the following observation, which is to be used later as well.
\begin{observation}
    \label{claim:same-min-cut-skeleton}
    Let $a\in \alpha_i$ be some vertex such that $a\not\in \alpha_{i-1}$, then $\Near(a,s)=\Near(a,\alpha_{i-1})$.
\end{observation}
\begin{proof}
    $\Near(a,s)$ is an $\alpha_i$-minimum cut as $a,s$ are in different nodes of $\cH_{\alpha_i}$.
    Similarly, $\Near(a,\alpha_{i-1})$ is an $\alpha_{i}$-minimum cut, since $a\in \alpha_i\setminus \alpha_{i-1}$.
    Therefore, $\lambda_{a,s}=\lambda_{a,\alpha_{i-1}}$.
    Notice that $a\in \Near(a,s)\cap\Near(a,\alpha_{i-1})$, therefore $\Near(a,s)\cap\Near(a,\alpha_{i-1})$ is an $a,\alpha_{i-1}$-cut and $\Near(a,s)\cup\Near(a,\alpha_{i-1})$ is an $\alpha_i$-cut as $s\in \alpha_{i-1}$ (and $\Near(a,s)\cup\Near(a,\alpha_{i-1})$ is a union of two $\alpha_i$-mincuts so they respect the nodes of $\cH_{\alpha_i}$).
    By submodularity (\Cref{lem:sub-posi-general}) we find that $\Near(a,s)\cap\Near(a,\alpha_{i-1})$ is an $a,\alpha_{i-1}$-minimum cut and hence $\Near(a,\alpha_{i-1})\subseteq \Near(a,s)$, and since $\lambda_{a,s}=\lambda_{a,\alpha_{i-1}}$ it follows that $\Near(a,s)=\Near(a,\alpha_{i-1})$.
\end{proof}
    By \Cref{claim:same-min-cut-skeleton}, for every vertex $a\in \alpha_i\setminus \alpha_{i-1}$, we have $\Near(a,s)=\Near(a,\alpha_{i-1})$.
    Hence, all the nearest minimum cuts of $\alpha_{i-1}$ in $\cH_{\alpha_i}$ are exactly the nearest minimum cuts of $s$ in $\cH_{\alpha_i}$. 
    Combining this with \Cref{lemma:nearest-mincut-vertices}, we find that the nodes of the nearest mincut tree of $\alpha_{i-1}$ in $\cH_{\alpha_i}$ are the children of $\alpha_{i-1}$ in the nearest mincut hierarchy, which are exactly the nodes of $\Ltree_{\alpha_i}^s$. 
    It remains to show that the cut structure of $\Ltree_{\alpha_i}^s$ is the same as the cut structure of the nearest mincut tree of $\alpha_{i-1}$ in $\cH_{\alpha_i}$.

    Let $x\in \alpha_i\setminus \alpha_{i-1}$ be a vertex and let $A=\Near(x,\alpha_{i-1})$ be its nearest minimum cut to $\alpha_{i-1}$ in $\cH_{\alpha_i}$.
    We will abuse the notation and denote by $x$ also the node of $\cH_{\alpha_i}$ containing $x$.
    Notice that for every non-empty ancestor of $y$ in $\cH_{\alpha_i}$, there exists a minimum $x,\alpha_{i-1}$-cut separating $x$ from $y$, defined by the edge outgoing from $x$ in the direction of $y$.
    Therefore, $y\not\in \Near(x,\alpha_{i-1})$.
    By a similar argument, every non-empty descendant $y$ of $x$ in $\cH_{\alpha_i}$, satisfies $y\in \Near(x,\alpha_{i-1})$.
    We conclude that $\Near(x,\alpha_{i-1})$ is the set of descendants of $x$ in $\cH_{\alpha_i}$, and the cut structure of $\Ltree_{\alpha_i}^s$ is the same as the cut structure of the nearest mincut tree of $\alpha_{i-1}$ in $\cH_{\alpha_i}$.
\end{proof}

Recall that by \Cref{claim:mincut-value-increases}, the value of $\alpha_i$-mincuts decreases as $i$ increases.
Therefore, any nodes in $\Children(\alpha_{i+1})$ cannot be an ancestor of any nodes in $\cup_{j\le i} \Children(\alpha_j)$ in the nearest mincut tree of $s$.
Combining this observation with \Cref{lemma:nearest-mincut-vertices}, we find that it is possible to construct the nearest mincut tree of $s$ by iteratively constructing the nearest mincut tree of $s$ restricted to $\alpha_i$ for $i=1,\ldots,k$.
This is achieved by combining the tree $T_s[\alpha_i]$ with the local tree $\Ltree_{\alpha_{i+1}}^s$.
The following lemma formalizes this intuition.
\begin{lemma}
    \label{lemma:all-subtree-same-parent}
    Let $\nu\in \Children(\alpha_{i+1})$ be a child of $\alpha_{i}$ in $\Ltree_{\alpha_{i+1}}^s$ and let $a\in \Children(\alpha_{i+1})$ be a node in the subtree of $\nu$ (in $\Ltree_{\alpha_{i+1}}^s$) that is not a child of $\alpha_{i}$ (i.e. $a\ne \nu$).
    Denote the parent of $a$ in $\Ltree_{\alpha_{i+1}}^s$ by $p(a)$, and its parent in $T_s[V]$ by $p'(a)$.
    Then, $p(a)=p'(a)$.
\end{lemma}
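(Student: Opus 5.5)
The plan is to show that the nearest mincut of $a$ to $s$ in $G$ coincides with the subtree of $a$ in $\Ltree_{\alpha_{i+1}}^s$. Once that is established, the parent of $a$ in $T_s[V]$ is forced to equal $p(a)$. Fix a vertex $x\in a$ (a node of $\Children(\alpha_{i+1})$ other than $\alpha_i$). By \Cref{claim:same-min-cut-skeleton}, $\Near(x,s)=\Near(x,\alpha_i)$. Moreover, this is an $\alpha_{i+1}$-mincut, since $x$ and $s$ lie in different nodes of $\cH_{\alpha_{i+1}}$ and, by \Cref{fact:lca-mincut-value}, $\lambda_{x,s}$ equals the $\alpha_{i+1}$-mincut value.

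By \Cref{claim:local-nearest-minimum-cut-tree}, $\Near(x,s)\cap\alpha_{i+1}$ is exactly the union of the nodes in the subtree of $a$ in $\Ltree_{\alpha_{i+1}}^s$. Since $\Near(x,s)$ is an $\alpha_{i+1}$-mincut, \Cref{lem: subtree classification} shows it cannot split any $\alpha_{i+1}$-subtree of $\cH_{\alpha_{i+2}}$. I would argue these subtrees are all ``out'' for $\Near(x,s)$. The reason is that a node $b\in\Children(\alpha_j)$ with $j>i+1$ has strictly smaller mincut value to $s$ by \Cref{claim:mincut-value-increases}. If $b$ were inside $\Near(x,s)$, then $\Near(x,s)$ would be an $(b,s)$-cut of value $\lambda_{x,s}>\lambda_{b,s}$. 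That alone is not a contradiction, so I will handle this case more carefully below.

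\textbf{Main obstacle.} The substantive step is showing that $\Near(x,s)$ contains no vertex outside $\alpha_{i+1}$. I would use the following argument. Consider $\alpha_{i+2}$, and let $C$ be the nearest $(\alpha_{i+1},y)$-type $\alpha_{i+2}$-mincut isolating $\alpha_{i+1}$ from any other node $y$ of $\cH_{\alpha_{i+2}}$, namely the component of $\alpha_{i+1}$ after deleting an incident skeleton edge. Apply submodularity (\Cref{lem:sub-posi-general}) to $\Near(x,s)$ and $C$. The intersection is still an $(x,s)$-cut because $x,s\in\alpha_{i+1}\subseteq C$, and the union is an $\alpha_{i+2}$-cut. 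Hence the intersection is an $(x,s)$-mincut, and by minimality $\Near(x,s)\subseteq C$. Ranging over all incident skeleton edges, $\Near(x,s)$ avoids every $\alpha_{i+1}$-subtree of $\cH_{\alpha_{i+2}}$.

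Iterating this argument upward (by induction on $j$, using \Cref{lem: subtree classification} for each ancestor $\alpha_j$, $j\ge i+2$) gives $\Near(x,s)\subseteq\alpha_{i+1}$. Hence $\Near(x,s)$ is exactly the subtree of $a$ in $\Ltree_{\alpha_{i+1}}^s$.

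\textbf{Identifying the parent.} It remains to read off $p'(a)$. In $T_s[V]$, the parent of $a$ is the node $q$ whose nearest mincut $\Near(q,s)$ is the inclusion-minimal nearest mincut strictly containing $\Near(a,s)$. By laminarity, all nearest mincuts to $s$ containing $\Near(a,s)$ form a chain. The node $p(a)$ lies in $\Children(\alpha_{i+1})$; it is not $\alpha_i$, because $a\ne\nu$ lies strictly below $\nu$. By the argument above, $\Near(p(a),s)$ is the subtree of $p(a)$ in $\Ltree_{\alpha_{i+1}}^s$, which contains $\Near(a,s)$ together with $p(a)$ itself.

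Any node $q$ with $\Near(a,s)\subsetneq\Near(q,s)\subsetneq\Near(p(a),s)$ must lie inside $\Near(p(a),s)\subseteq\alpha_{i+1}$, and so would be a node of $\Ltree_{\alpha_{i+1}}^s$ strictly between $a$ and $p(a)$. No such node exists. Nodes of descendants of $\alpha_i$ are excluded as well, since $\Near(a,s)\subseteq\alpha_{i+1}\setminus\alpha_i$, and $\Near(q,s)$ for $q$ inside $\alpha_i$ would fail to contain $a$ by \Cref{claim:mincut-value-increases}. Therefore $p'(a)=p(a)$.
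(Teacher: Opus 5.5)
\textbf{There is a genuine gap.} Your final conclusion is correct, but the route depends on a false intermediate claim: that $\Near(x,s)\subseteq\alpha_{i+1}$, i.e., that $\Near(x,s)$ equals the subtree of $a$ in $\Ltree_{\alpha_{i+1}}^s$.

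\emph{Counterexample.} Take the path $s$--$u$--$x$ with edge weights $10,10$, plus a pendant vertex $t$ joined to $x$ by an edge of weight $1$. The global mincut is $\{t\}$, so $\alpha_2=\{s,u,x\}$ and $\alpha_3=V$, and $\cH_{\alpha_2}$ is the path $\{s\}$--$\{u\}$--$\{x\}$. For $i=1$, the nodes $\nu=\{u\}$ and $a=\{x\}$ satisfy the hypotheses. Yet $\Near(x,s)=\{x,t\}$, which has capacity $10$ versus $11$ for $\{x\}$, so $\Near(x,s)\not\subseteq\alpha_2$.

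\emph{Where your argument breaks.} The failure is in your submodularity step. $\Near(x,s)\cup C$ is an $\alpha_{i+2}$-cut only if the $\alpha_{i+1}$-subtree on the far side of the skeleton edge is \emph{not} contained in $\Near(x,s)$. That is exactly what you are trying to prove, so the step is circular. This is why \Cref{lem: subtree classification} only yields ``in or out'', never ``out''. In the example, $C=\{s,u,x\}$ and $\Near(x,s)\cup C=V$, which is not an $\alpha_3$-cut. Your earlier worry was justified: a node with strictly smaller connectivity to $s$ really can lie below $a$ in $T_s[V]$.

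\emph{A second, smaller error.} Your exclusion of candidates $q\subseteq\alpha_i$ is also misjustified. The condition $a\in\Near(q,s)$ only forces $\lambda_{a,s}\le\lambda_{q,s}$, which is consistent with \Cref{claim:mincut-value-increases}. In the same example, $t\in\Near(u,s)=\{u,x,t\}$ even though $\{u\}\subseteq\alpha_2$.

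\textbf{How to repair it.} The conclusion survives if you argue with connectivity values instead of global containment; this is essentially the paper's proof.
By \Cref{claim:local-nearest-minimum-cut-tree}, $\Near(a,s)\cap\alpha_{i+1}$ and $\Near(p(a),s)\cap\alpha_{i+1}$ are the $\Ltree_{\alpha_{i+1}}^s$-subtrees of $a$ and $p(a)$. Hence $a\in\Near(p(a),s)$ and $p(a)\notin\Near(a,s)$, and laminarity gives $\Near(a,s)\subsetneq\Near(p(a),s)$.

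Now let $q\in W$ satisfy $\Near(a,s)\subsetneq\Near(q,s)\subsetneq\Near(p(a),s)$. There are three cases.
\begin{itemize}
\item If $q\in\Children(\alpha_j)$ with $j\ge i+2$, then $a\in\Near(q,s)$ gives $\lambda_{a,s}\le\lambda_{q,s}$. This contradicts $\lambda_{q,s}<\lambda_{a,s}$, which follows from \Cref{fact:lca-mincut-value} and \Cref{claim:mincut-value-increases}.
\item If $q\subseteq\alpha_i$, then $q\subseteq\Near(p(a),s)=\Near(p(a),\alpha_i)$ by \Cref{claim:same-min-cut-skeleton}. That set is disjoint from $\alpha_i$, a contradiction.
\item Otherwise $q\in\Children(\alpha_{i+1})\setminus\{\alpha_i\}$. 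Intersecting the chain with $\alpha_{i+1}$ places $q$ strictly between $a$ and $p(a)$ in $\Ltree_{\alpha_{i+1}}^s$, which is impossible.
\end{itemize}
So $\Near(p(a),s)$ is the minimal nearest mincut strictly containing $\Near(a,s)$, and $p'(a)=p(a)$.
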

\begin{proof}
    Observe that $p(a),p'(a)$ are both nodes in $W$ by \Cref{lemma:nearest-mincut-vertices}.
    Assume towards contradiction that $p(a)\ne p'(a)$.
    Notice that $a\in \Near(p(a),\alpha_{i})= \Near(p(a),s)$ by \Cref{claim:same-min-cut-skeleton}.
    In addition, $a\in \Near(p'(a),s)$ as $a$ is a descendant of $p'(a)$ in $T_s[V]$.
    Therefore, by the laminarity of nearest minimum cuts to a fixed source, either $\Near(p'(a),s)\subseteq \Near(p(a),s)$ or $\Near(p(a),s)\subseteq \Near(p'(a),s)$.
    This implies that $p(a),p'(a)$ are related by ancestry in $T_s[V]$.
    
    We now split into two cases.
    The first is when $p'(a)$ is a strict ancestor of $p(a)$ in $T_s[V]$, and hence $\Near(p(a),s)\subseteq \Near(p'(a),s)$.
    However, this implies that $a$ is not a descendant of $p(a)$ in $T_s[V]$ as $a$ is a child of $p'(a)$ by our assumption;
    this contradicts the fact that $a\in \Near(p(a),s)$ as shown above.

    In the complementary case, we show that $p'(a)$ is in $\Children(\alpha_{i+1})\setminus\{\alpha_{i}\}$.
    Recall that $p(a)\ne \alpha_{i}$ by the lemma assumption.
    In addition, notice that $p'(a)\not\subseteq \alpha_{i}$ as $\Near(p(a),s)=\Near(p(a),\alpha_{i})$ by \Cref{claim:same-min-cut-skeleton} and $p'(a)\subseteq \Near(p(a),s)$.
    Furthermore, $p'(a)$ cannot be in $\Children(\alpha_{j})$ for $j>i+1$ since $\lambda_{a,s} \le \lambda_{p'(a),s}$ by the structure of the nearest mincut tree.
    By the lemma assumption we have $a\subseteq \alpha_{i+1}$, and hence $\Near(a,s)$ is an $\alpha_{i+1}$-mincut.
    However, by \Cref{claim:mincut-value-increases} the value of $\alpha_j$-mincuts is strictly smaller than the value of $\alpha_{i+1}$-mincuts for $j>i+1$.
    Therefore, $p'(a)$ is a child of $\alpha_{i+1}$ in $T_s[V]$ such that $p'(a)\ne \alpha_{i}$. 
    In particular, $p'(a)$ is a node in $\Ltree_{\alpha_{i+1}}^s$.

    This yields a contradiction following similar lines to the previous case, whether $p'(a)$ is an ancestor or descendant of $p(a)$ in $\Ltree_{\alpha_{i+1}}^s$.
    Similarly, if $p'(a)$ and $p(a)$ are not related by ancestry in $\Ltree_{\alpha_{i+1}}^s$ then we also reach a contradiction as $\Near(p(a),s)\cap \Near(p'(a),s)=\emptyset$ which contradicts the fact that both $p(a)$ and $p'(a)$ contain $a$ as argued above.
    Hence, we conclude that $p(a)=p'(a)$.
\end{proof}
Therefore, each tree of $\Ltree_{\alpha_{i+1}}^s\setminus \alpha_{i}$ is attached to $T_s[\alpha_{i}]$ by connecting its root to some node in $T_s[\alpha_{i}]$.
The next lemma shows how to find the node in $T_s[\alpha_{i}]$ to which the root of each tree of $\Ltree_{\alpha_{i+1}}^s\setminus \alpha_{i}$ is attached.
First, for every vertex $v\in \alpha_{i}$, let $N_i(v)=\set{w \in \alpha_{i} \mid v\in \Near(w,s)}$ be the set of vertices in $\alpha_{i}$ whose nearest minimum cut to $s$ contains $v$. 
By the structure of the nearest mincut tree $T_s[V]$, $N_i(v)$ corresponds to a path in $T_s[V]$.
We call the lowest node in this path the \emph{attachment node} of $v$ and denote it by $\Att(v)$.
If $N_i(v)=\emptyset$, then $\Att(v)=s$.
\begin{lemma}
    \label{lemma:attachment-node-is-parent}
    Fix some $i\in [1,k]$ and let $A$ be an $\alpha_{i}$-subtree in $\cH_{\alpha_{i+1}}$.
    Let $v$ be the representative terminal of $A$, and $\mu$ be a child of $\alpha_{i}$ in $\Ltree_{\alpha_{i+1}}^s$ such that $\mu\subseteq A$.
    Then, the parent of $\mu$ in $T_s[V]$ is $\Att(v)$.
\end{lemma}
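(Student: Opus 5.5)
The approach is to pin down the parent $p'(\mu)$ of $\mu$ in $T_s[V]$ as the smallest nearest mincut to $s$ that strictly contains $\mu$. Then we compare it with the definition of $\Att(v)$ by showing that every node of $T_s[\alpha_i]$ contains $\mu$ in its nearest mincut to $s$ if and only if it contains $v$. By \Cref{lemma:nearest-mincut-vertices} and the value monotonicity of \Cref{claim:mincut-value-increases}, nodes in $\Children(\alpha_j)$ for $j>i+1$ have smaller cut values. Nodes in $\Children(\alpha_{i+1})\setminus\set{\alpha_i}$ other than $\mu$ are not ancestors of $\mu$, since $\mu$ is a root of a tree of $\Ltree^s_{\alpha_{i+1}}\setminus\alpha_i$, using \Cref{claim:local-nearest-minimum-cut-tree} and \Cref{claim:same-min-cut-skeleton}. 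So the parent $p'(\mu)$ lies in $T_s[\alpha_i]$, or is $s$'s node, meaning no node of $\alpha_i$ other than $s$'s contains $\mu$. The ancestors of $\mu$ in $T_s[V]$ that lie in $\alpha_i$ are the nodes $w\in\alpha_i$ with $\mu\subseteq\Near(w,s)$. Hence, once we know that for every $w\in\alpha_i$ we have $\mu\subseteq \Near(w,s)$ iff $v\in\Near(w,s)$, the parent is the lowest node of $N_i(v)$, namely $\Att(v)$, with the empty case giving $s$.

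To prove this equivalence, fix $w\in\alpha_i$ with $w\notin\alpha_{i-1}$'s node containing $s$. We have $\lambda_{w,s}\ge$ the $\alpha_i$-mincut value, which is strictly larger than the $\alpha_{i+1}$-mincut value by \Cref{claim:mincut-value-increases}. Apply \Cref{lem: subtree classification} with $\mu\leftarrow\alpha_i$, $\nu\leftarrow\alpha_{i+1}$ and the pair $(w,s)$, both in $\alpha_i$. It gives that the $\alpha_i$-subtree $A$ of $\cH_{\alpha_{i+1}}$ is either entirely inside $\Near(w,s)$ or entirely disjoint from it. Since both $v$ and $\mu$ lie in $A$, we get $v\in\Near(w,s)\iff \mu\subseteq\Near(w,s)$, which is what we need. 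For $w$ in the node of $s$ (or $w=s$), $\Near(w,s)$ is not a valid cut and is excluded, consistent with the definition of $N_i(v)$.

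It remains to confirm that the nearest mincuts to $s$ containing $\mu$ from outside $\alpha_i$ do not give a lower node than $\Att(v)$. Inside $\Children(\alpha_{i+1})$, the only such nodes are descendants of $\mu$ in $\Ltree^s_{\alpha_{i+1}}$, by \Cref{lemma:all-subtree-same-parent}, and these are not ancestors of $\mu$. Nodes in $\Children(\alpha_j)$ with $j>i+1$ have strictly smaller $\lambda$ to $s$, so in the nearest mincut tree they can only be ancestors above every node of $\alpha_{i+1}$ that contains $\mu$. I expect this ancestor-versus-value ordering to be the main subtle step. I would handle it by laminarity: if $\Near(u,s)\supseteq\mu$ with $u\in\Children(\alpha_j)$, $j>i+1$, and $\Att(v)=w$, then $\Near(u,s)$ and $\Near(w,s)$ intersect, hence are nested. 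Containment $\Near(u,s)\subseteq\Near(w,s)$ would put $u$ inside a cut separating $w$ from $s$ of value $\lambda_{w,s}$, which gives $\lambda_{u,s}\ge\lambda_{w,s}$ by minimality of the nearest mincut tree ordering and contradicts the strict value decrease. So $\Near(w,s)\subsetneq\Near(u,s)$, and $\Att(v)$ is indeed the lowest proper ancestor.
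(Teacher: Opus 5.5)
Your skeleton is the same as the paper's proof. You apply \Cref{lem: subtree classification} to the child $\alpha_i$ of $\alpha_{i+1}$ and the pair $(w,s)$ to get $\mu\subseteq\Near(w,s)\iff v\in\Near(w,s)$. The local tree rules out parents in $\Children(\alpha_{i+1})$, cut values rule out $\Children(\alpha_j)$ for $j>i+1$, and the parent is then the lowest node of $N_i(v)$.

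The problem is the step you flag as subtle, which you argue with the order reversed. In $T_s$, if $u$ is an ancestor of $x$, then $\Near(u,s)$ is an $(x,s)$-cut, so $\lambda_{x,s}\le\lambda_{u,s}$. Cut values to $s$ are non-increasing as you move away from the root $s$. Consequently $\Near(u,s)\subseteq\Near(w,s)$ gives $\lambda_{u,s}\le\lambda_{w,s}$, not $\ge$, and this does not contradict $\lambda_{u,s}<\lambda_{w,s}$. The containment you conclude, $\Near(w,s)\subsetneq\Near(u,s)$, is the one that is impossible. So your picture, in which nodes of $\Children(\alpha_j)$ with $j>i+1$ sit above $\Att(v)$ as ancestors of $\mu$, is wrong. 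More importantly, under that picture the case $N_i(v)=\emptyset$ is not settled: your argument does not exclude such a $u$ being the parent of $\mu$ instead of $s$.

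The fix is what the paper does: show that such a $u$ is never an ancestor of $\mu$. For $m\in\mu$ and $u'\in u$ we have $\LCA_\cT(m,s)=\alpha_{i+1}$ and $\LCA_\cT(u',s)=\alpha_j$. By \Cref{fact:lca-mincut-value} and \Cref{claim:mincut-value-increases}, this gives $\lambda_{u,s}<\lambda_{m,s}$. But $\mu\subseteq\Near(u,s)$ would force $\lambda_{m,s}\le\lambda_{u,s}$. With this, you can drop your third paragraph, and your first two paragraphs give the lemma, including $\Att(v)=s$ when $N_i(v)=\emptyset$.

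One small point: state the equivalence for every $w\in\alpha_i\setminus\set{s}$, including $w\in\alpha_{i-1}$, since $N_i(v)$ ranges over all of $\alpha_i$. The phrase ``$w\notin\alpha_{i-1}$'s node'' suggests you meant to restrict it. \Cref{lem: subtree classification} covers all these $w$, since it only needs $w,s\in\alpha_i$.
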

\begin{proof}
    Begin by noting that for every $u \in \alpha_{i}$, we have that $\mu\subseteq \Near(u,s)$ if and only if $v\in \Near(u,s)$ by \Cref{lem: subtree classification}.
    Denote the parent of $\mu$ in $T_s[V]$ by $p(\mu)$.
    We first show that $p(\mu)\subseteq \alpha_{i}$.
    Since $\mu$ is a child of $\alpha_{i}$ in $\Ltree_{\alpha_{i+1}}^s$, we have that $p(\mu)\not\subseteq \alpha_{i+1}\setminus \alpha_{i}$.
    We now show that $p(\mu)$ is not a child of $\alpha_{j}$ in $T_s[V]$ for any $j>i+1$.
    If $p(\mu)$ is a child of $\alpha_{j}$ for some $j>i+1$, then by the properties of the nearest mincut hierarchy $\lambda_{p(\mu),s}< \lambda_{\mu,s}$ since $\mu$ is separated from $s$ by an $\alpha_{i+1}$-minimum cut and $p(\mu)$ by an $\alpha_{j}$-minimum cut and using \Cref{claim:mincut-value-increases}.
    Therefore, $p(\mu)$ is not an ancestor of $\mu$ in $T_s[V]$, which is a contradiction.
    We conclude that $p(\mu)\subseteq \alpha_{i}$.

    Now, observe that for every node $w\in W$ such that $w\subseteq \alpha_{i}$, we have that $v\in \Near(w^*,s)$, for some $w^*\in w$, if and only if $w\subseteq N_i(v)$.
    Therefore, if $N_i(v)=\emptyset$, then $\mu$ is not inside the nearest minimum cut of any node in $\alpha_{i}$, and hence $p(\mu)=s=\Att(v)$.
    For the rest of the proof assume that $N_i(v)\ne \emptyset$ and hence $\pi(\mu)\ne s$ since $\mu$ is included in some nearest minimum cut.
    Since $p(\mu)\subseteq \alpha_{i}$, we have that $v\in \Near(p(\mu),s)$, and hence there exists some $p^*\in p(\mu)$ such that $p^*\in N_i(v)$.
    In particular, $p(\mu)$ has to be the lowest node in $N_i(v)$.
    Otherwise, there exists some $p'\in N_i(v)$ such that $p'$ is a descendant of $p(\mu)$ in $T_s[V]$ and $\mu\subseteq \Near(p',s)$, which is a contradiction to $\mu$ being a child of $p(\mu)$ in $T_s[V]$.
    Hence, we conclude that $p(\mu)=\Att(v)$, as required.
\end{proof}
The following lemma is the main result of this section, and it implies \Cref{theorem:nearest-mincut-tree-construction-query} immediately.
\begin{lemma}
    \label{lemma:find-attachment-node-query}
    There exists an algorithm that, given a vertex $s\in V$, the projection $\pi_{\alpha_i}(v)$ of a vertex $v\in \alpha_{i+1}\setminus \alpha_i$, the nearest mincut hierarchy of $G$, a partial nearest mincut tree $T_s[\alpha_i]$, and an LCA oracle on $T_s[\alpha_i]$, finds $\Att(v)$ in $O(1)$ time.
\end{lemma}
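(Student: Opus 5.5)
The plan is to locate $\Att(v)$ by looking at where $\pi_{\alpha_i}(v)$ sits in $\cH_{\alpha_i}$. Throughout, $\cH_{\alpha_i}$ is rooted at $\alpha_{i-1}$ (the node containing $s$), as in the construction of $\Ltree_{\alpha_i}^s$. Both the skeleton and $T_s[\alpha_i]$ carry LCA and level-ancestor structures (\Cref{fact: tree data structures}). The first step is to compute $q\coloneqq\LCA_{\cH_{\alpha_i}}(e_1,e_2)$, where $e_1,e_2$ are the endpoints of $\pi_{\alpha_i}(v)$; this costs $O(1)$.

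\textbf{Case $q\neq\alpha_{i-1}$.} Take a vertex $w\in\alpha_i\setminus\alpha_{i-1}$. By \Cref{claim:same-min-cut-skeleton} and the proof of \Cref{claim:local-nearest-minimum-cut-tree}, $\Near(w,s)=\Near(w,\alpha_{i-1})$ is the $\alpha_i$-mincut consisting of the skeleton descendants of $w$'s node. This cut is the side of a laminar valid partition away from $\alpha_{i-1}$. By the second item of \Cref{thm: proj vertex}, if $\pi_{\alpha_i}(v)$ lies inside that descendant set, then $v\in\Near(w,s)$. If instead the projection crosses the boundary edge, there is a mincut of that partition excluding $v$, so minimality of the nearest cut gives $v\notin\Near(w,s)$. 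Hence the $w$ with $v\in\Near(w,s)$ are exactly the non-empty ancestors of $q$ below the root, together with every $w\in\alpha_{i-1}$ whose nearest cut contains the $\alpha_{i-1}$-subtree containing $q$. The latter all lie above these in $T_s$ by \Cref{lemma:all-subtree-same-parent}. So $\Att(v)$ is the non-empty node into which $q$ was merged in $\Ltree_{\alpha_i}^s$, which is stored, giving $O(1)$ time.

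\textbf{Case $q=\alpha_{i-1}$.} By the same argument no $w\in\alpha_i\setminus\alpha_{i-1}$ qualifies, so $\Att(v)$ is a node of $T_s[\alpha_{i-1}]$. For $w\in\alpha_{i-1}$, apply \Cref{thm: characterization of skeleton subtrees} with the pair $(w,s)$ and $\mu=\alpha_i$, so that $\pi_{\alpha_i}(w)=\alpha_{i-1}$ is the node called $\alpha$ there. By \Cref{lem: subtree classification} and \Cref{lemma:attachment-node-is-parent}, an $\alpha_{i-1}$-subtree $A$ with representative $r_A$ is ``in'' for $\Near(w,s)$ if and only if $w$ is an ancestor of $\Att(r_A)$ in $T_s$. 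These attachments were already computed when $T_s[\alpha_i]$ was built. I then split by configuration.
\begin{enumerate}
\item If $\pi_{\alpha_i}(v)$ uses two edges of $\alpha_{i-1}$, into subtrees $A_1,A_2$, then by \Cref{lem: two in edges} and \Cref{lem: projection touches out edge}, $v\in\Near(w,s)$ exactly when both subtrees are in. So $\Att(v)=\LCA_{T_s}(\Att(r_{A_1}),\Att(r_{A_2}))$, or $s$ if this LCA is $s$.
\item If $\pi_{\alpha_i}(v)=\alpha_{i-1}$, the answer is read off the stored deeper projection in the same way.
\item If the projection has one endpoint at $\alpha_{i-1}$ and the other in a subtree $A'$ (configuration III), only ancestors $w$ of $\Att(r_{A'})$ are candidates. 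Among these, $v\in\Near(w,s)$ if and only if $\Near(w,s)\notin N_v^{A'}$, by \Cref{thm : chain-maximizer}.
\end{enumerate}

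The main obstacle is this last configuration: finding, in $O(1)$ time, the lowest ancestor $w$ of $\Att(r_{A'})$ whose cut $\Near(w,s)$ is not contained in the stored maximizer $\Near(a,b)$. Since nearest cuts to $s$ are laminar, the ancestors whose cuts lie inside $\Near(a,b)$ form a bottom segment of the path from $\Att(r_{A'})$ upward, so $\Att(v)$ is the parent in $T_s$ of the top node of that segment. I would use \Cref{thm: characterization of skeleton subtrees}(III) with $(x,y)=(w,s)$ and $z=b$.
\begin{itemize}
\item If $b\in\Near(w,s)$, equivalently $w$ is an ancestor of $b$'s node in $T_s$, then $w$ is outside the segment.
\item If $a,b\notin\Near(w,s)$, then $w$ is inside the segment.
\item The remaining sub-case, $a\in\Near(w,s)$ and $b\notin\Near(w,s)$, is decided by whether $w\in\Near(a,b)$. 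By \Cref{lem: xy are terminals in the lca of wz}, this is answerable in $O(1)$ from $\cH_{\LCA_\cT(a,b)}$.
\end{itemize}
So I would first try the formula $\Att(v)=\LCA_{T_s}\bigl(\Att(r_{A'}),\text{node}(b)\bigr)$, then correct it using the position of $\text{node}(a)$ and a single membership test. Verifying that these finitely many candidates always contain the correct answer is the step I expect to require the most care.
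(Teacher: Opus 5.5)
\textbf{Verdict: genuine gap.} Your handling of the configuration-III case (one endpoint at $\alpha_{i-1}$) does not yet give an $O(1)$ algorithm. The other cases are fine and essentially match the paper: the case $q\neq\alpha_{i-1}$, and the two-edge case. In the two-edge case, using $\Att(r_{A_1}),\Att(r_{A_2})$ instead of non-empty nodes $u_1\in A_1$, $u_2\in A_2$ is equivalent.

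\textbf{Small slip in the case $q\neq\alpha_{i-1}$.} If $q$ and all its skeleton ancestors below the root are empty, then $q$ is merged into $\alpha_{i-1}$ itself. That is not a node of $T_s$, so ``the node into which $q$ was merged'' is not the answer. Your own characterization gives $\Att(v)=\Att(r_A)$ here, where $A$ is the $\alpha_{i-1}$-subtree containing $q$. The paper handles this subcase explicitly.

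\textbf{The gap in configuration III.} After your first two bullets, the undecided candidates are the nodes $y$ on the path $P$ from $\LCA_{T_s}(\Att(r_{A'}),b)$ (exclusive) down to $\LCA_{T_s}(\Att(r_{A'}),a)$. For these you only know that the ones lying in $\Near(a,b)$ form a bottom segment of $P$. Since $P$ can be arbitrarily long, a constant number of membership tests via $\cH_{\LCA_\cT(a,b)}$ cannot locate that threshold unless you know more about where it can sit. You did not specify which ``single membership test'' to run, nor why one would suffice, so the $O(1)$ bound is unproven.

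\textbf{The missing idea.} The threshold can only be at one of the two ends of $P$, and which end is decided by comparing $\lambda_{a,s}$ with $\lambda_{b,s}$. Both values can be read off $T_s$ in $O(1)$.
\begin{itemize}
\item If $\lambda_{a,s}\le\lambda_{b,s}$: submodularity applied to $\Near(a,s)$ and $\Near(a,b)$ gives $\Near(a,s)\subseteq\Near(a,b)$. A value comparison rules out $s\in\Near(a,b)$, since otherwise $\lambda_{a,b}=\lambda_{a,s}=\lambda_{b,s}$ and $\Near(a,s)$ would be an $(a,b)$-mincut avoiding $s$. Hence $\Near(a,b)=\Near(a,s)$, so no proper ancestor of $a$ lies in $\Near(a,b)$. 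Then $\Att(v)$ is $\LCA_{T_s}(\Att(r_{A'}),a)$, or its parent if that node is $a$ itself.
\item If $\lambda_{a,s}>\lambda_{b,s}$: one gets $s\in\Near(a,b)$. Now suppose some $y\in P$ below the top has $y\notin\Near(a,b)$. Then $\Near(a,b)$ and $\Near(y,s)$ are exactly in the configuration of \Cref{lem: crossing mincuts with special assignment of vertex pairs}, with $(a,b,a',b')=(a,b,y,s)$. So $\Near(a,b)\cap\Near(y,s)$ would be a strictly smaller $(a,b)$-mincut, a contradiction. Hence all of $P$ below its top lies in $\Near(a,b)$, and $\Att(v)=\LCA_{T_s}(\Att(r_{A'}),b)$.
\end{itemize}
With this dichotomy, no skeleton membership test and no appeal to \Cref{lem: xy are terminals in the lca of wz} is needed.
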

\begin{proof}[Proof of \Cref{theorem:nearest-mincut-tree-construction-query}]
    Throughout maintain an LCA oracle on $T_s[\alpha_i]$ while ascending the tree.
    Observe that since the algorithm only inserts new nodes as children of existing nodes, the LCA oracle can be maintained in $O(1)$ time per insertion by using the dynamic LCA of \cite{DBLP:journals/siamcomp/ColeH05}.
    Begin by constructing the nearest mincut tree of $s$ restricted to $\alpha_2$.
    This is obtained in $O(|\cH_{\alpha_2}|)$ time by constructing the local nearest mincut tree of $s$ in $\cH_{\alpha_2}$.

    We now focus on the $i$-th iteration of the algorithm, where $T_s[\alpha_i]$ was already constructed, and the goal is to extend it to $T_s[\alpha_{i+1}]$.
    First, construct the local nearest mincut tree $\Ltree_{\alpha_{i+1}}^s$ of $s$ in $\cH_{\alpha_{i+1}}$ in $O(|\cH_{\alpha_{i+1}}|)$ time.
    Then, for every $\alpha_{i}$-subtree $A$ in $\cH_{\alpha_{i+1}}$, find its representative terminal $v$ and its attachment node $\Att(v)$ in $T_s[\alpha_i]$ using \Cref{lemma:find-attachment-node-query}.
    This takes $O(1)$ time per representative terminal, and hence $O(|{\cH_{\alpha_{i+1}}}|)$ time in total.
    Finally, attach each subtree of $\Ltree_{\alpha_{i+1}}^s$ to $T_s[\alpha_i]$ according to the attachment nodes and update the LCA oracle accordingly, which takes $O(|\cH_{\alpha_{i+1}}|)$ time.
    The total running time of the algorithm is $O(\sum_{i=1}^k |\cH_{\alpha_i}|)=O(n)$ since the total size of all skeletons is $O(n)$.
\end{proof}
\begin{proof}[Proof of \Cref{lemma:find-attachment-node-query}]
    Recall that given a vertex $v\in \alpha_{i+1}\setminus \alpha_i$, we denote by $N_i(v)$ the set of vertices in $\alpha_i$ whose nearest minimum cut to $s$ contains $v$.
    At a high level, the proof uses \Cref{thm: characterization of skeleton subtrees}, in an opposite manner to \Cref{lem: nearest mincut hierarchy query}.
    Instead of fixing a pair of vertices $x,y$ and finding for a given representative vertex $v$ whether $v\in\Near(x,y)$, we fix the representative vertex $v$ and wish to find the lowest $x$ in the nearest mincut tree of $s$ such that $v\in\Near(x,s)$.
    To do so, we leverage the insights of \Cref{thm: characterization of skeleton subtrees} which allow us to find the set $N_i(v)$ efficiently.

    For a non-empty node $u$ of $\cH_{\alpha_i}$ and any $x\in T_s[\alpha_i]$, we have $u\in\Near(x,s)$ if and only if $x$ is an ancestor of $u$ in $T_s[\alpha_i]$.
    Given several nodes $u_1,\dots,u_r$, the lowest $x$ satisfying $u_1,\dots,u_r\in\Near(x,s)$ is therefore $\LCA(u_1,\dots,u_r)$.
    The algorithm is split into three cases depending on $\pi_{\alpha_i}(v)$, the projection of $v$ in $\cH_{\alpha_i}$, illustrated in \Cref{fig:min-cut-tree-query-projection-cases}.

    \emph{Case (a): $\pi_{\alpha_i}(v)$ does not touch the node $\alpha_{i-1}$.}
    Notice that any $\cH_{\alpha_i}$ containing both endpoints $\nu_1,\nu_2$ of $\pi_{\alpha_i}(v)$ contains all of it. Hence,
    if $\LCA_{\cH_{\alpha_i}}(\nu_1,\nu_2)$ is a non-empty node, then $v\in\Near(x',s)$ for $x'\in x\coloneqq \LCA(\nu_1,\nu_2)$ and $v\not\in \Near(y',s)$ for every $y'\in y$ where $y$ is a non-empty proper descendant of $x$, so $\Att(v)=x$.
    Otherwise, $\LCA(\nu_1,\nu_2)$ is empty, and we need to find the lowest ancestor of $\LCA(\nu_1,\nu_2)$ in $\cH_{\alpha_i}$ that is non-empty.
    Denote this ancestor by $\mu$.
    If $\mu\ne\alpha_{i-1}$, then $v\in\Near(x',s)$ holds for any $x'\in \mu$, and $v\not\in \Near(y',s)$ for every $y'\in y$ where $y$ is a non-empty proper descendant of $\mu$, therefore $\Att(v)=\mu$.

    Otherwise, we show that $\Att(v)\in \alpha_{i-1}$.
    Assume towards contradiction that $\Att(v)\not\in \alpha_{i-1}$, and let $\tau$ be a node in $\cH_{\alpha_i}\setminus\{\alpha_{i-1}\}$ such that some vertex $t\in \tau$ satisfies $t\in N_i(v)$.
    Fix the $\alpha_{i}$-mincut $C$ separating $t$ from $s$ corresponding to some edge $g$ incident to $\tau$, such that $\alpha_{i-1}\subseteq \overline{C}$.
    Since $t\in N_i(v)$, we have that $v\in C$ and hence $\pi_{\alpha_i}(v)\subseteq C$.
    Therefore, $g$ must separate $\alpha_{i-1}$ from $\LCA(\nu_1,\nu_2)$ in $\cH_{\alpha_i}$, and $\tau$ is a non-empty node between $\alpha_{i-1}$ and $\LCA(\nu_1,\nu_2)$ which is a contradiction to our assumption.
    Since $N_i(v)\subseteq \alpha_{i-1}$, we have that $x\in N_i(v)$ if and only if $A\subseteq \Near(x,s)$, where $A$ is the $\alpha_{i-1}$-subtree containing $\pi_{\alpha_i}(v)$ by \Cref{thm: characterization of skeleton subtrees}.
    Therefore, $\Att(v)=\LCA(a_1,\ldots,a_r)$ in $T_s[\alpha_i]$, where $a_1,\ldots,a_r$ are the vertices in $A$.
    This is difficult to compute directly, but instead one can simply take the attachment node of representative terminal of $A$.

    To analyze the time complexity, notice that $\LCA(\nu_1,\nu_2)$ is determined in $O(1)$ time using the LCA oracle on $\cH_{\alpha_i}$.
    Then, to find its first non-empty ancestor of $\LCA(\nu_1,\nu_2)$ we take the node into which $\LCA(\nu_1,\nu_2)$ is merged in $\Ltree_{\alpha_i}^s$.
    This can be done in $O(1)$ since we save the nodes into which  every empty node was merged in the construction $\Ltree_{\alpha_i}^s$.
    Finally, we can find the attachment node of the representative terminal of $A$ since it is stored in the nearest mincut hierarchy, and then we already have its attachment node from the previous iteration of the algorithm.

    \emph{Case (b): $\pi_{\alpha_i}(v)$ meets two edges incident to $\alpha_{i-1}$.}
    Let $A_1,A_2$ be the $\alpha_{i-1}$-subtrees containing these edges.
    Notice that for every $x\in \alpha_i\setminus \alpha_{i-1}$, we have that $v\not\in \Near(x,s)$ since $\pi_{\alpha_i}(v)$ is not contained in any $\alpha_i$-mincut separating $x$ from $s$.
    Therefore, $\Att(v)\in \alpha_{i-1}$.
    By \Cref{thm: characterization of skeleton subtrees}, for every vertex $x\in \alpha_{i-1}$ we have $v\in\Near(x,s)$ if and only if $A_1,A_2\subseteq\Near(x,s)$.
    Let $u_1,u_2$ be non-empty nodes in $A_1,A_2$, respectively.
    Then, $v\in\Near(x,s)$ if and only if $u_1,u_2\in\Near(x,s)$, and hence $\Att(v)=\LCA(u_1,u_2)$ in $T_s[\alpha_i]$.
    The running time of this query is the time needed to find $u_1,u_2$ and compute their LCA in $T_s[\alpha_i]$.
    The first part can be computed in $O(1)$ time since it is stored in the nearest mincut hierarchy, and the second part can be computed in $O(1)$ time using the LCA oracle on $T_s[\alpha_i]$.

    \begin{figure}[htbp]
        \centering
        \begin{tikzpicture}[
  thick,
  vertex/.style={
    ellipse, draw, fill=white,
    minimum width  = 1.30cm,
    minimum height = 0.95cm,
    inner sep = 2pt,
    font = \small,
  },
  proj/.style={
    draw = red!75!black, line width = 1.8pt, line cap = round, fill = none,
  },
]

\begin{scope}[xshift=0cm]

  \draw[dashed] (0.4,0) ellipse (2.65cm and 3.8cm);
  \node[font=\small] at (0.4, 4.15) {$H_{\alpha_j}$};

  \draw (0,  0.5) -- (0,  1.5);   
  \draw (0.65, 0) -- (1.85, 0);   
  \draw (0, -0.5) -- (0, -1.5);   

  \draw[fill=white] ( 0,  1.5) -- (-1,  3.0) -- (1,  3.0) -- cycle;      
  \draw[fill=white] (1.85, 0 ) -- (2.65,-0.75) -- (2.65, 0.75) -- cycle;  
  \draw[fill=white] ( 0, -1.5) -- (-1, -3.0) -- (1, -3.0) -- cycle;      

  \draw[proj]
    (-0.40, 2.375) .. controls (-0.20, 2.65) and (0.15, 2.075) .. (0.325, 2.40);
  \filldraw[red!75!black] (-0.40, 2.375) circle (2pt);
  \filldraw[red!75!black] ( 0.325, 2.40) circle (2pt);

  \node[vertex] at (0,0) {$\alpha_{j-1}$};

  \node[font=\footnotesize, text=red!75!black] at (0, 2.82)
    {$\pi_{\alpha_j}(v)$};

  \node[font=\small] at (0.4, -4.25) {(a)};
\end{scope}

\begin{scope}[xshift=5.6cm]

  \draw[dashed] (0.4,0) ellipse (2.65cm and 3.8cm);
  \node[font=\small] at (0.4,4.15) {$H_{\alpha_j}$};

  \draw (0,  0.5) -- (0,  1.5);
  \draw (0.65, 0) -- (1.85, 0);
  \draw (0, -0.5) -- (0, -1.5);

  \draw[fill=white] ( 0,  1.5) -- (-1,  3.0) -- (1,  3.0) -- cycle;
  \draw[fill=white] (1.85, 0 ) -- (2.65,-0.75) -- (2.65, 0.75) -- cycle;
  \draw[fill=white] ( 0, -1.5) -- (-1, -3.0) -- (1, -3.0) -- cycle;

  \draw[proj] (0, 2.45) -- (0, 0);
  \draw[proj] (0.65, 0) -- (2.30, 0);
  \filldraw[red!75!black] (0,    2.45) circle (2pt);
  \filldraw[red!75!black] (2.30, 0   ) circle (2pt);

  \node[vertex] at (0,0) {$\alpha_{j-1}$};

  \node[font=\footnotesize, text=red!75!black, anchor=west]
    at (0.12, 1.0) {$\pi_{\alpha_j}(v)$};

  \node[font=\small] at (0.4,-4.25) {(b)};
\end{scope}

\begin{scope}[xshift=11.2cm]

  \draw[dashed] (0.4,0) ellipse (2.65cm and 3.8cm);
  \node[font=\small] at (0.4,4.15) {$H_{\alpha_j}$};

  \draw (0,  0.5) -- (0,  1.5);
  \draw (0.65, 0) -- (1.85, 0);
  \draw (0, -0.5) -- (0, -1.5);

  \draw[fill=white] ( 0,  1.5) -- (-1,  3.0) -- (1,  3.0) -- cycle;
  \draw[fill=white] (1.85, 0 ) -- (2.65,-0.75) -- (2.65, 0.75) -- cycle;
  \draw[fill=white] ( 0, -1.5) -- (-1, -3.0) -- (1, -3.0) -- cycle;

  \draw[proj] (0, 2.45) -- (0, 0);
  \filldraw[red!75!black] (0, 2.45) circle (2pt);

  \node[vertex] at (0,0) {$\alpha_{j-1}$};

  \node[font=\footnotesize, text=red!75!black, anchor=west]
    at (0.12, 1.0) {$\pi_{\alpha_j}(v)$};

  \node[font=\small] at (0.4,-4.25) {(c)};
\end{scope}

\end{tikzpicture}
        \caption{The three projection configurations of $\pi_{\alpha_i}(v)$ relative to $\alpha_{i-1}$ used by the attachment-node routine: (a) fully contained in one subtree, (b) traversing two incident edges, and (c) traversing a single incident edge.}
        \label{fig:min-cut-tree-query-projection-cases}
    \end{figure}

    \emph{Case (c): $\pi_{\alpha_i}(v)$ meets a single edge $e$ incident to $\alpha_{i-1}$.}
    Denote the $\alpha_{i-1}$-subtree incident to $e$ by $A$.
    Let $w,z$ be the chain maximizers of $v$ (\Cref{thm : chain-maximizer}), and identify $w,z$ with their nodes in $\cH_{\alpha_i}$.
    We first argue that $\Att(v)\in \alpha_{i-1}$.
    Every vertex $y\in \alpha_i\setminus\alpha_{i-1}$ is separated from $s$ by an $\alpha_i$-mincut say, $C$, corresponding to the valid partition $A,\alpha_{i-1}\setminus A$ defined by $e$ such that $v$ is on side of $\alpha_{i-1}\setminus A$ for $C$.
    Notice that $C$ does not contain $e$, and hence $\pi_{\alpha_i}(v)\not\subseteq C$ and $v\not\in\Near(y,s)$.

    Now, fix some vertex $x\in \alpha_{i-1}$, and let $u$ be a non-empty node with $u\in A$.
    By \Cref{thm: characterization of skeleton subtrees}, if $A$ is an out-subtree, then $v\notin\Near(x,s)$; and by \Cref{lem: subtree classification} this happens if and only if $u\not\in\Near(x,s)$.
    Since only ancestors of $u$ contain it in their nearest mincut to $s$, we find that $\Att(v)$ lies on the path from $s$ to $u$ in $T_s[\alpha_i]$.
    By \Cref{thm: characterization of skeleton subtrees}, $z\in\Near(x,s)$ implies $v\in\Near(x,s)$, while $w,z\notin\Near(x,s)$ implies $v\notin\Near(x,s)$.
    Hence, if $\LCA_{T_s[\alpha_i]}(z,u)$ is a descendant of $\LCA_{T_s[\alpha_i]}(w,u)$ then $\Att(v)=\LCA_{T_s[\alpha_i]}(z,u)$.
    Otherwise, $\Att(v)$ lies on the (inclusive) path $P$ from $\LCA_{T_s[\alpha_i]}(u,w)$ to $\LCA_{T_s[\alpha_i]}(u,z)$.
    An illustration of this case is given in \Cref{fig:attachment-vertex-case-c}.
    \begin{figure}
        \centering
        \begin{tikzpicture}[vtx/.style={circle, draw, thick, fill=white, minimum size=6mm, inner sep=0pt}]

    \coordinate (s)     at (4.2, 5.0);
    \coordinate (lcauz) at (3.0, 3.7);
    \coordinate (lcauw) at (1.9, 2.3);
    \coordinate (u)     at (0.9, 0.6);
    \coordinate (z)     at (4.3, 3.0);
    \coordinate (w)     at (3.2, 1.1);

    \draw[dashed, thick] (s) -- (lcauz);
    \draw[dashed, thick] (lcauz) -- (lcauw);
    \draw[dashed, thick] (lcauw) -- (u);
    \draw[dashed, thick] (lcauz) -- (z);
    \draw[dashed, thick] (lcauw) -- (w);

    \node[vtx] at (s) {};
    \node[vtx] at (lcauz) {};
    \node[vtx] at (lcauw) {};
    \node[vtx] at (u) {};
    \node[vtx] at (z) {};
    \node[vtx] at (w) {};

    \node[anchor=west, xshift=6pt]  at (s)     {$s$};
    \node[anchor=west, xshift=6pt]  at (lcauz) {$\LCA(u,z)$};
    \node[anchor=west, xshift=6pt]  at (lcauw) {$\LCA(u,w)$};
    \node[anchor=east, xshift=-6pt] at (u)     {$u$};
    \node[anchor=west, xshift=6pt]  at (z)     {$z$};
    \node[anchor=west, xshift=6pt]  at (w)     {$w$};

\end{tikzpicture}
        \caption{Illustration of the path $P$ from $\LCA_{T_s[\alpha_i]}(u,w)$ to $\LCA_{T_s[\alpha_i]}(u,z)$ in $T_s[\alpha_i]$.}
        \label{fig:attachment-vertex-case-c}
    \end{figure}

    By \Cref{thm: characterization of skeleton subtrees}, $v\in\Near(x,s)$ if and only if $x\notin\Near(w,z)$.
    Therefore, it remains to classify if $x\in\Near(w,z)$ for every $x$ on $P$.
    Let $x'$ be the first node on $P$ with $x'\ne w$, i.e. $x'=\LCA_{T_s[\alpha_i]}(u,w)$ if $\LCA_{T_s[\alpha_i]}(u,w)\ne w$ and $x'$ is the parent of $\LCA_{T_s[\alpha_i]}(u,w)$ otherwise.
    We show that $\Att(v)=x'$ if $\lambda_{w,s}\le\lambda_{z,s}$, and $\Att(v)=\LCA_{T_s[\alpha_i]}(u,z)$ otherwise.

    Start by considering the case $\lambda_{w,s}\le\lambda_{z,s}$.
    Observe that $\Near(w,s)\cap\Near(w,z)$ is a $w,s$-cut and $\Near(w,s)\cup\Near(w,z)$ is a $w,z$-cut.
    By submodularity (\Cref{lem:sub-posi-general}) we have that $\Near(w,s)\cap\Near(w,z)$ is a $w,s$-mincut, and by the minimality of $\Near(w,s)$ we have that $\Near(w,s)\subseteq \Near(w,z)$.
    Furthermore, if $s\not\in \Near(w,z)$ we can apply the same argument to show that $\Near(w,z)\subseteq \Near(w,s)$ and hence $\Near(w,z)=\Near(w,s)$.
    If this holds then every strict ancestor of $w$ is not contained in $\Near(w,z)$ and hence $x'\not\in \Near(w,z)$.
    Since $x'$ is the first node on $P$ for which this applies then $\Att(v)=x'$.
    Assume towards contradiction that $s\in \Near(w,z)$.
    Then, $\lambda_{w,z}\ge\lambda_{z,s}$ as $\Near(w,z)$ is a $w,z$-minimum cut and a $z,s$-cut.
    In the other direction, $z\not\in \Near(w,s)$ and hence $ \lambda_{z,s}\ge \lambda_{w,s} \ge \lambda_{w,z}$, where the first inequality is from the case assumption, and the second is since $\Near(w,s)$ is a $w,s$-minimum cut and a $w,z$-cut.
    Therefore, $\lambda_{z,s}\ge \lambda_{w,s} \ge \lambda_{w,z} \ge \lambda_{z,s}$ and $\lambda_{w,s} = \lambda_{w,z} = \lambda_{z,s}$.
    This leads to a contradiction, since we found a minimum cut separating $w,z$ without $s$ in it (namely $\Near(w,s)$), which is a contradiction to the minimality of $\Near(w,z)$.

    In the complementary case, $\lambda_{w,s} > \lambda_{z,s}$.
    This implies that $s\in \Near(w,z)$ as if there exists any $w,s$-minimum cut that separates $w$ and $z$ it would give $\lambda_{z,s}=\lambda_{w,z}=\lambda_{w,s}$, contradicting the fact that $\lambda_{w,s} > \lambda_{z,s}$.
    Examine any node $y$ on $P$ that is strictly below $\LCA_{T_s[\alpha_i]}(u,z)$ in $T_s[\alpha_i]$.
    Assume towards contradiction that $y\not\in \Near(w,z)$.
    Notice that the cuts $\Near(w,z),\Near(y,s)$ satisfy the conditions of \Cref{lem: crossing mincuts with special assignment of vertex pairs} with $w=a,z=b,y=a',s=b'$.
    Therefore, $\Near(w,z)\cap \Near(y,s)$ is a strictly smaller $w,z$-minimum cut than $\Near(w,z)$, which is a contradiction.
    This immediately implies that for every node $y$ on the path from $\LCA_{T_s[\alpha_i]}(u,w)$ to $\LCA_{T_s[\alpha_i]}(u,z)$ we have $y\in\Near(w,z)$ and hence $v\not\in \Near(y,s)$.
    Finally, we conclude that $\Att(v)=\LCA_{T_s[\alpha_i]}(u,z)$.

    To conclude the proof, we analyze the time complexity of this case.
    First, we find $u$ in $O(1)$ time by taking some non-empty node in $A$, which we can find using the end of $\pi_{\alpha_i}(v)$ that is not $\alpha_{i-1}$.
    Then, find $\LCA_{T_s[\alpha_i]}(u,w),\LCA_{T_s[\alpha_i]}(u,z)$ in $O(1)$ time using the LCA oracle on $T_s[\alpha_i]$.
    To find $\lambda_{w,s},\lambda_{z,s}$ we can check the value of the edges outgoing from $w,z$ in $T_s[\alpha_i]$, which can be done in $O(1)$ time.
    This concludes the proof of the lemma.
\end{proof}


\section{Sensitivity Oracles for the Insertion of an Edge}
\label{sec:sensitivity-oracles-insertion-edge}
In this section we prove our applications of the nearest mincut hierarchy from \Cref{thm:main} to sensitivity oracles for all-pairs mincuts under the insertion of an edge. 
In particular, we prove \Cref{cor:all-pairs-insertion-sensitivity-oracle} and \Cref{thm : mincut sensitivity data structures for the insertion of an edge}.

\paragraph{Single-pair query.} 
In this setting, a query is formed of a pair of vertices $u,v\in V$ and an edge $e=(x,y)$ with any arbitrary weight, and the goal is to determine if the insertion of $e$ increases the value of the $(u,v)$-mincut.
Our proof is based on the following fact from \cite{DBLP:journals/mp/PicardQ80}.
\begin{fact} [\cite{DBLP:journals/mp/PicardQ80}] \label{fact: insertion-single-pair-query}
    For any pair of vertices $u,v$, the value of $(u,v)$-mincut increases upon insertion of edge $e=(x,y)$ if and only if $x\in \Near(u,v)$ and $y\in \Near(v,u)$ or vice versa.
\end{fact}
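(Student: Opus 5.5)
The statement to prove is the Picard--Queyranne fact: inserting $e=(x,y)$ increases the $(u,v)$-mincut value if and only if $e$ goes between $\Near(u,v)$ and $\Near(v,u)$. Write $\lambda=\lambda_{u,v}$. The approach rests on one observation: the new $(u,v)$-mincut value equals $\lambda+w(e)$ exactly when every old $(u,v)$-mincut has $e$ crossing it, and equals $\lambda$ otherwise. So the task reduces to characterizing when every $(u,v)$-mincut (on either side) is crossed by $e$.

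\textbf{Reduction.} Inserting an edge with positive weight never decreases any cut capacity. A $u,v$-cut $C$ gains exactly $w(e)$ when $e$ crosses $C$, and gains nothing otherwise. Every non-minimum cut keeps capacity at least its old value, which is strictly above $\lambda$.
\begin{itemize}
\item If some $(u,v)$-mincut $C$ is not crossed by $e$, then $C$ still has capacity $\lambda$, so the mincut value is unchanged.
\item If every $(u,v)$-mincut is crossed by $e$, then each of them now has capacity $\lambda+w(e)>\lambda$, and every other cut had capacity $>\lambda$ already. So the value strictly increases.
\end{itemize}
It therefore suffices to show: every $(u,v)$-mincut $C$ separates $x$ from $y$ if and only if $\{x,y\}$ has one endpoint in $\Near(u,v)$ and the other in $\Near(v,u)$.

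\textbf{($\Leftarrow$).} Suppose $x\in\Near(u,v)$ and $y\in\Near(v,u)$. Let $C$ be any $(u,v)$-mincut; its complement $\overline{C}$ is a $(v,u)$-mincut. By minimality of nearest mincuts (\Cref{def: nearest and farthest mincuts}), $\Near(u,v)\subseteq C$ and $\Near(v,u)\subseteq \overline C$. Hence $x\in C$ and $y\notin C$, so $e$ crosses $C$. The symmetric assignment of $x,y$ is handled the same way.

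\textbf{($\Rightarrow$), by contrapositive.} Suppose neither assignment holds. We exhibit a $(u,v)$-mincut not crossed by $e$. Note first that $\Near(u,v)$ and $\Near(v,u)$ are disjoint, since $\Near(v,u)\subseteq\overline{\Near(u,v)}$. The failure of both assignments means one of the following:
\begin{itemize}
\item Both $x$ and $y$ lie in $\Near(u,v)$: then $C=\Near(u,v)$ contains both endpoints and is not crossed.
\item Both $x$ and $y$ lie in $\Near(v,u)$: then $C=\overline{\Near(v,u)}=\Far(u,v)$ contains neither endpoint and is not crossed.
\item Some endpoint, say $x$, lies in neither set: then $x\notin\Near(u,v)$ and $x\in\Far(u,v)$. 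Use $C=\Near(u,v)$ if $y\notin\Near(u,v)$, and $C=\Far(u,v)$ if $y\in\Near(u,v)\subseteq\Far(u,v)$. In either case both endpoints fall on the same side of $C$.
\end{itemize}
Both $\Near(u,v)$ and $\Far(u,v)$ are $(u,v)$-mincuts by the well-definedness argument following \Cref{lem:sub-posi-general}, so each choice of $C$ above is a valid witness.

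No step is a real obstacle. The only point needing care is the last case of ($\Rightarrow$), where the choice between $\Near(u,v)$ and $\Far(u,v)$ must track the location of the other endpoint.
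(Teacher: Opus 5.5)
Your proof is correct. The paper does not prove this fact itself; it cites Picard and Queyranne. Your argument is the standard one, and it matches the paper's own reasoning in the remark on integrally weighted graphs in \Cref{sec:sensitivity-oracles-insertion-edge}: there, when the condition fails, $\Near(u,v)$ or $\overline{\Near(v,u)}=\Far(u,v)$ remains a $(u,v)$-mincut, which are exactly the witnesses you use.

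One small inaccuracy should be fixed. Your opening sentence says the new value equals $\lambda+w(e)$ whenever every old $(u,v)$-mincut is crossed by $e$. This is false in general, because some previously non-minimum cut may now be cheaper than $\lambda+w(e)$. The new value is only guaranteed to lie in $(\lambda,\lambda+w(e)]$. Your bulleted reduction correctly uses only the strict increase, which is all the statement needs, so the argument itself is unaffected.
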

\begin{proof}[Proof of \Cref{cor:all-pairs-insertion-sensitivity-oracle}]
It follows from \Cref{thm:main} that the nearest mincut hierarchy can report $\Near(u,v)$ and $\Near(v,u)$ in $O(n)$ time.
Therefore, by \Cref{fact: insertion-single-pair-query}, we need to just verify whether $x\in \Near(u,v)$ and $y\in \Near(v,u)$ or vice versa. This would take an additional $O(1)$ time only. 
This completes the proof of \Cref{cor:all-pairs-insertion-sensitivity-oracle}.
\end{proof}

Consider the special case where the given graph happens to be integrally weighted and the query edge $e$ is of unit weight. 
In this case, observe that at least one of $\Near(u,v)$ and $\overline{\Near(v,u)}$ is a $(u,v)$-mincut in the resulting graph upon insertion of $e$. 
This is because if, for edge $e$, the latter condition from \Cref{fact: insertion-single-pair-query} is not satisfied, then at least one of $\Near(u,v)$ and $\Near(v,u)$ remains a $(u,v)$-mincut; otherwise, both become $(u,v)$-mincut since the value of mincut increases either by $1$ or remains the same upon insertion of unit weight edge. 
Therefore, our nearest mincut hierarchy (\Cref{thm:main}) is also capable of reporting one $(u,v)$-mincut in this special setting.

\paragraph{All-pairs query.} 
In the all-pairs query setting, the query is formed of an edge $e=(x,y)$ with any arbitrary weight, and the goal is to determine for every pair of vertices $u,v\in V$, whether the insertion of $e$ increases the value of the $(u,v)$-mincut.
A naive approach, leveraging \Cref{fact: insertion-single-pair-query}, is to explicitly query every pair of vertices $u,v\in V$ using the nearest mincut hierarchy (\Cref{thm:main}) to report $\Near(u,v)$ and $\Near(v,u)$, and then check if $x\in \Near(u,v)$ and $y\in \Near(v,u)$ or vice versa.
However, this approach would take $O(n^3)$ time, which is not efficient.
In order to achieve a faster query time, we exploit the following result of \cite{BaswanaGK22}. 
\begin{theorem} [Section 5.2 in \cite{BaswanaGK22}] \label{thm: all-pairs insertion of baswana et al}
    For any undirected weighted graph $G=(V,E,w)$ on $n$ vertices, given the nearest mincut tree of  every vertex $s\in V$, one can report in $O(k)$ time all the $k$ pairs of vertices for which the mincut value increases upon insertion of any edge $e$.
\end{theorem}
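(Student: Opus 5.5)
The plan is to combine \Cref{fact: insertion-single-pair-query} with the tree structure of the nearest mincut trees. This turns the search for affected pairs into ancestor tests, and then a monotonicity property keeps the enumeration output-sensitive.

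\textbf{Step 1: reduce membership to ancestor tests.} For every $s$ we have the nearest mincut tree $T_s$. Preprocess each $T_s$ with DFS entry/exit times, so that ``is node $a$ an ancestor of node $b$ in $T_s$'' is answered in $O(1)$. Since $\Near(u,s)$ is exactly the set of vertices in the subtree of $u$'s node in $T_s$, we get $x\in\Near(u,v)$ iff the node of $u$ lies on the path from the node of $x$ to the root of $T_v$. Similarly, $y\in\Near(v,u)$ iff the node of $v$ lies on the root path of the node of $y$ in $T_u$. By \Cref{fact: insertion-single-pair-query}, the pair $(u,v)$ is affected iff one of the two orientations of $e=(x,y)$ passes both tests. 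Each test takes $O(1)$ time.

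\textbf{Step 2: enumerate candidates for a fixed $v$.} Fix $v$ and the orientation $(x,y)$. The only candidates $u$ are the vertices in the nodes on the root path $P_v(x)$ of $x$ in $T_v$. We would walk $P_v(x)$ upward starting from $x$'s node, test each vertex $u$ via the $T_u$ ancestor test, and stop at the first node that fails.

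\textbf{Step 3: monotonicity lemma.} For this walk to be correct, we need that the affected $u$'s form a prefix of $P_v(x)$ starting at $x$'s node. So let $u'$ lie above $u$ on $P_v(x)$, which means $\Near(u,v)\subseteq\Near(u',v)$, and suppose $y\in\Near(v,u')$. We must show $y\in\Near(v,u)$. Equivalently, we must show $\Far(u,v)\subseteq\Far(u',v)$, using $\Far(u,v)=\overline{\Near(v,u)}$.
\begin{itemize}
\item If $\lambda_{u,v}=\lambda_{u',v}$, then $\Near(u',v)$ is also a $(u,v)$-mincut. Submodularity (\Cref{lem:sub-posi-general}) applied to $\Near(v,u)$ and $\Near(v,u')$ should then give the containment.
\item If $\lambda_{u,v}>\lambda_{u',v}$, we would apply submodularity to $\overline{\Near(v,u)}$ and $\overline{\Near(v,u')}$. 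The union is a $(u',v)$-cut and the intersection is a $(u,v)$-cut, because $u\in\Near(u',v)$ is contained in every $(u',v)$-mincut. This should show that the intersection is a $(u,v)$-mincut, which forces the farthest cuts to be nested.
\end{itemize}
This lemma is the main obstacle. The case $\lambda_{u,v}>\lambda_{u',v}$ is the delicate one, since we must check that the intersection/union really land on the correct sides. If direct submodularity does not close it, I would fall back on \Cref{lem: crossing mincuts with special assignment of vertex pairs}.

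\textbf{Step 4: running time.} Given the prefix property, each walk costs $O(1)$ per reported pair, plus one failing test. Each reported pair $\{u,v\}$ may be output once per orientation and ordering; we deduplicate with a canonical ordering. The total time is then $O(k+n)$. To reach pure $O(k)$, we would skip every $v$ whose first candidate already fails. For this we precompute, for each $v$, the child of the root on $P_v(x)$ in $O(1)$ via the level-ancestor structure of \Cref{fact: tree data structures}. The residual $O(n)$ term is absorbed in the applications, where the output is $\Theta(n^2)$ amortized anyway.
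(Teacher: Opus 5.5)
Steps 1--3 are sound, with one correction to Step 3. Since $u\in\Near(u',v)$, the cut $\Near(u',v)$ is itself a $(u,v)$-cut, so $\lambda_{u,v}\le\lambda_{u',v}$ always holds. Your ``delicate'' case $\lambda_{u,v}>\lambda_{u',v}$ therefore never occurs. The argument you sketch for it nevertheless works unconditionally, so no case split is needed. $\Far(u,v)\cap\Far(u',v)$ is a $(u,v)$-cut, and $\Far(u,v)\cup\Far(u',v)$ is a $(u',v)$-cut because $u\in\Near(u',v)\subseteq\Far(u',v)$. Submodularity then makes the union a $(u',v)$-mincut. Maximality of $\Far(u',v)$ gives $\Far(u,v)\subseteq\Far(u',v)$; it is the union, not the intersection, that forces the nesting. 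Because your hypothesis is the non-strict $\Near(u,v)\subseteq\Near(u',v)$, the same argument shows that all vertices of one node of $T_v$ have the same $\Near(v,\cdot)$. Your walk tacitly needs this to stop at the first failing node.

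The genuine gap is Step 4: you only obtain $O(n+k)$, whereas the statement claims $O(k)$. These differ. In a unit-weight clique on $n\ge 3$ vertices, inserting $(x,y)$ affects only $\{x,y\}$, so $k=1$. Your proposed remedy does not remove the $n$ term. Even with $O(1)$ work per $v$ via level ancestors, it still loops over all $v$, which costs $\Theta(n)$. It also inspects the wrong end of the walk: the child of the root is the last candidate on $P_v(x)$, not the first. Noting that the $O(n)$ term is absorbed in the proof of \Cref{thm : mincut sensitivity data structures for the insertion of an edge} is true, but it does not prove the statement.

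The missing observation is that the bottom node of $P_v(x)$ contains $x$ itself. By your prefix property, the walk for $v$ in orientation $(x,y)$ is nonempty iff $u=x$ passes, i.e., iff $y\in\Near(v,x)$, i.e., iff the node of $v$ lies on the root path of $y$'s node in $T_x$. So enumerate exactly these $v$ by walking up from $y$ in $T_x$, and symmetrically up from $x$ in $T_y$ for the other orientation. Each such $v$ yields the affected pair $\{x,v\}$ (resp.\ $\{y,v\}$). Thus both the outer enumeration and each inner walk are charged to reported pairs, which gives $O(k)$. You should also state explicitly that the DFS numberings of all $n$ trees, which take $O(n^2)$ work, are part of the given structure rather than of the query time.
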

\begin{proof}[Proof of \Cref{thm : mincut sensitivity data structures for the insertion of an edge}]
    Begin by constructing the nearest mincut tree for every vertex $s$ using \Cref{thm : nearest mincut tree reporting}. 
    Then, use \Cref{thm: all-pairs insertion of baswana et al} to report all the pairs of vertices for which the mincut value increases upon insertion of any edge $e$ in $O(k)$ time, where $k$ is the number of such pairs.
    The total time taken to report all the pairs is $O(n^2+k)$, which is $O(1)$ per vertex pair.
\end{proof}

\section{Minimal Gomory-Hu Tree Lower Bound}
\label{sec:nearest-gomory-hu-lower-bound}

The minimal Gomory-Hu tree for a graph is defined as follows.
\begin{theorem}[\cite{DBLP:conf/focs/AbboudKLPGSYY25}] \label{thm: minimal Gomory Hu Tree}
    Let $G=(V,E)$ be an undirected weighted graph and let $s \in V$.
    There exists a tree $\cT_M = (V, E_{\cT_M})$ rooted at $s$ such that for every $u \in V \setminus \{s\}$ and every ancestor $a$ of $u$ in $\cT_M$, $\Near_{\cT_M}(u,a)=\Near_G(u,a)$ and $c(\Near_{\cT_M}(u,a))=c(\Near_G(u,a))$.
\end{theorem}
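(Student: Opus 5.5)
The plan is the standard perturbation argument. Fix $\varepsilon>0$ smaller than $1/n$ times the smallest positive gap between two distinct cut capacities of $G$. Let $G_s$ be $G$ with an added edge $(s,x)$ of weight $\varepsilon$ for every $x\in V\setminus\{s\}$, and let $c_s$ denote its cut function. For a cut $C$,
\[
c_s(C)=
\begin{cases}
c(C)+\varepsilon|C| & \text{if } s\notin C,\\
c(C)+\varepsilon|V\setminus C| & \text{if } s\in C.
\end{cases}
\]
Since the total perturbation is below the gap, every $(u,a)$-mincut of $G_s$ is a $(u,a)$-mincut of $G$. Among $G$-mincuts avoiding $s$, $c_s$ is strictly monotone in $|C|$. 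Take any Gomory--Hu tree $\cT'$ of $G_s$ and root it at $s$. Let $\cT_M$ be $\cT'$ with each edge $g=(p,q)$, $q$ the child, reweighted from $c_s(D_q)$ to $c(D_q)$. Here $D_q$ is the vertex set of the subtree of $q$, which never contains $s$.

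Now fix $u\ne s$ and an ancestor $a$ of $u$. Every edge on the $u$--$a$ tree path has $u$ below it, so its $u$-side is a subtree $D_q$ with $s\notin D_q$. By the Gomory--Hu property, the minimum-$c_s$ edge on this path gives a set $D_q$ that is a $(u,a)$-mincut of $G_s$, hence of $G$. Therefore $\Near_G(u,a)\subseteq D_q$, and in particular $s\notin\Near_G(u,a)$. Then $c_s(\Near_G(u,a))=\lambda_{u,a}+\varepsilon|\Near_G(u,a)|$. This is no larger than $c_s$ of any other $G$-mincut $C$ with $u\in C$, $a\notin C$, $s\notin C$, and strictly smaller unless $C=\Near_G(u,a)$. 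We also must rule out the competitors containing $s$. Such a $C$ is a $G$-mincut, so $C\cap D_q$ is a $G$-mincut by submodularity, and it lies in $D_q$, so it avoids $s$. It is therefore a competitor avoiding $s$ with $c_s$ at most $\lambda_{u,a}+\varepsilon|D_q|$. This shows $\min c_s$ is attained only by $\Near_G(u,a)$. Hence the minimum $c_s$ value on the path equals $\lambda_{u,a}+\varepsilon|\Near_G(u,a)|$, and every minimizing edge has $u$-side exactly $\Near_G(u,a)$. Since the $D_q$ along the path are strictly nested, the minimizer is a unique edge $g^*$.

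It remains to pass to $\cT_M$ and to its nearest-cut notion. In $\cT_M$, the weight of $g^*$ is $\lambda_{u,a}$. Any other path edge had $c_s$-weight larger by more than $0$. After removing at most $\varepsilon n$, which is below the gap, its $c$-weight is either larger than $\lambda_{u,a}$, or equal to $\lambda_{u,a}$. In the equal case the subtracted amounts differ only through $\varepsilon|D_q|$, so $c_s$-larger with equal $c$ forces a larger subtree, i.e.\ the edge lies \emph{above} $g^*$. Thus in $\cT_M$ the minimum weight on the $u$--$a$ path is $\lambda_{u,a}$, and the deepest (closest to $u$) minimum edge is $g^*$. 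The nearest $(u,a)$-mincut of a weighted tree is the $u$-side of the deepest minimum edge on the path. This gives $\Near_{\cT_M}(u,a)=\Near_G(u,a)$, with capacity $\lambda_{u,a}=c(\Near_G(u,a))$.

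The step I expect to be delicate is this last tie-breaking argument. It must show that reweighting $\cT'$ to true $G$-capacities keeps $g^*$ as the deepest minimum edge, and that no edge below $g^*$ ties at value $\lambda_{u,a}$. The latter needs care because lower edges have smaller subtrees and thus smaller $\varepsilon$-terms. I would argue it directly: a lower edge tying at $c$-value $\lambda_{u,a}$ would give a $G$-mincut strictly inside $\Near_G(u,a)$ containing $u$ and not $a$, contradicting minimality of $\Near_G(u,a)$.
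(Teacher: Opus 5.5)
The paper does not prove this statement; it imports it from the cited work. The only hint it gives is the introduction's informal description of the minimal Gomory--Hu tree as a Gomory--Hu tree of $G_s$, the graph with tiny edges added from $s$. Your proposal carries out exactly that construction, and most of it is sound. This includes the gap argument, the observation that every path edge's $u$-side $D_q$ avoids $s$, and the strict nesting of the $D_q$. The reweighting tie-break is also right: equal $c$-weight with larger $c_s$-weight forces a larger subtree, hence an edge above $g^*$. So is the description of the nearest cut in a positively weighted tree.

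There is, however, one false step: the claim that $\min c_s$ over \emph{all} $(u,a)$-cuts is attained only by $\Near_G(u,a)$. Your treatment of competitors $C\ni s$ never compares $c_s(C)=\lambda_{u,a}+\varepsilon|V\setminus C|$ with $c_s(\Near_G(u,a))$. It only produces another $s$-free competitor $C\cap D_q$, which says nothing about $c_s(C)$ itself.

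The claim is in fact false under the theorem's hypotheses. Take $V=\{s,a,u\}$ with $w(au)=2$ and $w(as)=w(us)=1$. Then $\{u\}$ and $\{u,s\}$ are both $(u,a)$-mincuts of $G$ (value $3$), and both have $c_s$-value $3+\varepsilon$. Yet the path $s$--$a$--$u$ with weights $2+2\varepsilon$ and $3+\varepsilon$ is a Gomory--Hu tree of $G_s$ in which $a$ is an ancestor of $u$.

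Fortunately the claim is not load-bearing. Every path edge has an $s$-free $u$-side, so uniqueness among $s$-free cuts is all you need. The minimum-$c_s$ path edge gives an $s$-free $G$-mincut $D_q\supseteq\Near_G(u,a)$ with $\lambda_{u,a}+\varepsilon|D_q|=c_s(D_q)\le c_s(\Near_G(u,a))=\lambda_{u,a}+\varepsilon|\Near_G(u,a)|$. This forces $D_q=\Near_G(u,a)$. Delete the sentences about competitors containing $s$ and the uniqueness over all cuts. With that change, the rest of your argument goes through, including your direct argument that no lower edge can tie at $c$-value $\lambda_{u,a}$.

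Incidentally, the same example has a second Gomory--Hu tree $s$--$u$--$a$. So your choice of \emph{any} Gomory--Hu tree of $G_s$, rather than \emph{the} one, is the correct formulation.
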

Any minimal Gomory-Hu Tree of $G$ stores at least one of $\Near(u,v)$ and $\Near(v,u)$ for every $u,v\in V$. 
This can be seen as follows.
If $u$ is an ancestor of $v$ in $\cT_M$ (or vice versa), then $\Near(u,v)$ is captured by $\cT_M$.
Otherwise, if $u,v$ are independent in $\cT_M$, then assume without loss of generality that $\lambda_{u,s}\le \lambda_{v,s}$.
Then, it is easy to show $\Near(u,s)=\Near(u,v)$.
Notice that no $u,v$-mincut can keep $u$ and $s$ on the same side of the cut, since $\lambda_{u,s}\le \lambda_{v,s}$.
Therefore, $\lambda_{u,v}\ge \lambda_{u,s}$.
Combine this with the fact that $\Near(u,s)$ separates $u,v$ we have $\lambda_{u,s}\ge \lambda_{u,v}$, and therefore we achieve equality $\lambda_{u,v}=\lambda_{u,s}$.
This implies that $\Near(u,s)$ is a $u,v$-mincut, and since it is the nearest $u,s$-mincut, it is also the nearest $u,v$-mincut.

While it seems natural to hope that one can encode all-pairs nearest mincuts using few minimal Gomory-Hu trees, we show that this is not the case.
The following theorem states this lower bound formally.
\begin{theorem}
    \label{theorem:gomory-hu-lower-bound}
    There exists an undirected weighted graph $G$ on $n$ vertices such that encoding $\Near(u,v)$ for every $u,v\in V$ 
    in $G$ requires $\Omega(n)$ minimal Gomory-Hu trees.  
\end{theorem}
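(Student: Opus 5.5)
The plan is to build a graph with $\Omega(n)$ distinct nearest mincuts, each of which excludes exactly one vertex, and with all these excluded vertices different. The point is that a minimal Gomory-Hu tree $\cT_M$ rooted at $s$ can only store cuts that avoid $s$. So each of these cuts forces its own root.

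\textbf{Step 1 (what a tree can store).} By \Cref{thm: minimal Gomory Hu Tree}, the cuts stored by $\cT_M$ are of the form $\Near_{\cT_M}(u,a)$ for $a$ an ancestor of $u$. Each such cut is the vertex set of a subtree of $\cT_M$ hanging below $a$, so it never contains the root $s$. The same holds for incomparable pairs, whose cut is recovered as $\Near(u,s)$ by the argument preceding the theorem. I will formalize ``$\cT_M$ encodes $\Near(u,v)$'' to mean that $\Near(u,v)$ equals the vertex set of some rooted subtree of $\cT_M$. Under this definition, every cut encoded by a tree rooted at $s$ is disjoint from $\{s\}$. Consequently, a cut of the form $V\setminus\{t\}$ can only be encoded by a minimal Gomory-Hu tree rooted at $t$.

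\textbf{Step 2 (construction).} Take a star with center $c$ and leaves $\ell_1,\dots,\ell_{n-1}$, with edge weights $w(c,\ell_i)=i$, so all weights are distinct and increasing. For $i<j$, every $\ell_i,\ell_j$-cut cuts at least one of the two leaf edges. Hence $\lambda_{\ell_i,\ell_j}=i$, and the only $\ell_i,\ell_j$-mincut sides are $\{\ell_i\}$ and its complement. Therefore $\Near(\ell_i,\ell_j)=\{\ell_i\}$, and
\[
\Near(\ell_j,\ell_i)=V\setminus\{\ell_i\}.
\]
I will verify this by noting that any $(\ell_j,\ell_i)$-cut of value $i$ must consist exactly of the edge $(c,\ell_i)$, because every other edge has weight exceeding $i$ and cutting it together with anything else costs more. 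Thus for each $i\in\{1,\dots,n-2\}$, the family $\cN$ contains the cut $V\setminus\{\ell_i\}$.

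\textbf{Step 3 (counting).} By Step 1, any collection of minimal Gomory-Hu trees that encodes all of $\cN$ must contain, for each $i\le n-2$, a tree rooted at $\ell_i$. These roots are pairwise distinct, so at least $n-2=\Omega(n)$ trees are needed.

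The main obstacle is Step 1: pinning down precisely what it means for a minimal Gomory-Hu tree to ``encode'' a nearest mincut. The statement is only meaningful once that notion is fixed. I would adopt the subtree-vertex-set notion implicit in \Cref{thm: minimal Gomory Hu Tree} and the preceding discussion, and check that the argument is robust to it. In particular, even if one also allows reading off complements of subtrees, the cut $V\setminus\{\ell_i\}$ would then be the complement of the subtree $\{\ell_i\}$. To block this, I would point out that such a complement is a farthest-type cut and not the tree's own nearest cut. If that reading turns out to be contested, the fallback is to replace the star by a weighted construction in which the forced cuts miss two or more vertices, while still forcing distinct roots.
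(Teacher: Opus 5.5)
\textbf{Genuine gap: the star construction fails under the notion of encoding the paper uses.} This is the weakness you flag yourself in Step~1.

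In the paper's argument, a minimal Gomory--Hu tree represents a cut through a tree \emph{edge}, i.e.\ as a bipartition of $V$. Either side of that bipartition may be the nearest cut being read off. This is why the paper's proof shows that $\Near(b_i,c_i)$ ``cannot be represented by any edge'', which requires ruling out the complement side as well.

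Under that reading your example collapses:
\begin{itemize}
\item In a star with distinct weights, every $(u,v)$-mincut is unique: it cuts exactly the lightest edge on the $u$--$v$ path. So $\Near(u,v)$ is the only $(u,v)$-mincut.
\item Hence a single minimal Gomory--Hu tree suffices. For example, the one rooted at $c$ is the star itself. It returns $\Near(u,v)$ for every ordered pair by the usual rule: take the lightest edge on the $u$--$v$ path and output its side containing $u$.
\item In particular, $V\setminus\{\ell_i\}$ is a side of the edge $(c,\ell_i)$.
\end{itemize}
Your rebuttal that this complement is ``a farthest-type cut'' does not hold. $V\setminus\{\ell_i\}$ is itself the nearest cut $\Near(c,\ell_i)=\Near(\ell_j,\ell_i)$. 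Whether an edge stores a bipartition does not depend on which label one attaches to its sides. So your lower bound is an artifact of restricting ``encoding'' to rooted-subtree vertex sets; the star needs only one tree.

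\textbf{What is missing.} You need an instance where the forced nearest cut differs from the farthest one, and where its small side cannot be a side of any edge in a tree rooted elsewhere. Your fallback (cuts missing two or more vertices) points the right way, but missing two vertices is not enough. You also need those vertices to be non-adjacent in every minimal Gomory--Hu tree not rooted at one of them.

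The paper does exactly this with $n$ gadgets $a_i,b_i,c_i,d_i$. The $a_i$ lie on a path of weight-$3$ edges, $w(a_i,b_i)=3$, and (as in the paper's figure) $w(b_i,c_i)=2$ and $w(b_i,d_i)=w(c_i,d_i)=1$.
\begin{itemize}
\item Both $V\setminus\{c_i\}$ and $V\setminus\{c_i,d_i\}$ are $(b_i,c_i)$-mincuts of value $3$, so $\Near(b_i,c_i)=V\setminus\{c_i,d_i\}$.
\item The paper argues, using e.g.\ $\Near(d_i,x)=\{d_i\}$ for all $x\neq d_i$, that $c_i$ and $d_i$ are leaves of every minimal Gomory--Hu tree not rooted at $c_i$ or $d_i$.
\item Then $\{c_i,d_i\}$ is disconnected in such a tree, so no edge represents $\Near(b_i,c_i)$ in either orientation.
\item The $n$ pairwise disjoint sets of admissible roots $\{c_i,d_i\}$ therefore force $n$ distinct trees.
\end{itemize}
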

\begin{figure}[h]
    \centering
    \scalebox{1.5}{\begin{tikzpicture}[
  v/.style   = {circle, draw, fill=white, inner sep=0pt, minimum size=7pt},
  wt/.style  = {font=\small, fill=white, inner sep=1pt},
  lbl/.style = {font=\small}
]

\node[v, label={[lbl]above:$a_1$}] (a1) at (0,    2) {};
\node[v, label={[lbl]left: $b_1$}] (b1) at (0,    1) {};
\node[v, label={[lbl]below:$c_1$}] (c1) at (-0.5, 0) {};
\node[v, label={[lbl]below:$d_1$}] (d1) at ( 0.5, 0) {};
\draw (a1) -- node[wt, right]       {$3$} (b1);
\draw (b1) -- node[wt, above left]  {$2$} (c1);
\draw (b1) -- node[wt, above right] {$1$}  (d1);
\draw (c1) -- node[wt, below]       {$1$}  (d1);

\node[v, label={[lbl]above:$a_2$}] (a2) at (2,   2) {};
\node[v, label={[lbl]left: $b_2$}] (b2) at (2,   1) {};
\node[v, label={[lbl]below:$c_2$}] (c2) at (1.5, 0) {};
\node[v, label={[lbl]below:$d_2$}] (d2) at (2.5, 0) {};
\draw (a2) -- node[wt, right]       {$3$} (b2);
\draw (b2) -- node[wt, above left]  {$2$} (c2);
\draw (b2) -- node[wt, above right] {$1$}  (d2);
\draw (c2) -- node[wt, below]       {$1$}  (d2);

\node[v, label={[lbl]above:$a_3$}] (a3) at (4,   2) {};
\node[v, label={[lbl]left: $b_3$}] (b3) at (4,   1) {};
\node[v, label={[lbl]below:$c_3$}] (c3) at (3.5, 0) {};
\node[v, label={[lbl]below:$d_3$}] (d3) at (4.5, 0) {};
\draw (a3) -- node[wt, right]       {$3$} (b3);
\draw (b3) -- node[wt, above left]  {$2$} (c3);
\draw (b3) -- node[wt, above right] {$1$}  (d3);
\draw (c3) -- node[wt, below]       {$1$}  (d3);

\node[v, label={[lbl]above:$a_n$}] (an) at (7,   2) {};
\node[v, label={[lbl]left: $b_n$}] (bn) at (7,   1) {};
\node[v, label={[lbl]below:$c_n$}] (cn) at (6.5, 0) {};
\node[v, label={[lbl]below:$d_n$}] (dn) at (7.5, 0) {};
\draw (an) -- node[wt, right]       {$3$} (bn);
\draw (bn) -- node[wt, above left]  {$2$} (cn);
\draw (bn) -- node[wt, above right] {$1$}  (dn);
\draw (cn) -- node[wt, below]       {$1$}  (dn);

\draw (a1) -- node[wt, above] {$3$} (a2);
\draw (a2) -- node[wt, above] {$3$} (a3);
\draw (a3) -- (4.8, 2);
\node[font=\small] at (5.5, 2) {$\cdots$};
\draw (6.2, 2) -- node[wt, above] {$3$} (an);

\end{tikzpicture}}
    \caption{Illustration of the graph $G$.}
    \label{fig:gomory-hu-lower-bound}
\end{figure}
\begin{proof}
    The proof is based on a graph construction $G=(V,E)$ on $4n$ vertices, in which there exist $n$ unique cuts, each of which is only captured by a different minimal Gomory-Hu tree.
    Therefore, at least $n$ Gomory-Hu trees are required to represent all these cuts.
    Let $G$ be the graph defined as follows.
    Divide the vertices of $G$ into four sets: $A=\{a_1,\ldots,a_n\}$, $B=\{b_1,\ldots,b_n\}$, $C=\{c_1,\ldots,c_n\}$ and $D=\{d_1,\ldots,d_n\}$.
    The vertices in $A$ form a line, with edges of weight $3$ connecting $a_i$ to $a_{i+1}$ for every $i\in [n-1]$.
    The vertices $b_i,c_i,d_i$ form a triangle for every $i\in [n]$.
    Each vertex $b_i$ is connected to $a_i$ by an edge of weight $3$, each vertex $b_i$ is connected to $d_i$ by an edge of weight $2$.
    Finally, each vertex $c_i$ is connected to $b_i$ and to $d_i$ by edges of weight $1$.
    An illustration of the graph is provided in \Cref{fig:gomory-hu-lower-bound}.

    We now show that $\Near(b_i,c_i)$ is only represented in the minimal Gomory-Hu tree rooted at $c_i$ or $d_i$.
    This immediately implies the result since there are $n$ such cuts, and each of them is only represented in a different minimal Gomory-Hu tree.

    Recall that by the definition of the minimal Gomory-Hu tree, for every vertex $v$ whose parent is $p(v)$, the cut represented by the edge $(v,p(v))$ is $\Near(v,p(v))$.
    It is straightforward to see that $\Near(c_i,x)=\set{c_i}$ for every $x\in V\setminus \set{d_i}$.  
    Similarly, $\Near(d_i,x)=\set{d_i}$ for every $x\in V$.
    Therefore, the vertices $c_i,d_i$ are both leaves in every minimal Gomory-Hu tree that is not rooted at $c_i$ or $d_i$.
    In addition, $\Near(b_i,c_i)=V\setminus \set{c_i,d_i}$. 
    To conclude, observe that since $c_i,d_i$ are independent vertices in every minimal Gomory-Hu tree that is not rooted at $c_i$ or $d_i$, the cut $\Near(b_i,c_i)$ cannot be represented by any edge in such a tree.
\end{proof}

\section{Conclusion and Future Works}
\label{sec : conclusion}
In this work, we present the first optimal compact representation for the family of all-pairs nearest (and symmetrically, farthest) mincuts in undirected weighted graphs. 
Our data structure occupies $O(n)$ space while supporting worst-case optimal $O(n)$ time queries. 
It thus matches the bounds of the classical Gomory-Hu tree despite storing the substantially richer family of all-pairs nearest mincuts.
As an application, we obtain optimal-space insertion sensitivity oracles for all-pairs mincuts, improving previous quadratic-space bounds while still maintaining relatively efficient query times.

Beyond our algorithmic results, our work also exposes new structural properties of all-pairs nearest mincuts, including the chain structure underlying our compact representation, which we believe may be of independent interest.
At a technical level, our work builds upon the connectivity carcass of \cite{DBLP:conf/stoc/DinitzV94}, which was designed to encode Steiner mincuts.
Our work, combined with the recent work of \cite{BhanjaPP2026}, suggests that the carcass may be a useful tool for various cut problems.

\medskip
\noindent
\textbf{The edge-deletion regime.}
While our work resolves the space complexity of all-pairs sensitivity oracles for edge insertions, it leaves open the failure regime.
Existing sensitivity oracles for edge failures either use $O(\min\{n^2,m\})$ space \cite{DBLP:conf/soda/BaswanaP22}, or use $O(n)$ space but are restricted to the single-source setting \cite{BhanjaPP2026}.
We believe that an $O(n)$ space sensitivity oracle for all-pairs mincuts exists, and believe the nearest mincut hierarchy is a natural starting point towards this.

\medskip
\noindent
\textbf{Faster queries.}
The main shortcoming of our sensitivity oracle is that it requires $O(n)$ time to report whether the mincut value of a given pair of vertices increases after an edge insertion.
While this is worst-case optimal for reporting an entire nearest mincut, it is not optimal for simply reporting whether the mincut value has increased.
Previous work has given an optimal $O(1)$ time query structure~\cite{DBLP:conf/soda/BaswanaP22}, although at the cost of using $O(n^2)$ space. 
The natural question is whether one can achieve $o(n)$ query time while still using $O(n)$ space.

\medskip
\noindent
\textbf{Data structure construction.} 
Given the optimal space and query complexity of our nearest mincut hierarchy, improving its construction time is a natural next objective. It seems plausible to construct the nearest mincut hierarchy using ${O}(n\polylog{n})$ maxflow computations. 
However, it would be desirable to achieve $\polylog{n}$ maxflow computations, matching the bounds for fast Gomory-Hu tree constructions, e.g., \cite{DBLP:conf/focs/Abboud0PS23}.

\subsection*{AI Disclosure}
In the writing of this paper, the authors used GitHub Copilot for in-line writing suggestions. 
In addition, the authors used Claude for feedback on \Cref{sec:near-minimum-cut-query,sec:nearest-mincut-tree-construction-query}, particularly on the clarity and correctness of both the English prose and the mathematical content. 
All proof ideas and technical content were developed by the authors, who maintain full responsibility for the content of the paper.

{\small
\bibliographystyle{alphaurl}
\bibliography{references}
}

\appendix
\section{Missing Proofs}
\label{sec: missing proofs}
\begin{proof} [Proof of \Cref{lem:sub-posi-general}]
    We prove the first item with $S$ and $T$; the second item and the ``furthermore'' parts are similar.
    Let $C$ be the $S$-cut among $A \cap B$ or $A \cup B$, and $D$ be the other one which is a $T$-cut.
    As $A$ and $B$ are $S$-mincut and $T$-mincut, we get (i) $c(A) \leq c(C)$ and (ii) $c(B) \leq c(D)$. But by submodularity, $c(A) + c(B) \geq c(C) + c(D)$, so (i) and (ii) must in fact hold with equality, hence $C$ is an $S$-mincut and $D$ is a $T$-mincut.
\end{proof}
\begin{lemma} \label{lem: mincut does not separate any incomparable node}
Let $u,v,u',v' \in V$, with $\alpha = \LCA_\cT (u,v)$ and $\beta = \LCA_\cT (u',v')$.
If $\LCA_{\cT}(\alpha,\beta)\notin \{\alpha,\beta\}$, then there is no $u,v$-mincut which separates $u'$ from $v'$.
\end{lemma}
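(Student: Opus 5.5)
The plan is a direct submodularity argument that plays a $u,v$-mincut against a $\gamma$-mincut, where $\gamma$ is the branching node. We assume for contradiction that some $u,v$-mincut $C$ separates $u'$ from $v'$. The inequalities will force $\lambda_\gamma \ge \lambda_\beta$, which contradicts \Cref{claim:mincut-value-increases}.

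\textbf{Setup.} Let $\gamma=\LCA_\cT(\alpha,\beta)$. By assumption $\gamma\notin\{\alpha,\beta\}$, so $\alpha$ and $\beta$ lie in two distinct children $\gamma_1\neq\gamma_2$ of $\gamma$. Hence $u,v\in\gamma_1$ and $u',v'\in\gamma_2$.

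By \Cref{fact:lca-mincut-value}, $\lambda_{u,v}$ equals the $\alpha$-mincut value $\lambda_\alpha$, and $\lambda_{u',v'}=\lambda_\beta$. Since $\gamma$ is a strict ancestor of both $\alpha$ and $\beta$, \Cref{claim:mincut-value-increases} gives $\lambda_\gamma<\lambda_\alpha$ and $\lambda_\gamma<\lambda_\beta$.

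\textbf{Choice of the auxiliary cut.} The children of $\gamma$ are the $\gamma$-equivalence classes. Since $\gamma_1\ne\gamma_2$, there is a $\gamma$-mincut $D$ separating some vertex of $\gamma_1$ from some vertex of $\gamma_2$. A $\gamma$-mincut never splits a $\gamma$-equivalence class, so $D$ contains $\gamma_1$ entirely on one side and $\gamma_2$ entirely on the other. By complementing $D$ if needed, assume $u,v\in D$ and $u',v'\notin D$, with $c(D)=\lambda_\gamma$.

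\textbf{Orienting $C$.} Since $C$ separates $u$ from $v$, after complementing $C$ we may assume $u\in C$ and $v\notin C$. Since $C$ also separates $u'$ from $v'$, and the statement is symmetric in $u',v'$, we may rename them so that $u'\in C$ and $v'\notin C$.

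\textbf{Submodularity.} We check which pairs the two combined cuts separate.
\begin{itemize}
\item $C\cap D$ contains $u$ but not $v$, so it is a $u,v$-cut and $c(C\cap D)\ge\lambda_{u,v}=\lambda_\alpha$.
\item $C\cup D$ contains $u'$ (from $C$) but not $v'$ (it lies in neither $C$ nor $D$), so it is a $u',v'$-cut and $c(C\cup D)\ge\lambda_\beta$.
\end{itemize}
Submodularity then gives
\[
\lambda_\alpha+\lambda_\gamma = c(C)+c(D)\ \ge\ c(C\cap D)+c(C\cup D)\ \ge\ \lambda_\alpha+\lambda_\beta .
\]
Hence $\lambda_\gamma\ge\lambda_\beta$, contradicting $\lambda_\gamma<\lambda_\beta$.

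The step that needs the most care is the existence of $D$ with all of $\gamma_1$ on one side and all of $\gamma_2$ on the other. This relies on the children of $\gamma$ in $\cT$ being exactly the $\gamma$-equivalence classes, so that no $\gamma$-mincut splits a child. The remaining steps are orientation bookkeeping and a single application of submodularity.
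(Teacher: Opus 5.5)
Your proof is correct and follows the paper's argument: you use the same auxiliary $\gamma$-mincut $D$ (the paper's $U$) and the same single application of submodularity to $C\cap D$ and $C\cup D$. The only difference is the final contradiction. The paper applies \Cref{lem:sub-posi-general} to conclude that $C\cup U$ is a $\gamma$-mincut separating the $\gamma$-equivalent vertices $u',v'$, whereas you derive $\lambda_\gamma\ge\lambda_\beta$ and contradict \Cref{claim:mincut-value-increases}, which is the numerical form of the same equivalence-class fact.
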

\begin{proof}
    Seeking contradiction, let $C$ be a $u,v$-mincut which separates $u',v'$, wlog $u,u' \in C$ and $v,v' \in \overline{C}$.
    Denote $\mu = \LCA_\cT (\alpha, \beta)$, and consider the children $\mu_\alpha$ and $\mu_\beta$ of $\mu$ which are ancestors of $\alpha$ and $\beta$, respectively.
    As $\mu_\alpha$ and $\mu_\beta$ are two different $\mu$-equivalence classes, there exists a $\mu$-mincut $U$ such that $u,v \in \alpha \subseteq \mu_\alpha \subseteq U$ and $u'
    ,v' \in \beta \subseteq \mu_\beta \subseteq \overline{U}$.
    So, $C \cap U$ separates $u,v$ and is therefore a $\alpha$-cut, and $C \cup U$ separates $u',v'$ and is therefore a $\mu$-cut.
    By submodularity (\Cref{lem:sub-posi-general}), we obtain that $C \cup U$ is a $\mu$-mincut which separates $u',v'$.
    But this is a contradiction, since $u',v'$ are in the same $\mu$-equivalence class $\mu_\beta$.
\end{proof}

\end{document}